\theoremstyle{plain} 
\newtheorem{theorem}{Theorem}
\newtheorem{lemma}[theorem]{Lemma} 
\newtheorem{corollary}[theorem]{Corollary}
\newtheorem{claim}[theorem]{Claim}
\newtheorem{fact}[theorem]{Fact}
\newtheorem{remark}{Remark} 
\theoremstyle{definition} 
\newtheorem{definition}[theorem]{Definition}
\numberwithin{theorem}{section}
\numberwithin{equation}{section}
\newcommand{\Z}{\mathbb{Z}}
\newcommand{\cv}{{\mathbf{c}}}
\newcommand{\Id}{{\mathbf{I}}}
\newcommand{\Am}{{\mathbf{A}}}
\newcommand{\Hm}{{\mathbf{H}}}
\newcommand{\Om}{{\mathbf{O}}}
\newcommand{\Pm}{{\mathbf{P}}}
\newcommand{\Um}{{\mathbf{U}}}
\newcommand{\Vm}{{\mathbf{V}}}
\newcommand{\Tr}{{\sf Tr}}
\newcommand{\As}{{\mathcal{A}}}
\newcommand{\Bs}{{\mathcal{B}}}
\newcommand{\Cs}{{\mathcal{C}}}
\newcommand{\Ds}{{\mathcal{D}}}
\newcommand{\Fs}{{\mathcal{F}}}
\newcommand{\Gs}{{\mathcal{G}}}
\newcommand{\Hs}{{\mathcal{H}}}
\newcommand{\Ms}{{\mathcal{M}}}
\newcommand{\Os}{{\mathcal{O}}}
\newcommand{\Ps}{{\mathcal{P}}}
\newcommand{\Qs}{{\mathcal{Q}}}
\newcommand{\Rs}{{\mathcal{R}}}
\newcommand{\Ss}{{\mathcal{S}}}
\newcommand{\Ts}{{\mathcal{T}}}
\newcommand{\Us}{{\mathcal{U}}}
\newcommand{\Xs}{{\mathcal{X}}}
\newcommand{\poly}{{\sf poly}}
\newcommand{\polylog}{{\sf polylog}}
\newcommand{\setupprotocol}[2]{%
	\expandafter\newcommand\csname protocol#1\endcsname{{\sf \Pi_{#2}}}%
	\expandafter\newcommand\csname gen#1\endcsname{{\sf Gen_{#2}}}%
	\expandafter\newcommand\csname genlossy#1\endcsname{{\sf GenLossy_{#2}}}%
	\expandafter\newcommand\csname setup#1\endcsname{{\sf Setup_{#2}}}%
	\expandafter\newcommand\csname enc#1\endcsname{{\sf Enc_{#2}}}%
	\expandafter\newcommand\csname dec#1\endcsname{{\sf Dec_{#2}}}%
	\expandafter\newcommand\csname extract#1\endcsname{{\sf Extract_{#2}}}%
	\expandafter\newcommand\csname derive#1\endcsname{{\sf Derive_{#2}}}%
	\expandafter\newcommand\csname embed#1\endcsname{{\sf Embed_{#2}}}%
	\expandafter\newcommand\csname rerand#1\endcsname{{\sf ReRand_{#2}}}%
	\expandafter\newcommand\csname trace#1\endcsname{{\sf Trace_{#2}}}%
	\expandafter\newcommand\csname goodtrace#1\endcsname{{\sf GoodTr_{#2}}}%
	\expandafter\newcommand\csname badtrace#1\endcsname{{\sf BadTr_{#2}}}%
	\expandafter\newcommand\csname gooddecoder#1\endcsname{{\sf GoodDec_{#2}}}%
	\expandafter\newcommand\csname goodf#1\endcsname{{\sf Goodfunc_{#2}}}%
	\expandafter\newcommand\csname badextract#1\endcsname{{\sf BadExtr_{#2}}}%
	\expandafter\newcommand\csname adv#1\endcsname{{\As_{#2}}}%
	\expandafter\newcommand\csname sample#1\endcsname{{\sf Sample_{#2}}}%
	\expandafter\newcommand\csname diff#1\endcsname{{\sf Diff_{#2}}}%
	\expandafter\newcommand\csname identify#1\endcsname{{\sf Indentify_{#2}}}%
	\expandafter\newcommand\csname findtags#1\endcsname{{\sf FindTags_{#2}}}%
	\expandafter\newcommand\csname confirmtags#1\endcsname{{\sf ConfirmTags_{#2}}}%
	\expandafter\newcommand\csname eval#1\endcsname{{\sf Eval_{#2}}}%
	\expandafter\newcommand\csname sign#1\endcsname{{\sf Sign_{#2}}}%
	\expandafter\newcommand\csname ver#1\endcsname{{\sf Ver_{#2}}}%
	\expandafter\newcommand\csname prf#1\endcsname{{\sf F_{#2}}}%
	\expandafter\newcommand\csname Sim#1\endcsname{{\sf Sim_{#2}}}%

	\expandafter\newcommand\csname msk#1\endcsname{{\sf msk_{#2}}}%
	\expandafter\newcommand\csname sk#1\endcsname{{\sf sk_{#2}}}%
	\expandafter\newcommand\csname pk#1\endcsname{{\sf pk_{#2}}}%
}
\newcommand{\out}{{\sf out}}
\newcommand{\SWAP}{{{\sf SWAP}}}
\newcommand{\Sym}{{{\sf Sym}}}
\newcommand{\Phase}{{{\sf Ph}}}
\newcommand{\QFT}{{{\sf QFT}}}
\def\E{\mathop{\mathbb E}}
\newcommand{\ignore}[1]{}
\begin{document}

\title{The Space-Time Cost of Purifying Quantum Computations}
\author{Mark Zhandry\\NTT Research}
\date{}

\maketitle

\begin{abstract}General quantum computation consists of unitary operations and also measurements. It is well known that intermediate quantum measurements can be deferred to the end of the computation, resulting in an equivalent purely unitary computation. While time efficient, this transformation blows up the space to linear in the running time, which could be super-polynomial for low-space algorithms. Fefferman and Remscrim (STOC'21) and Girish, Raz and Zhan (ICALP'21) show different transformations which are space efficient, but blow up the running time by a factor that is exponential in the space. This leaves the case of algorithms with small-but-super-logarithmic space as incurring a large blowup in either time or space complexity. We show that such a blowup is likely inherent, demonstrating that any ``black-box'' transformation which removes intermediate measurements must significantly blow up either space or time.
\end{abstract}
	
\section{Introduction}\label{sec:intro}

Measurements play a fundamental role in quantum computation. After all, it is through measurements that useful classical information is extracted from the hidden world of a quantum state. That said, formal treatments typically regard a quantum computation as being \emph{unitary}, with any measurement only occurring at the very end of the computation. For example, many algorithmic techniques such as amplitude amplification~\cite{ISTCS:BraHoy97,Grover97,BHMT02}, numerous query complexity lower bounds techniques~\cite{BBBV97,FOCS:BBCMW98,STOC:Ambainis00}, and cryptographic proofs involving rewinding~\cite{Watrous09,EC:Unruh12,EC:KSSSS20,FOCS:CMSZ21,FOCS:LomMaSpo22}, all assume unitary algorithms whose states are pure. On the other hand, when designing quantum algorithms it is often convenient to measure and/or discard quantum states in the middle of a computation. Unitary computations may also be desireable from a practical perspective, as implementing measurements can be challenging, and may have energy-use implications (see Section~\ref{sec:reverse} below).
Fortunately, assuming unitary computations can be justified by appealing to the ``Principle of Delayed Measurement,'' which states that measurements in general quantum computations can always be delayed until the end of the computation with minimal time-complexity overhead. This is accomplished by, instead of measuring the qubit, writing it into an external register that is never touched again in the computation.

However, it has long been recognized that delaying measurements naively gives a space complexity that is potentially as large as the time complexity, even if the original computation used very little space. Thus, delaying measurements potentially incurs a huge space-complexity overhead. Eliminating measurements in a space-efficient way has therefore become a major foundational question in quantum computation. This question may also be of practical importance, as quantum storage will plausibly be a limiting resource in future quantum computers.

Fefferman and Remscrim~\cite{STOC:FefRem21} and Girish, Raz and Zhan~\cite{ICALP:GirRazZha21} give space-optimal answers to this problem, showing that intermediate measurements can be eliminated to yield a unitary computation with only a linear blowup in space complexity. However, these results incur a potentially \emph{exponential} blowup in time complexity: the new running time is $\poly(T,2^S)$ where $T,S$ are the original time and space complexities. This leaves the following important open problem:

\begin{center}{\it Can intermediate measurements be eliminated in a\\ simultaneously space- \emph{and} time-efficient manner?}\end{center}

\noindent {\bf Our work.} Our main result is to show a \emph{black-box barrier} to achieving such a result.

\subsection{What is a Quantum Measurement, Anyway?}

Before proceeding, we must mention the work of Girish and Raz~\cite{ITCS:GirRaz22}, which eliminates  intermediate ``measurements'' from any space $S$, time $T$ quantum algorithm, resulting in a space $O(S\log T)$, time $T\times\poly(S)$ algorithm without measurements. This seemingly resolves the central question above positively. However, we note that their result only works for a very particular notion of measurement.

Digging deeper, their model of computation allows for unitary gates, plus a probabilistic measurement gate defined as mapping 
\[\alpha|0\rangle+\beta|1\rangle\mapsto\begin{cases}|0\rangle&\text{ with probability }|\alpha|^2\\|1\rangle&\text{ with probability }|\beta|^2\end{cases}\enspace .\]
Crucially, the measurement gates in~\cite{ITCS:GirRaz22} do not output the classical measurement result, and their model does not allow the resulting quantum register to be discarded or reset to a fixed state.

Such a measurement gate as considered in~\cite{ITCS:GirRaz22} is \emph{unital}, meaning it maps the totally mixed state to the totally mixed state of the same dimension. Unitary operations are also unital, as is any combination of unital gates. As such, their model of quantum computation with measurements only captures unital computations. 

Not all works treat measurements in this way, and many algorithms in the literature are not described using such unital measurements. In fact, measurements are most often depicted as producing a \emph{classical} output, sometimes consuming the quantum state (such as with the POVM formalism) and sometimes leaving behind a ``collapsed'' quantum register (such as with the projective measurement formalism). A key distinguishing feature of classical information is that it can be erased, something which is forbidden with unital gates. One can also consider ``reset'' gates which reset a qubit to $|0\rangle$, or even ``discard'' gates, which simply discards a register. A depiction of some different kinds of measurement gates is given in Figure~\ref{fig:measurementgates}.

\begin{figure}[htb]
	\centering
	\begin{minipage}{0.8\textwidth}
	\mbox{\includegraphics[align=t,width=.19\textwidth]{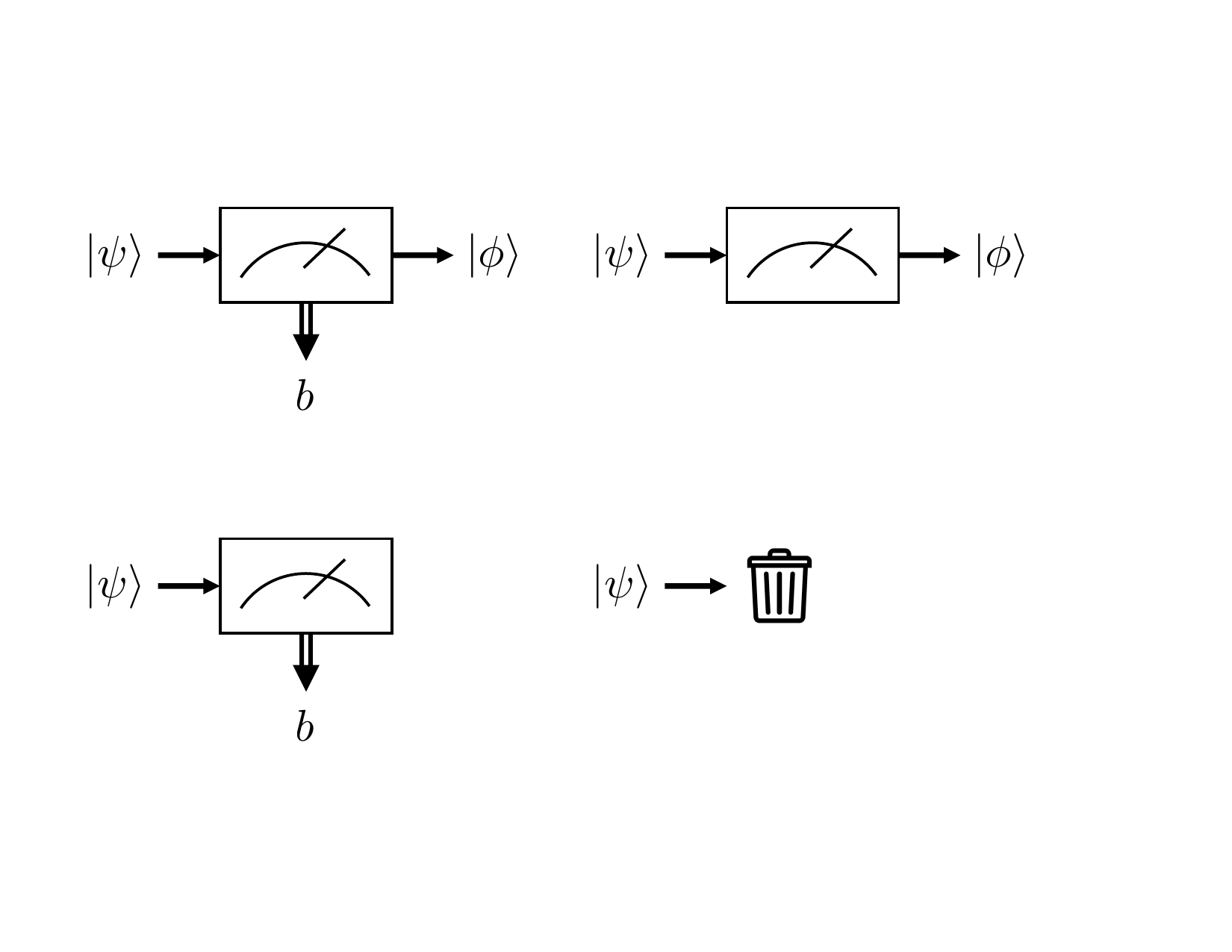}
		\hspace{.02\textwidth}\vrule\hspace{0.02\textwidth}
		\includegraphics[align=t,width=.14\textwidth]{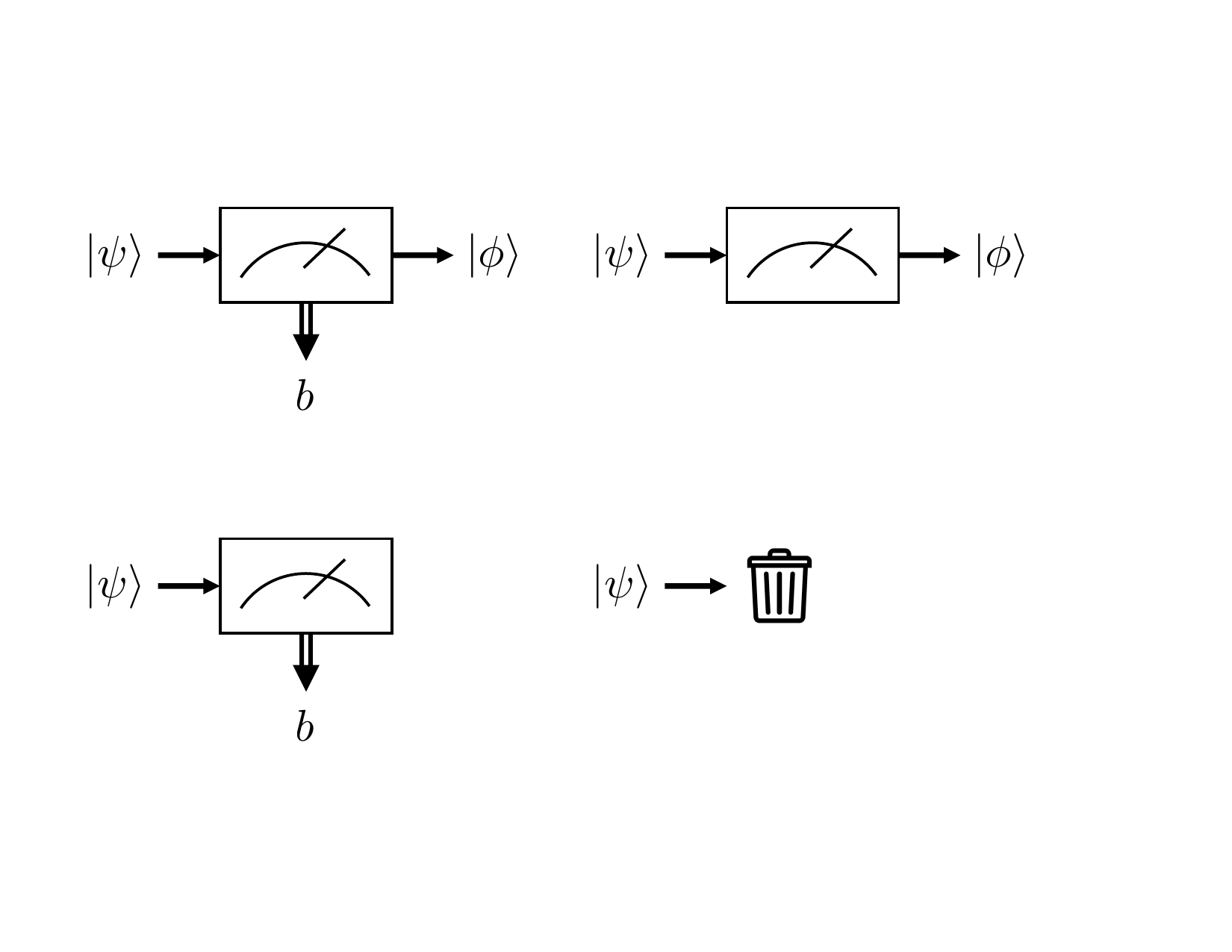}
		\hspace{.04\textwidth}
		\includegraphics[align=t,width=.19\textwidth]{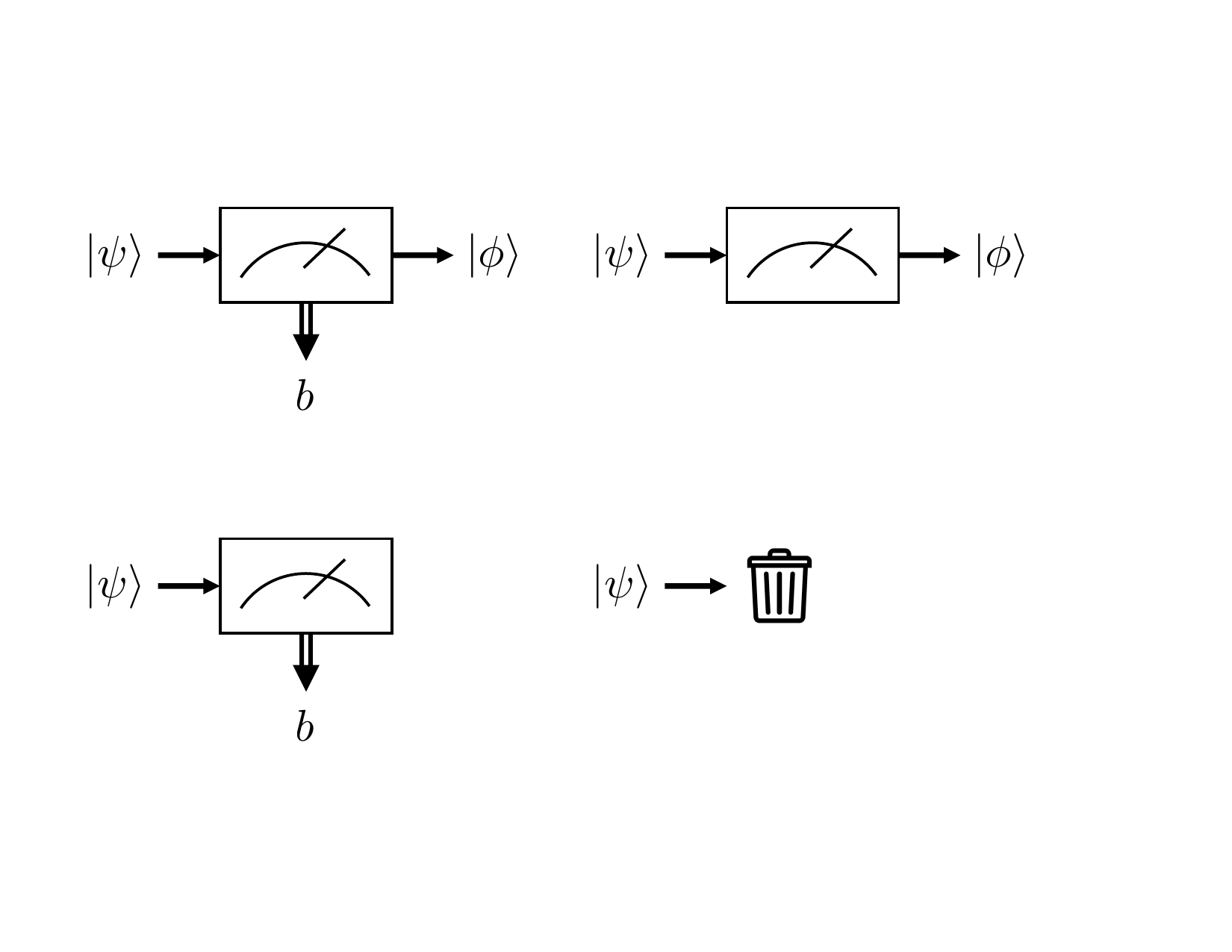}
		\hspace{.04\textwidth}
		\includegraphics[align=t,width=.10\textwidth]{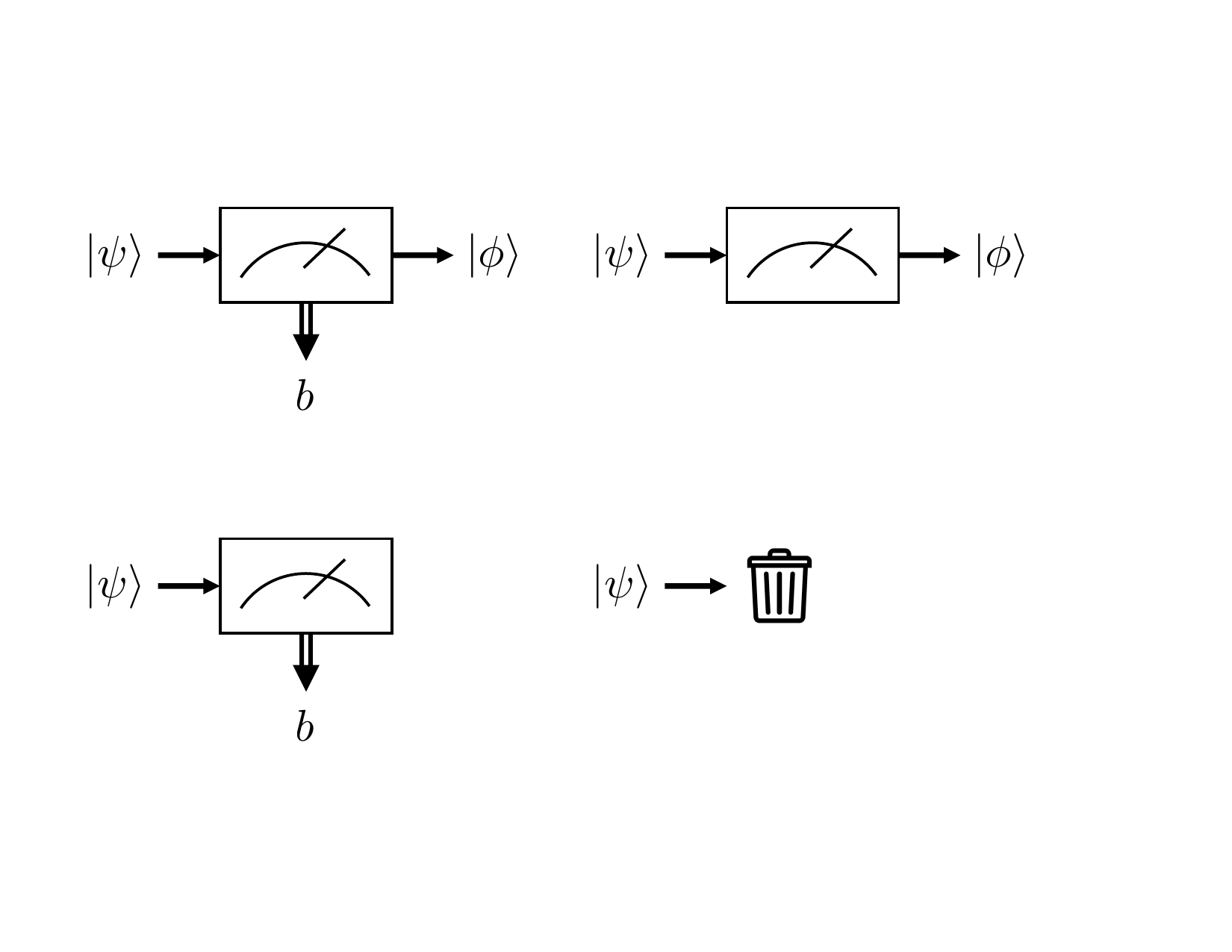}
		\hspace{.04\textwidth}
		\includegraphics[align=t,width=.14\textwidth]{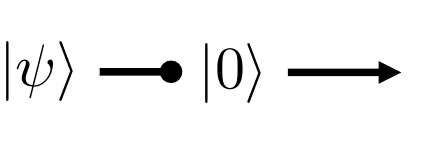}}\end{minipage}
	\captionsetup{width=.8\textwidth}
	\caption{\label{fig:measurementgates} Different kinds of measurement gates. Here, $|\psi\rangle$ is the state being measured, $b$ is the probabilistic measurement outcome, and $|\phi\rangle$ is the state that $|\psi\rangle$ collapses to when the measurement outcome is $b$.}
\end{figure}

We note that the rightmost four types of measurement gates in Figure~\ref{fig:measurementgates} --- that is, consuming the quantum state but outputing the classical measurement result, outputting both the measurement result and the collapsed state, resetting a qubit to $|0\rangle$, or simply discarding the state --- are all easily seen to be equivalent under appropriate assumptions\footnote{Assuming the ability to (1) arbitrarily discard classical values, (2) have quantum gates depend on previously obtained classical values, and (3) initialize new registers.}. In fact, unitary operations plus any one of these gates can be used to implement any quantum channel, a consequence of Stinespring Dilation~\cite{Stinespring55}. Moreover, all four appear frequently at least implicitly throughout the literature in the descriptions of quantum algorithms. On the other hand, the left-most type of gate --- the one considered in~\cite{ITCS:GirRaz22} --- which outputs the collapsed state but no classical output, is unital, meaning it alone is not enough to lift unitary operations to general channels. Thus, we see that~\cite{ITCS:GirRaz22} only applies to a version of measurement that is potentially quite limited.

More generally, one can consider a quantum computation involving general non-unitary channel gates, of which measurements are only a specific example. The goal is then to ``purify'' the computation, turning it into a computation involving only unitary gates.~\cite{ITCS:GirRaz22} will fail on general channels. We note that, in contrast to~\cite{ITCS:GirRaz22},~\cite{STOC:FefRem21} applies to quantum algorithms comprising arbitrary (potentially non-unital) channels, at the cost of a potentially exponential blowup in time complexity\footnote{\cite{ICALP:GirRazZha21}, on the other hand, only claims to apply to unital channels.}.

\begin{remark}We stress that~\cite{ITCS:GirRaz22} only claim their results work for their notion of measurement gates. They also mention that with qubit reset gates, it is trivial to simulate an intermediate measurement. But then the resulting circuit would have qubit reset gates. Qubit reset could reasonably itself be considered a ``measurement'' in a broader sense, since it is non-unitary and is equivalent to various other versions of measurements. To try to avoid any confusion, we will use the term ``general quantum computation'' to refer to computations involving this more general view of measurement.
\end{remark}

\subsection{Relationship to Classical Reversible Computation}\label{sec:reverse}

The task of eliminating intermediate measurements has an analog in classical computation: namely, turning general (irreversible) classical computation into reversible computation. One motivation for reversible computing is Landauer's principle~\cite{Landauer61}, which states that any irreversible logic operation requires a certain minimum energy consumption, therefore imposing a limit on how much efficiency can be improved. Meanwhile, no such energy consumption is inherent to reversible operations, meaning in principle reversible computation could have zero energy cost. In the quantum setting, measurements make a quantum algorithm irreversible and Landauer's principle would likewise impose a minimal energy consumption. Meanwhile, unitary algorithms are reversible and therefore ``immune'' to Landauer's principle.

Analogous to the quantum setting, in the classical setting one can make a general computation reversible trivially by blowing up the space to be linear in the running time. An old classical question was whether anything better is doable.

Bennett~\cite{Bennett89} resolved this classical question, showing that space $S$ and time $T$ general computation can be made reversible with space $S'=O(S\log T)$ and time $T'=\poly(T)$, thus preserving time and space efficiency. One may therefore be tempted to apply similar techniques to obtain an analogous result for eliminating quantum measurements. We now explain, however, that this strategy fails.

Bennett's result works roughly as follows. We first start with the trivial conversion, which makes an irreversible computation reversible by simply storing the complete program trace containing all prior states of the algorithm. To reverse a step of the computation, one simply un-computes the last state in the trace by re-computing it from the penultimate state. This of course blows up the space from $S$ to $S\times T$. What Bennett does is cleverly store only a few carefully selected prior states at a time, and show that this is sufficient to reversibly simulate the original computation, with only a modest blow-up in time complexity.

One may be tempted to adapt this technique in order to make a quantum computation with measurements reversible in low space and time, thereby removing intermediate measurements. We observe, however, that Bennett's result relies on a crucial feature of classical information that is no longer true quantumly: that the classical intermediate states of the algorithm can be copied --- one copy going into the program trace, and another copy to continue the computation. In contrast, the intermediate quantum states of a general quantum algorithm will not be copy-able by the no-cloning theorem. Of course, instead of copying, we could try to produce two copies of the state by running the algorithm a second time from the beginning. This will potentially fail for non-unitary algorithms, however, as intermediate measurements may have made the intermediate states unpredictable. But even worse, running the algorithm a second time will involve its own intermediate measurements that will need to be eliminated. So it is not clear if copying the state by running the algorithm from scratch a second time resulted in any progress. No-cloning thus seems to invalidate this approach to eliminating quantum measurements.

\subsection{Our Results}

\paragraph{Formalizing Black Box Impossibilities.} Our goal is to show that there is no procedure to eliminate general quantum measurements without blowing up either space or time. However, we observe that an unconditional result is out of reach given the current state of complexity theory. Indeed, if ${\sf BQL}={\sf BQP}$ (quantum log-space equals quantum polynomial-time), then for any ${\sf BQP}$ computation, we can trivially eliminate measurements using delayed measurements, thereby blowing up the space, but then ``compress'' the space using the equivalence to ${\sf BQL}$.

We therefore provide a notion of ``black-box'' compilers for quantum circuits. Such black-box compilers capture natural techniques such as all the quantum compilers mentioned above~\cite{STOC:FefRem21,ICALP:GirRazZha21,ITCS:GirRaz22} and the Principle of Delayed Measurement. The classical version of our notion also captures~\cite{Bennett89}, and therefore our notion of quantum black-box compiler captures natural attempts to adapt~\cite{Bennett89} to the quantum setting. We note that our notion of black-box is somewhat different than notions studied in cryptography~\cite{STOC:ImpRud89}. Indeed, all the compilers mentioned above inherently operate on the computation at the circuit level, meaning the compilers get to ``see'' the circuit representation. In contrast, black-box techniques in cryptography treat the inputs as a monolithic computation, and the techniques are explicitly forbidden from seeing the circuit representation. Our key insight is that natural circuit compilers like those discussed above do make use of the circuit representation, but are essentially agnostic to the gates used in the original computation, giving equally good space- and time-bounds regardless of the gate set used. We therefore define black-box compilers, roughly, as those that work equally well for \emph{any} set of gates.

\paragraph{Our Main Theorem.} We can now state our main theorem:

\begin{theorem}[Informal]\label{thm:maininf} For any black-box compiler mapping space $S$, time $T$ general quantum computation to space $S'$, time $T'$ unitary computation, either $S'=\Omega(T)$ or $T'=2^{\Omega(S)}$.
\end{theorem}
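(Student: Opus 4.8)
The plan is to exhibit a family of general quantum computations — parametrized by a "gate" that hides a secret — such that any black-box compiler, being agnostic to the semantics of the gate, is forced to treat it as a generic oracle, and then to derive a communication/query lower bound showing that a space-$S'$, time-$T'$ unitary simulation that succeeds for \emph{all} instantiations of the gate must have either $S' = \Omega(T)$ or $T' = 2^{\Omega(S)}$. Concretely, I would build a computation that uses $T$ intermediate measurements (or reset/discard gates) to "funnel" a large amount of entropy — roughly $T$ bits — through a low-space workspace: at each step the algorithm reads one bit of a long secret string $x \in \{0,1\}^T$ encoded in the gate, XORs or accumulates it into an $S$-qubit register, and measures/resets an ancilla so that the accumulated information cannot simply sit in a growing pure state. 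The point is that the \emph{only} way the original computation gets to depend on all $T$ bits of $x$ is by genuinely irreversibly compressing them on the fly; a black-box compiler that does not know what the gate computes cannot "shortcut" this and must, in effect, reconstruct the same dependence.

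The key steps, in order: (1) Formalize the black-box compiler model as in the paper — the compiler sees the circuit topology and gate \emph{labels} but its correctness guarantee must hold when each non-unitary/oracle gate is replaced by an arbitrary channel from an allowed family; in particular one can replace the "mystery gate" by a Haar-random or adversarially chosen unitary-plus-measurement and the compiler's output circuit is fixed independent of that choice. (2) Design the hard instance: an $S$-space, $T$-time general computation $C_x$ that, on input the all-zero state, outputs (say) a single bit that equals a hard-to-compute function $f(x)$ of the secret — e.g. $f$ is chosen so that computing it requires "seeing" $x$ through a bottleneck, and the intermediate measurements are exactly what let $S$-space suffice. (3) Show the compiled unitary circuit $C'_x$, having $S'$ qubits and $T'$ gates and being built \emph{without} knowledge of $x$ beyond oracle access to the mystery gate, defines a $T'$-query, $S'$-qubit algorithm for $f$; then invoke a query/communication lower bound (a hybrid argument à la BBBV, or a direct entropy/compression argument: after $q$ oracle calls an $S'$-qubit register can have retained at most $O(S' \cdot q)$ bits of information about $x$, so to determine $f(x)$ which depends on $\Omega(T)$ bits one needs $S' \cdot T' = \Omega(T)$, forcing $S' = \Omega(T)$ whenever $T' = 2^{o(S)}$ — with the constants arranged so the clean dichotomy in the statement falls out). (4) Convert the informal "general computation" into a legitimate bounded-space uniform family, checking that the original $C_x$ really does run in space $S$ and time $T$ using only unitary gates plus the allowed measurement/reset gates, so that the lower bound is not vacuous.

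The main obstacle I anticipate is step (3): making precise the sense in which a black-box compiler's output circuit yields a \emph{query} algorithm for $f$ with the right parameters. A compiler is allowed to depend on the circuit \emph{structure} of $C_x$ — and if the structure itself leaked $x$, the reduction collapses — so I must design the hard family so that all $C_x$ share the \emph{same topology and gate labels}, with $x$ entering only through the channel implemented by one repeated mystery gate; then "black-box in the gate" genuinely forces oracle access. A secondary subtlety is that the compiler is only required to reproduce the \emph{output distribution} (or final state) of $C_x$, not to simulate it step-by-step, so the lower bound must be against \emph{any} $S'$-qubit $T'$-gate unitary that reproduces the output — this is where a robust information-theoretic bottleneck argument (how much about $x$ can an $S'$-qubit state hold after $T'$ oracle queries) is more appropriate than a syntactic simulation argument, and getting the entropy accounting tight enough to yield $2^{\Omega(S)}$ rather than merely $\omega(\poly)$ in the time bound is the delicate quantitative point.
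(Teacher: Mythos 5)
Your step (3) is where the argument breaks down, and the paper flags exactly this failure mode. Any entropy/compression accounting of the form ``after $q$ oracle calls an $S'$-qubit register can retain at most $O(S'\cdot q)$ bits of information about $x$'' is blind to whether the circuit is unitary: the very same accounting applies to the original general computation with its $S$ qubits and $T$ steps, which \emph{does} compute $f(x)$, so no such bound can possibly separate unitary circuits from measurement-using circuits of the same resource profile. The paper says this directly --- existing quantum space lower bounds (Holevo/information arguments, hybrid/BBBV arguments, time-space-with-advice bounds) apply equally to both classes, which is usually a feature but is fatal here. Your hard instance shares the same blind spot: XOR accumulation into a small register is reversible, and if the ancilla you reset holds only information redundant with the accumulator, a unitary circuit can un-compute it at essentially no cost, so no measurements are really needed. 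You need the discarded garbage to be both \emph{inevitable} (any unitary route to the useful state must also produce it) and \emph{impossible to un-compute} without undoing the useful progress. The paper achieves this with Haar-random ``junk'' states $|\phi_1\rangle,\dots,|\phi_t\rangle$: the oracle produces $|\psi_{i+1}\rangle|\phi_{i+1}\rangle$ only by consuming $|\psi_i\rangle$, and no-cloning blocks any independent reconstruction of $|\psi_i\rangle$ to clean up $|\phi_i\rangle$. The lower bound is then a dimension-counting argument tailored to unitarity: a stateful simulator answers queries out of a stock of copies of the $|\psi_i\rangle,|\phi_i\rangle$ living in symmetric subspaces; any successful run is shown to force the simulator's register into a strictly smaller subspace (a net copy of each $|\phi_i\rangle$ has been given out); and since a unitary applied to a maximally mixed state cannot concentrate probability into a low-dimensional subspace beyond the dimension ratio, the algorithm's own register must have dimension $2^{\Omega(nt)}$. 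A reset or measurement collapses the joint state and \emph{can} shrink this dimension, so the bound correctly does not apply to the low-space general circuit --- that asymmetry is precisely what your information bottleneck lacks.

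A secondary gap: you write that the compiler's output circuit ``is fixed independent of that choice'' of gate, but the paper's model explicitly lets the compiled circuit depend arbitrarily on the gate, with only the resource bounds $S'(S,T),T'(S,T)$ required to be gate-independent. This means the compiled circuit may carry non-uniform advice about the oracle, and one must argue that bounded advice cannot help much. The paper salts the oracle by an index $x$ and runs a mutual-information/Pinsker ``salting defeats pre-processing'' argument to reduce the advice-carrying case to the advice-free oracle lower bound; your sketch elides this, and it is a nontrivial part of getting the theorem's actual statement rather than a weaker gate-independent version.
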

We note that the Principle of Delayed Measurement and~\cite{STOC:FefRem21} demonstrate that Theorem~\ref{thm:maininf} is essentially tight\footnote{Assuming the typical parameter setting where $T\leq 2^{O(S)}$.}. We prove our theorem by exhibiting, for any $S,T$, a set of unitary gates and a space $S$, time $T$ general quantum computation (with measurements) relative to these gates, such that any unitary simulation using these gates requires space $S'=\Omega(T)$ or time $T'=2^{\Omega(S)}$. Theorem~\ref{thm:maininf} also demonstrates that~\cite{ITCS:GirRaz22} cannot be generalized to handle arbitrary measurement gates.


\paragraph{New Space Lower Bound Technique.} In order to prove Theorem~\ref{thm:maininf}, we need a lower-bound technique that works on unitary computation, but crucially fails to lower bound general quantum computation containing measurements. After all, Theorem~\ref{thm:maininf} requires the existence of a low-space general quantum algorithm with measurements for the task. Prior quantum space lower bounds (e.g.~\cite{SIAM:KlaSpadeW07,QIC:NABT15,FOCS:CGLQ20,HamMag23}) typically work similarly well for both general quantum algorithms and those that make no measurements. Indeed, this would be considered a \emph{feature} in the usual setting of space lower bounds as it makes them more general. But for us, it means we need a fundamentally new lower-bound technique.

Our lower bound technique works by simulating a quantum gate using a stateful simulator. The simulator will start out having some space. Then we show that for any algorithm solving some task, the size of the simulator's state must decrease by a certain amount. As the joint state of the simulator and any algorithm that does not make measurements is pure, we argue that the total joint state size must not decrease from its original value. But since the simulator's state decreased in size, this means the algorithm's state size increased. Observe that this technique does not apply to algorithms which may make measurements, as such measurements result in a joint operation that is non-unitary and can decrease in size.


\section{Technical Overview}

\paragraph{Our Construction.} We now give an overview of our results and techniques. Motivated by the challenges of adapting~\cite{Bennett89} to remove quantum measurements, our idea is to design a computation where intermediate states are unclonable. In our case, the intermediate states are predictable. But only part of the intermediate state, call it $|\psi\rangle$, is useful, and the other part, say $|\phi\rangle$, is a useless byproduct of the computation of $|\psi\rangle$. If measurements are allowed --- or more precisely, the ability to reset registers --- then $|\phi\rangle$ can always be reset before proceeding. But if we demand a unitary version of the computation, the only way to eliminate $|\phi\rangle$ appears to be to un-compute it. But in principle, the only way to uncompute $|\phi\rangle$ is to actually undo the joint computation of $|\psi\rangle$ and $|\phi\rangle$. Remember that $|\psi\rangle$ is unclonable; this means uncomputing $|\psi\rangle$ actually just returned to a previous point in the computation, and we have not actually made any progress. If one tries to compute $|\psi\rangle$ a second time from scratch, this will work, but now there are two $|\phi\rangle$ states that need to be un-computed. There seems to be no unitary way of computing $|\psi\rangle$ without also having $|\phi\rangle$ be present. By having many intermediate steps produce useless side states $|\phi_1\rangle,|\phi_2\rangle,\cdots$ that all must be present to make progress, we force any unitary version of the computation to be large. Meanwhile, with measurements we can simply reset all the $|\phi_i\rangle$ as they are computed to re-use their space, keeping the overall space small. This is depicted in Figure~\ref{fig:constr}.

\begin{figure}[htb]
	\centering
	\includegraphics[width=.8\textwidth]{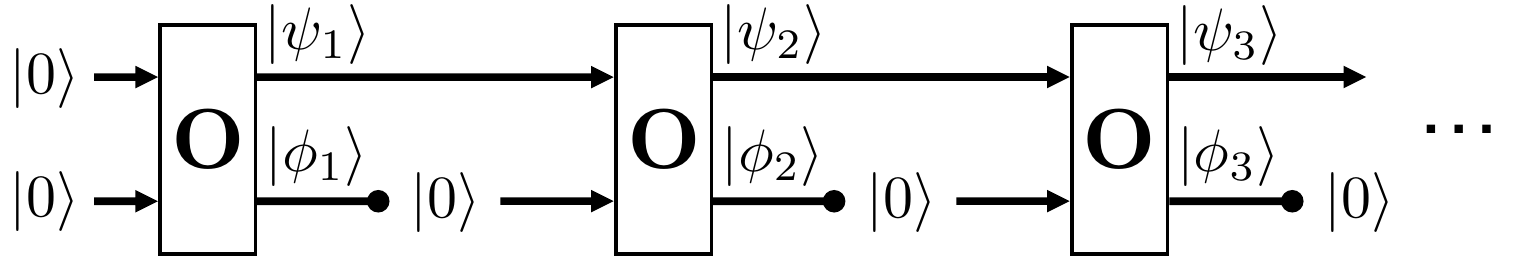}
	\captionsetup{width=.8\textwidth}
	\caption{\label{fig:constr} Our task that can be computed in low time and space with measurements, but requires large space or time without.}
\end{figure}

The final step of the computation will be to output $|\psi_t\rangle$ for some $t$, or potentially some other value that requires first computing $|\psi_t\rangle$. We observe that an algorithm that can make general quantum measurements can easily compute $|\psi_t\rangle$ in low space by iteratively computing $|\psi_i\rangle|\phi_i\rangle$ and discarding $|\phi_i\rangle$.

\paragraph{Formalizing Black Box Compilers.} As discussed earlier, it is consistent with current knowledge (even if considered unlikely) that ${\sf BQL}={\sf BQP}$, in which case one can eliminate measurements in a space- and time-efficient manner by first blowing up the space using delayed measurements, and then generically reducing the space back. However, such a mechanism would be non-black-box, in the sense that it would have to inherently use the circuit representation of the unitary $\Om$, the computation that jointly computes $|\psi\rangle,|\phi\rangle$.

We therefore imagine a class of black-box compilers, which work regardless of $\Om$. That is, any such compiler takes as input a circuit $C$ involving $\Om$ gates and measurements, and produces a new unitary circuit $C'$ using $\Om$ (and potentially other ordinary unitary gates) but no measurement gates. $C'$ must have (approximately) the same functionality as $C$. The compiler must work for \emph{any} unitary $\Om$, though we allow the compiler to have complete knowledge of $\Om$ and potentially have the choice of circuit $C'$ depend on $\Om$. The aforementioned compilers for eliminating measurements such as delayed measurements,~\cite{STOC:FefRem21,ICALP:GirRazZha21,ITCS:GirRaz22}, or any strategy similar to~\cite{Bennett89} are all black box in this sense. We explain in slightly more detail how our notion of black-box captures these works in Section~\ref{sec:compilers}.

By treating $\Om$ as a black box, we have now turned a potentially intractable problem involving at a minimum quantum complexity lower-bounds into an oracle problem, which may be tractable.

\begin{remark}Note that one can use the space-efficient version of the Solovay-Kitaev Theorem (\cite{MelWat12} Theorem 7) to replace any constant-sized set of unitary gates with any other constant-sized set of unitary gates in a space- and time-efficient manner. However, this transformation is only efficient when fixing the gate sets and then considering the complexities asymptotically; the constants in the asymptotics will depend on the gate sets in question. In particular, if we let $n$ be the number of qubits $\Om$ acts on, applying~\cite{MelWat12} to replace $\Om$ with gates from a fixed universal gate set will blow up the running time to $2^{\Omega(n)}$. This is ``constant'' if $\Om$ and $n$ are fixed, but is exponential if we allow $n$ to vary. In our case, we set $n=S$, the space of $C$, in which case applying~\cite{MelWat12} gives a running time of at least $2^{\Omega(S)}$.
	
Our notion of a black-box compiler requires the space and time complexities $S',T'$ of $C'$ to be fixed functions $S'=S'(S,T),T'=T'(S,T)$ of the space and time complexities $S,T$ of $C$. The functions $S',T'$ have to be the same, regardless of the gate set used by $C$ or how $C$ is constructed. We stress, however, that our notion allows $C'$ to depend arbitrarily on $C$ and its gate set, with the only restriction being on the space and time complexities. Restricting the space and time complexities in this way seems inherent: if $S',T'$ as functions of $S,T$ were allowed to depend on the gate set, then we can apply the space-efficient Solovay-Kitaev Theorem to move to a fixed universal gate set. Then, if ${\sf BQL}={\sf BQP}$, we can eliminate intermediate measurements in low space and time as explained above.
\end{remark}

\paragraph{Proving Large Unitary Space.} We now turn to proving that any unitary computation which computes $|\psi_t\rangle$ efficiently must have large space, specifically we prove that $\Omega(t\times S)$ unitary space is necessary, where $S$ is the space of the non-unitary computation in Figure~\ref{fig:constr} and is also proportional to the size of $|\phi_t\rangle$. Note that our modeling has the size of the non-unitary circuit be $T:=t\times S$, as resetting $\Theta(S)$ qubits in our model requires $\Theta(S)$ gates. Our $\Omega(t\times S)$ unitary space lower bound then implies that any efficient \emph{unitary} computation of $|\psi_t\rangle$ requires space $\Omega(T)$, matching what one gets via delayed measurements. 

More specifically, our goal is to show, roughly, that the only way to compute $|\psi_t\rangle$ efficiently with unitaries requires computing and storing each of $|\phi_1\rangle,\cdots,|\phi_t\rangle$. The challenge is, of course, that the algorithm can apply arbitrary unitaries to the $|\phi_i\rangle$, including applications of $\Om$. So we cannot hope to say that each of the $|\phi_i\rangle$ are explicitly stored in memory, as they may be hidden behind a more complex computation.

Another challenge, as mentioned previously, is that existing quantum space lower bounds make no distinction between unitary and non-unitary algorithms. Since we have a low-space non-unitary computation, any attempt to use existing techniques would necessarily fail at giving meaningful unitary lower bounds.

\paragraph{Simulating $\Om$ Statefully.} We show how to simulate the oracle $\Om$. Our simulator will only use several copies of each of $|\phi_1\rangle,\cdots,|\phi_t\rangle$ and $|\psi_1\rangle,\cdots,|\psi_t\rangle$. Essentially, whenever the gate must output $|\psi_i\rangle|\phi_i\rangle$, instead of constructing the state our simulator will simply swap in one of its copies, thereby reducing the number of copies the simulator has. Likewise, whenever the gate takes as input $|\psi_i\rangle$ and must eliminate it by uncomputing it, the simulator instead moves $|\psi_i\rangle$ from the algorithm's registers to the simulator's list of copies.

Importantly, as any supposed algorithm makes progress towards computing $|\psi_t\rangle$, we show that if the states are Haar random, then the number of copies of the various $|\phi_i\rangle$ the simulator has must \emph{decrease}. But if the algorithm is unitary, the overall joint state size can never decrease, since the initial copies of the various $|\phi_i\rangle$ cannot be destroyed by unitary computation. Therefore, if the simulator's storage decreases, the algorithm's storage must increase. Observe that this space bound does \emph{not} apply to algorithms with measurements, which can easily destroy copies of $|\phi_i\rangle$ by measuring/resetting them. This means the joint system of the non-unitary algorithm and simulator could decrease in space. Indeed, this is what happens in our low-space measurement-based algorithm.

\begin{remark}Our arguments above only apply to algorithms with running time at most $2^{O(S)}$; this is inherent as our black-box notion captures the low-space algorithm of~\cite{STOC:FefRem21}, which runs in time $2^{\Omega(S)}$. The restriction to running time $2^{O(S)}$ appears in two places. First, the claim that the number of copies of a state cannot be unitarily changed only holds for a bounded number of copies, since beyond $2^{O(S)}$ copies it is possible to perform tomography on the state. Our simulator must have a number of copies that is at least the number of queries made to $\Om$, so our arguments only apply if the number of queries is bounded. The second place where we assume a bounded running time is that our simulation introduces a small error of order $2^{-O(S)}$ for each query to $\Om$, and after $2^{O(S)}$ queries the error becomes $O(1)$, meaning the simulation failed. 
\end{remark}

\section{Preliminaries}\label{sec:prelim}

\paragraph{Quantum Computation.} A quantum system is associated with a finite-dimensional complex Hilbert space $\Hs$. A (pure) state over a quantum system is a unit column vector $|\psi\rangle$ with $\||\psi\rangle\|=1$. The Hermitian transpose of $|\psi\rangle$ is denoted $\langle\psi|$. A probability distribution over pure states is a mixed state, and is characterized by its density matrix $\rho=\sum_i p_i|\psi_i\rangle\langle\psi_i|$ where $p_i$ is the probability of $|\psi_i\rangle$. Note that $\Tr(\rho)=1$. When the distribution over $i$ is clear, we can also write $\rho=\E_i[|\psi_i\rangle\langle\psi_i|]$.

Given a complex matrix $\Um$, let $\Um^\dagger$ be the Hermitian transpose. A unitary operation is a complex square matrix $\Um$ such that $\Um\Um^\dagger=\Id$. Unitary evolution of a quantum system is described by a unitary $\Um$ that transforms $|\psi\rangle$ into $\Um|\psi\rangle$.

General \emph{non-unitary} evolution of a quantum system is described by a completely-positive
trace-preserving (CPTP) map $M$ from system $\Hs_{\sf in}$ to $\Hs_{\sf out}$. Such maps are in particular linear on density matrices, and trace preserving: $\Tr(M(\rho))=\Tr(\rho)=1$. Given a joint system $\As\otimes\Bs$, a special CPTP map is the partial trace $\Tr_\Bs$ which maps $\As\otimes\Bs$ to $\As$, with the property that $\Tr_\Bs(\rho_\As\otimes\rho_\Bs)=\rho_\As$ for any mixed states $\rho_\As,\rho_\Bs$ over $\As,\Bs$ respectively. By linearity, $\Tr_\Bs$ can be extended to all mixed state inputs. If we apply $\Tr_\Bs$ to a quantum state, we will say that $\Bs$ has been traced out.

Given any unitary operation $\Um$ on $\As$, we can extend it to a unitary operation $\Um\otimes\Id$ on $\As\otimes\Bs$ by acting as the identity on $\Bs$. Likewise, a CPTP map $M$ from $\As$ to $\As'$ can be extended to a CPTP map from $\As\otimes\Bs$ to $\As'\otimes\Bs$ by acting as the identity on $\Bs$. In both cases, we will abuse terminology and say that $\Um$ or $M$ is acting on $\As\otimes\Bs$.

\paragraph{Distance.} The \emph{trace distance} between two mixed quantum states $\rho,\rho'$ over the same system $\Hs$ is given by $TD(\rho,\rho')=\frac{1}{2}\Tr(\sqrt{(\rho-\rho')\cdot(\rho-\rho')})$. The trace distance is equivalent to the optimal distinguishing probability between the two states. The distance between two distributions $D,D'$, denoted $\Delta(D,D')$, is given by $\frac{1}{2}\sum_x |\Pr[x\gets D]-\Pr[x\gets D']|$.

\paragraph{Quantum circuits.} A \emph{qubit} is the special case where $\Hs$ has dimension 2, often denoted $\Hs_2$. We will typically consider Hilbert spaces that are the product of many qubits: $\Hs=\Hs_2^{\otimes n}$.

We now describe our non-uniform model of computation using quantum circuits, following~\cite{STOC:AhaKitNis98}. Let $\Gs$ be a fixed, finite set of operations. We will assume each operation in $\Gs$ is length preserving, meaning it makes $k$ qubits to $k$ qubits for some $k$ (different operations in $\Gs$ may have different $k$). We will call the elements of $\Gs$ \emph{gates}. We will always assume $\Gs$ is closed under Hermitian transpose. A unitary circuit is composed of a sequence of applications of unitary gates, and a general quantum circuit is composed of a sequence of applications of general quantum gates. The qubits are then partitioned into three sets: $\Hs_{\sf in}$, which contains the input state, $\Hs_{\sf out}$, which will contain the output state, and $\Hs_{\sf work}$, which will contain private work space. The classical input $x$ is loaded into the register $\Hs_{\sf in}$, denoted as $|x\rangle$, and then $\Hs_{\sf out}$ and $\Hs_{\sf work}$ are initialized to fixed states, which will both be denoted $|0\rangle$. At the end of the computation, $\Hs_{\sf in}$ and $\Hs_{\sf work}$ are traced out and $\Hs_{\sf out}$ is measured to get the final output.

For a quantum circuit $C$ and a classical input $x$, we will let $C(x)$ denote the distribution of outputs obtained by computing $C|x\rangle$ and then measuring $\Hs_{\sf out}$.

In general, we will consider the gate set as being a property of a quantum circuit $C$, which we will denote as $\Gs(C)$. We note that we allow $C$ to not use all the gates in the gate set, meaning $\Gs(C)$ may include gates not used in $C$.

\paragraph{Complexity Metrics.} The \emph{time complexity} of a quantum circuit is the number of gates in the circuit. The \emph{space complexity} is the sum of the number of qubits in $\Hs_{\sf in},\Hs_{\sf out},\Hs_{\sf work}$. 
\begin{remark}The above space metric is not well-suited to the regime of space sub-linear in the input size. But this can be easily handled in a number of ways, such as by having $\Hs_{\sf in}$ only being used as control qubits, and not counting it in the space. These details will not be important for us.
\end{remark}
\begin{remark}It is also possible to consider uniform quantum computational models~\cite{Wat04,MelWat12,STOC:Ta-Shma13}. However, as we discuss in Section~\ref{sec:compilers}, working in a non-uniform model makes our result stronger, as we are interested in lower-bounds.
\end{remark}

\paragraph{Oracle-assisted Circuits.} An oracle-assisted circuit is one that may make queries to a unitary $\Um$. For oracle-assisted circuits, the time complexity is the total of the number of gates and number of oracle calls to $\Um$. The space complexity is still the number of qubits in $\Hs_{\sf in},\Hs_{\sf out},\Hs_{\sf work}$. These time and space complexities do \emph{not} include the time and space used internally by $\Um$.

\paragraph{Universality.} A \emph{universal unitary gate set} is a finite set of unitary gates $\Gs$, such that any unitary operation can be approximately arbitrarily closely by circuits over $\Gs$. A \emph{universal general gate set} is a finite set of general gates $\Gs$, such that any CPTP map can be approximated arbitrarily closely by circuits over $\Gs$. A \emph{universal measurement set} is a set of gates $\Ms$, such that for any universal unitary gate set $\Gs$, $\Gs\cup \Ms$ is a universal general gate set. An example of a universal measurement set is $\Ms=\{{\sf Trash},{\sf Init}\}$, where ${\sf Trash}$ traces out a qubit (outputting nothing) and ${\sf Init}$ initializes a new qubit to a fixed state typically denoted as $|0\rangle$. Note that any universal general gate set is also a universal measurement set. A \emph{proper} universal general gate set has the form $\Gs\cup\Ms$ where $\Gs$ is a universal unitary gate set and $\Ms$ is a universal measurement set.

Note that the unital measurement gate from~\cite{ITCS:GirRaz22} is \emph{not} universal, since when combined with unitary gates it only gives unital circuits.

\paragraph{Symmetric Subspaces} For a Hilbert space $\Hs$ and positive integer $\ell$, let $\Sym^\ell\Hs$ be the symmetric subspace of $\ell$ copies of $\Hs$, which is the space of all states that are invariant under permuting the $\ell$ copies of $\Hs$. The symmetric subspace has dimension ${\sf Dim}\left(\Sym^\ell\Hs\right)=\binom{{\sf Dim}(\Hs)+\ell-1}{\ell}$. We will somewhat abuse notation, and also let $\Sym^\ell$ denote the projection of the space $\Hs^{\otimes \ell}$ onto the symmetric subspace $\Sym^\ell\Hs$.

\paragraph{Haar Random States.} We will avoid specifying the formal definition of Haar random states, but will make use of a few key facts. First is that the density matrix of $\ell$ copies of a Haar random state over system $\Hs$ is identical to that of the totally mixed state over $\Sym^\ell\Hs$. Second is a no-cloning statement, which says that the optimal probability of constructing $|\psi\rangle^{\otimes(\ell+1)}$ from $|\psi\rangle^{\otimes\ell}$ is at most the ratio of the dimensions of the symmetric subspaces $\Sym^\ell\Hs$ and $\Sym^{\ell+1}\Hs$, which works out to be $\ell/({\sf Dim}(\Hs)+\ell)$~\cite{Werner98}.

\paragraph{Reflections and Projections.} For a state $|\psi\rangle$, let $P_{|\psi\rangle} := 1- 2|\psi\rangle\langle\psi|$ be the reflection about $|\psi\rangle$. We observe that $P_{|\psi\rangle}$ can be used to implement the map satisfying $|\psi\rangle|b\rangle\mapsto|\psi\rangle|b\oplus 1\rangle$ and identity on all states orthogonal to $|\psi\rangle|b\rangle$. Indeed, we can apply the Hadamard transform to $|b\rangle$, obtaining $\frac{1}{\sqrt{2}}|0\rangle+(-1)^b\frac{1}{\sqrt{2}}|1\rangle$. Then controlled on the bit in this register, we apply $P_{|\psi\rangle}$. In the case where the state is $|\psi\rangle$, then $|0\rangle$ maps to $|0\rangle$ while $|1\rangle$ maps to $-|1\rangle$. This maps the overall qubit state to $\frac{1}{\sqrt{2}}|0\rangle+(-1)^{b+1}\frac{1}{\sqrt{2}}|1\rangle$; applying Hadamard one more time give $|b\oplus 1\rangle$. On the other hand, if the state is $|\tau\rangle$ orthogonal to $|\psi\rangle$, then $P_{|\psi\rangle}$ acts as the identity.
		
Using the latter formulation, we can also implement the projective measurement $|\psi\rangle\langle\psi|$ by simply initializing the qubit to 0, applying the transformation above, and then measuring the qubit. If we get a 1, we know the state is $|\psi\rangle$, while a zero tells us that the state is orthogonal to $|\psi\rangle$. We will abuse notation, and let $P_{|\psi\rangle}$ whichever version (reflection, $|\psi\rangle|b\rangle\mapsto|\psi\rangle|b\oplus 1\rangle$, or projection) is most convenient. 

\paragraph{Queries to Classical Functions.} Given a classical function $O:\{0,1\}^m\rightarrow\{0,1\}^n$, we can have an algorithm make queries to $O$. Do do so, we turn $O$ into a unitary $\Om$ that acts on $\Hs_2^{\otimes m}\otimes\Hs_2^{\otimes n}$ as $\Om|x,y\rangle=|x,y\oplus O(x)\rangle$. Then any query to $O$ simply applies the unitary $\Om$.

\paragraph{Some Useful Lemmas.} Consider the state $|\phi_t\rangle=\sum\alpha_{x,y}|x,y\rangle$ of a quantum query algorithm when it makes its $t$-th quantum query. Define $q_x(|\phi_t\rangle)$ to be the magnitude squared of $x$ in the superposition of query $t$, that is $q_x(|\phi_t\rangle)=\sum_y |\alpha_{x,y}|^2$. Call this the query magnitude of $x$. Let $q_x=\sum_t q_x(|\phi_t\rangle)$ be the total query magnitude of $x$. For a set $S$, let $q_S=\sum_{x\in S}q_x$ be the total query magnitude of $S$.

\begin{lemma}[\cite{BBBV97} Theorem 3.1]\label{lem:bbbv1} Suppose $\||\phi\rangle-|\psi\rangle\|\leq\epsilon$. Then performing any measurement measurement on $|\phi\rangle$ and $|\psi\rangle$ yields distributions with statistical distance at most $4\epsilon$.\end{lemma}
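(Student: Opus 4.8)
The plan is to follow the original argument of~\cite{BBBV97}: reduce the claim to a bound on how much a single measurement outcome probability can change when the input state is perturbed by a vector of norm at most $\epsilon$. Model an arbitrary measurement by a collection of operators $\{M_i\}$ with $\sum_i M_i^\dagger M_i = \Id$, so that on a unit state $|\chi\rangle$ the probability of outcome $i$ is $p_i(|\chi\rangle) = \|M_i|\chi\rangle\|^2$. Any POVM, and in particular any projective measurement, has this form, and tensoring each $M_i$ with an identity handles the case where the measured register is only part of a larger system. Writing $|E\rangle := |\psi\rangle - |\phi\rangle$ with $\||E\rangle\| \le \epsilon$, the goal is to bound $\sum_i |p_i(|\psi\rangle) - p_i(|\phi\rangle)|$, which is exactly twice the statistical distance between the two outcome distributions.

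For the per-outcome estimate, set $a_i = \|M_i|\psi\rangle\|$ and $b_i = \|M_i|\phi\rangle\|$. Then $|p_i(|\psi\rangle) - p_i(|\phi\rangle)| = |a_i^2 - b_i^2| = |a_i - b_i|\,(a_i + b_i)$, and the triangle inequality gives $|a_i - b_i| \le \|M_i|E\rangle\|$ and $a_i + b_i \le 2\|M_i|\phi\rangle\| + \|M_i|E\rangle\|$. Summing over $i$ and applying Cauchy--Schwarz,
\[ \sum_i \|M_i|E\rangle\|\,\|M_i|\phi\rangle\| \;\le\; \Big(\textstyle\sum_i \|M_i|E\rangle\|^2\Big)^{1/2}\Big(\textstyle\sum_i \|M_i|\phi\rangle\|^2\Big)^{1/2} = \||E\rangle\|\cdot\||\phi\rangle\| \le \epsilon , \]
where the middle equality uses the completeness relation $\sum_i M_i^\dagger M_i = \Id$ (so $\sum_i \|M_i|v\rangle\|^2 = \||v\rangle\|^2$), and likewise $\sum_i \|M_i|E\rangle\|^2 = \||E\rangle\|^2 \le \epsilon^2$. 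Combining, $\sum_i |p_i(|\psi\rangle) - p_i(|\phi\rangle)| \le 2\epsilon + \epsilon^2$, so the statistical distance is at most $\epsilon + \epsilon^2/2$.

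It then remains to absorb this into the stated bound $4\epsilon$. Since $|\phi\rangle,|\psi\rangle$ are unit vectors, $\epsilon \le 2$ always; if $\epsilon \ge 1$ the statement is vacuous because statistical distance is at most $1 \le 4\epsilon$, and if $\epsilon < 1$ then $\epsilon + \epsilon^2/2 < \tfrac32\epsilon \le 4\epsilon$. (If one instead allows sub-normalized states of norm at most $1$, as in~\cite{BBBV97}, the same computation goes through with $\||\phi\rangle\|\le 1$.) The only place requiring genuine care is the operator-form modeling step — checking that it captures \emph{every} measurement and setting up the Cauchy--Schwarz application so the completeness relation applies cleanly; the rest is routine. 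I will present this elementary argument to match the cited source, but note in passing that a shorter route is available from machinery already recalled in the preliminaries: $\||\phi\rangle - |\psi\rangle\|^2 = 2 - 2\,\mathrm{Re}\langle\phi|\psi\rangle \le \epsilon^2$ yields $|\langle\phi|\psi\rangle| \ge 1 - \epsilon^2/2$, hence $TD(|\phi\rangle\langle\phi|,|\psi\rangle\langle\psi|) = \sqrt{1-|\langle\phi|\psi\rangle|^2} \le \epsilon$, and since trace distance upper-bounds the statistical distance of the outcomes of any measurement, the bound follows with constant $1$.
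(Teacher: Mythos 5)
The paper does not prove this lemma; it is imported verbatim from~\cite{BBBV97} (Theorem~3.1) with only a citation, so there is no internal proof to compare against. Your rederivation is correct. The main argument --- split $|p_i(|\psi\rangle)-p_i(|\phi\rangle)|$ into a factor bounded by $\|M_i|E\rangle\|$ and a factor bounded via the triangle inequality, then Cauchy--Schwarz against the completeness relation $\sum_i M_i^\dagger M_i=\Id$ --- is a faithful generalization of the original BBBV computation (which is stated for a fixed basis measurement on unit vectors) to arbitrary POVMs on sub-normalized states, and the bookkeeping giving $\epsilon+\epsilon^2/2\le 4\epsilon$ is sound. The shorter trace-distance route you sketch at the end is also correct, in fact tighter (constant $1$ rather than $4$), and is arguably more in line with the tools the paper has already set up in its preliminaries, where trace distance is defined and noted to equal the optimal distinguishing probability; either version suffices to justify the lemma as stated.
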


\begin{lemma}[\cite{BBBV97} Theorem 3.3]\label{lem:bbbv2} Let $\As$ be a quantum query algorithm making $T$ queries to an oracle $O$. Let $\epsilon > 0$ and let $S$ be a set such that $q_S \leq \epsilon$. Let $O'$ be another oracle that is identical to $O$ on all points not in $S$. Let $|\phi\rangle,|\psi\rangle$ be the final state of $\As$ when given $O,O'$, respectively. Then $\||\phi\rangle-|\psi\rangle\|\leq\sqrt{T\epsilon}$
\end{lemma}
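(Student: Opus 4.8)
The plan is to run the standard BBBV hybrid argument, whose moral is that if $\As$ places only a tiny total amount of query weight on $S$ then it essentially cannot tell $O$ from $O'$. Fix $\As$ as an alternating sequence of unitaries and oracle calls, so that its final state on oracle $O$ is $|\phi\rangle=\Um_T\,O\,\Um_{T-1}\,O\cdots\Um_1\,O\,\Um_0|0\rangle$, and its final state on $O'$ is $|\psi\rangle$, defined the same way with every $O$ replaced by $O'$. For $0\le j\le T$ let $|h_j\rangle$ be the hybrid run that answers its first $j$ queries with $O$ and its remaining $T-j$ queries with $O'$, so that $|h_0\rangle=|\psi\rangle$ and $|h_T\rangle=|\phi\rangle$. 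The triangle inequality then gives $\||\phi\rangle-|\psi\rangle\|\le\sum_{j=1}^{T}\||h_j\rangle-|h_{j-1}\rangle\|$.

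Next I would bound the $j$-th term. The runs $|h_j\rangle$ and $|h_{j-1}\rangle$ differ only in that query $j$ is answered by $O$ in the former and by $O'$ in the latter; since queries $1,\dots,j-1$ are answered by $O$ in both, the state entering query $j$ is, in both runs, exactly $|\phi_j\rangle$ — the state of the genuine $O$-run of $\As$ just before its $j$-th query. (This is precisely why the hybrids must be arranged so that every query \emph{preceding} the switched one is answered by $O$: it guarantees that the quantities controlling the per-step error are the $q_x(|\phi_j\rangle)$ of the $O$-run, which are what the hypothesis $q_S\le\epsilon$ refers to.) Because everything applied after query $j$ — the remaining unitaries together with the remaining $O'$-queries — is the same norm-preserving operation in both runs, $\||h_j\rangle-|h_{j-1}\rangle\|=\|(O-O')|\phi_j\rangle\|$. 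Writing $|\phi_j\rangle=\sum_{x,y}\alpha_{x,y}|x,y\rangle$ (tensored with the registers the oracle leaves alone), the vector $(O-O')|\phi_j\rangle$ is supported entirely on $x\in S$ since $O=O'$ off $S$, and each surviving term $\alpha_{x,y}\big(|y\oplus O(x)\rangle-|y\oplus O'(x)\rangle\big)$ has squared norm at most $2|\alpha_{x,y}|^2$; summing gives $\|(O-O')|\phi_j\rangle\|^2\le 2\,q_S(|\phi_j\rangle)$, where $q_S(|\phi_j\rangle):=\sum_{x\in S}q_x(|\phi_j\rangle)$ and hence $\sum_{j=1}^{T}q_S(|\phi_j\rangle)=q_S$.

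Combining, $\||\phi\rangle-|\psi\rangle\|\le\sum_{j=1}^{T}\sqrt{2\,q_S(|\phi_j\rangle)}$, and by Cauchy--Schwarz this is at most $\sqrt{T}\cdot\sqrt{\sum_{j=1}^{T}2\,q_S(|\phi_j\rangle)}=\sqrt{2\,T\,q_S}\le\sqrt{2T\epsilon}$, using $q_S\le\epsilon$. This already establishes a bound of the claimed form; recovering the clean constant $\sqrt{T\epsilon}$ is a matter of oracle convention and careful bookkeeping of the $y\oplus O(x)$ collisions, and the resulting absolute-constant discrepancy is irrelevant to the downstream use of the lemma.

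I do not expect a genuine obstacle here — this is the textbook hybrid argument — but the two points that require care are: (i) the orientation of the hybrid sequence, so that each switched query is preceded entirely by the true $O$-run and the error at that step is governed by $q_S(|\phi_j\rangle)$ rather than by the query magnitudes of some unrelated intermediate run; and (ii) recognizing that the per-step errors accumulate \emph{additively} — because everything downstream of a given query is unitary and hence norm-preserving — rather than multiplicatively, and that the improvement from the naive $O(T\sqrt{\epsilon})$ to $O(\sqrt{T\epsilon})$ is exactly the Cauchy--Schwarz step, which is tight precisely when the query weight on $S$ is spread evenly across all $T$ queries.
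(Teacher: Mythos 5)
The paper does not give its own proof of this lemma; it imports it directly from \cite{BBBV97}. Your reconstruction is the standard hybrid argument that BBBV themselves use, and the architecture is right: orient the hybrids so that the $j$-th switched query is entered with the genuine $O$-run state $|\phi_j\rangle$, observe that everything downstream of that query is unitary so the per-step error is exactly $\|(O-O')|\phi_j\rangle\|$, and then sum with Cauchy--Schwarz.

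One local step is not quite right, though, and the error is in the direction of making the bound look \emph{better} than it is. You claim that each term $\alpha_{x,y}\bigl(|y\oplus O(x)\rangle - |y\oplus O'(x)\rangle\bigr)$ has squared norm at most $2|\alpha_{x,y}|^2$ and that ``summing gives'' $\|(O-O')|\phi_j\rangle\|^2\le 2\,q_S(|\phi_j\rangle)$. That summation implicitly treats the terms as pairwise orthogonal, but they are not: for fixed $x$ with $O(x)\neq O'(x)$, the image $y\oplus O(x)$ of one $y$ can coincide with $y'\oplus O'(x)$ for a different $y'$, so cross-terms appear. The clean way to do the per-step bound, and what the cross-terms actually allow, is: since $O$ and $O'$ agree off $S$, $(O-O')|\phi_j\rangle = (O-O')\Pi_S|\phi_j\rangle$, and since $O,O'$ are unitary and preserve the $x$-register,
\[
\|(O-O')|\phi_j\rangle\| \le \|O\,\Pi_S|\phi_j\rangle\| + \|O'\,\Pi_S|\phi_j\rangle\| = 2\|\Pi_S|\phi_j\rangle\| = 2\sqrt{q_S(|\phi_j\rangle)}\enspace,
\]
which after Cauchy--Schwarz gives $\||\phi\rangle-|\psi\rangle\|\le 2\sqrt{T\epsilon}$, not $\sqrt{2T\epsilon}$. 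You flag the constant mismatch yourself at the end, and indeed the exact constant ($1$, $\sqrt{2}$, or $2$) depends on BBBV's precise oracle/query-magnitude conventions and is irrelevant to every use of the lemma in this paper, so this does not affect anything downstream. But the specific claim ``squared norm at most $2|\alpha_{x,y}|^2$, summing gives $2q_S$'' should be replaced by the triangle-inequality bound above, since as written it would not survive a careful reading.
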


\section{Quantum Circuit Compilers and Black Box Purifiers}\label{sec:compilers}

Here, we give our notion of black box impossibility for circuit compilers. A \emph{property} of a quantum circuit is a function $\Ps(C)\in\{0,1\}$. We say that $C$ has property $\Ps$ if $\Ps(C)=1$. Equivalently, a property is a subset of all possible quantum circuits. Example properties include:
\begin{itemize}
	\item The ``all circuits'' property $P_{\mathbb{1}}$ defined as $P_{\mathbb{1}}(C)=1$ for all $C$.
	\item The unitary property $\Ps_{\sf Unitary}$, where $\Ps_{\sf Unitary}(C)=1$ if and only if $C$ only makes use of unitary gates\footnote{Note that $C$ may compute a unitary operation even if it contains non-unitary gates. In such a case we would say that $\Ps_{\sf Unitary}(C)=0$ despite $C$ being a unitary operation.}.
	\item The size property $\Ps_{{\sf Size}(S)}$ property, where $\Ps_{{\sf Size}(S)}(C)=1$ if and only if $C$ has size at most $S$. Likewise we can define the time property $\Ps_{{\sf Time}(T)}$.
	\item Fix a proper universal general gate set $\Gs_0$\footnote{Recall that being proper means that $\Gs_0$ can be divided into a universal unitary gate set and a universal measurement set.}. The ``normal form'' property $\Ps_{\sf Normal}$ (with respect to $\Gs_0$) is the property that (1) $\Gs_0\subseteq \Gs(C)$, and (2) that $\Gs(C)\setminus\Gs_0$ contains only unitary gates. In other words, a normal form circuit is a circuit whose non-unitary gates must come from $\Gs_0$, but the unitary gates could be arbitrary.
	\item Any combination of the above properties, such as being unitary and time $T$, which would be denoted $\Ps_{\sf Unitary}\cap\Ps_{{\sf Time}(T)}$.
\end{itemize}

\begin{definition}Let $\Ps,\Qs$ be two properties of quantum circuits. A $\Ps\Rightarrow\Qs$ compiler is a function $\Cs$ from circuits to circuits such that:
	\begin{itemize}
		\item {\bf Same gate sets}: For any quantum circuit $C$, $\Gs(\Cs(C))=\Gs(C)$\enspace\footnote{Recall that $C,\Cs(C)$ do not need to use all gates in their gate set.}.
		\item {\bf Close functionalities:} For any $C$ and any classical string $x$, $\Delta(\;C(x)\;,\;\Cs(C)(x)\;)\leq 1/3$.
		\item {\bf Property transforming:} For any $C$, if $\Ps(C)=1$, then $\Qs(\Cs(C))=1$.
	\end{itemize}
\end{definition}
	
The choice of the constant $1/3$ is arbitrary, and typically only affects an overall constant factor in the complexity of $\Cs(C)$, which would be absorbed into Big-Oh notation. Note that any compiler, by definition, maps normal-form circuits to normal-form circuits, since the input and output circuits have the same gate set. 

\begin{remark}Typically, one would want a circuit transformation to approximately preserve the action of $C$ on any \emph{quantum} input. However, as we are interested in low-bounds here, only asking for approximately preservation on classical inputs will make our results stronger.
\end{remark}

\paragraph{Purifiers.} With our notion of compilers in hand, we are now ready to give our notion of a purifier. A purifier transforms any quantum circuit with measurement gates (or more generally non-unitary gates) into a circuit with only unitary gates. A \emph{black-box} purifier, in some sense, successfully removes non-unitary gates, no matter what gate set the original circuit used. More precisely:

\begin{definition} Fix a universal general gate set $\Gs_0$. A \emph{black-box purifier} is a $\Ps_{\sf Normal}\Rightarrow\Ps_{\sf Unitary}$ compiler. We can also consider purifiers that maintain bounds on the time and/or space:
	\begin{itemize}
		\item For a function $S':\Z^+\rightarrow\Z^+$, a \emph{black-box $S'$-space purifier} is a $\Ps_{\sf Normal}\cap\Ps_{{\sf Space}(S)}\Rightarrow\Ps_{\sf Unitary}\cap\Ps_{{\sf Space}(S'(S))}$ compiler, for any $S>0$.
		\item For a function $T':\Z^+\rightarrow\Z^+$, a \emph{black-box $T'$-time purifier} is a $\Ps_{\sf Normal}\cap\Ps_{{\sf Time}(T)}\Rightarrow\Ps_{\sf Unitary}\cap\Ps_{{\sf Time}(T'(T))}$ compiler, for any $T>0$.
		\item For functions $S',T':(\Z^+)^2\rightarrow \Z^+$, a \emph{black-box $(S',T')$-space-time purifier} is a $\Ps_{\sf Normal}\cap\Ps_{{\sf Space}(S)}\cap\Ps_{{\sf Time}(T)}\Rightarrow\Ps_{\sf Unitary}\cap\Ps_{{\sf Space}(S'(S,T))}\cap\Ps_{{\sf Time}(T'(S,T))}$ compiler.
	\end{itemize}
\end{definition}

In other words, a black box purifier maps any quantum circuit that is in normal form into one that is unitary (and also in normal form, since compilers preserve gate sets). A black box purifier, in other words, removes the non-unitary gates, and must do so using the unitary part of the original gate set, \emph{regardless} of the choice of unitaries. We note, however, that our definition allows the purifier to depend arbitrarily on the gate set and circuit inputs. The only requirement is that it must work no matter the choice of unitaries in the original gate set. An $S'$ space, $T'$ time or $(S',T')$ space-time purifier must do this while outputting circuits with space $S'$ and/or time $T'$.

\paragraph{Purifies from the literature.} We now explain our notion of black box purifier captures existing approaches for removing intermediate measurements from quantum computation:
\begin{itemize}
	\item \cite{ITCS:GirRaz22}: This work takes the original quantum circuit $C$, and generates a new quantum circuit $C'$ as follows: it takes every measurement gate, and replaces it essentially with a random phase gate. The randomness for the phase gate is then derived by a suitable explicit pseudorandom generator (PRG). As a consequence, the ``unitary part'' of $C$ is entirely un-touched, and we only need to add the PRG computation which can be expressed in any universal gate set. As long as the gate set of $C$ is proper, this PRG computation can be expressed in terms of the unitary gates from the gate set. Thus, their result is black-box. 
	\item \cite{STOC:FefRem21,ICALP:GirRazZha21}: These works follow an approach where the computation is broken into a sequence of arbitrary channels $\Phi_1,\cdots\Phi_T$\enspace\footnote{In~\cite{STOC:FefRem21}, the channels are truly arbitrary. In~\cite{ICALP:GirRazZha21}, the channels are restricted to being unital. This restriction does not affect the discussion here.}. These channels are then expressed as matrices representing the transformations on the underlying Hilbert spaces, and then multiplied using a low-space multiplication algorithm to get the final result. Arbitrary channels can implement any arbitrary gate set, and the algorithm for low-space matrix multiplication can be implemented in any universal gate set. Thus we see that their results are also black-box. Note here that these results appear to be ``non-black-box'' in the sense that they make explicit use of the matrix representation of the gates. However, they are still ``black-box'' in our sense, as we allow the circuit $C'$ to depend arbitrarily on $C$ and it's gates, as long as the transformation is possible \emph{for any} starting $C$ with arbitrary gates.
	\item \cite{Bennett89}: this work is not a black-box purifier in our sense simply because it is a transformation on \emph{classical} circuits, and the goal is not to remove measurements but to make the circuit reversible. However, we can define an analogous notion of classical \emph{reversible-izers} that takes any classical circuit $C$ comprising both reversible and irreversible gates, and outputs a new circuit $C'$ comprising only of reversible gates. A reversible-izer would then be black-box as long as is worked for any set of starting gates.~\cite{Bennett89} would then be a black-box reversible-izer. Because of the similarities of purifying and reversible-izing, we would therefore expect any attempt to adapt~\cite{Bennett89} in order to purify quantum computations would result in a black-box purifier as well.
\end{itemize}
\begin{remark}\cite{STOC:FefRem21,ICALP:GirRazZha21,ITCS:GirRaz22} work in the uniform setting where the quantum circuits are generated uniformly by a classical Turing machine, and additional space and time constraints are placed on the Turing machine. Our notion of a purifier is more lax, as it does not place any resource constraints on how the circuits are generated. In our lower bound in Sections~\ref{sec:sep},~\ref{sec:main}, the starting low-space algorithm that contains measurements is easily seen to be uniformly generated. Since our lower bound will apply even to compilers producing non-uniform quantum circuits, the laxness of our purifier notion makes our results even stronger.
\end{remark}

\section{A separation between pure and general quantum computation}\label{sec:sep}

Before proving our main theorem, we first prove a slightly weaker theorem separating pure and general computation that captures the main technical challenges in our main theorem. 

In short, here we show that we can construct a gate relative to which there is a bit $b$ that can be computed by a low space-time algorithm with measurements, but no low space-time \emph{unitary} algorithm can compute $b$. Here, it is crucial that the algorithms are independent of the choice of gate. If the algorithm is allowed to depend on the choice of gate, then the algorithm can simply have $b$ hardcoded and output that $b$. As such, our result in this section does not result in a separation for circuits that are allowed to depend on the choice of gate. Note that our notion of a compiler is allowed to depend on the gates being used, so the result from this section is insufficient. We extend this result to gate-dependent algorithms, and therefore rule out time- and space-efficient black box purifiers, in Section~\ref{sec:main}.

Consider $n,t\in\Z^+$. Assume for simplicity that $t+1=2^m$ for an integer $m$. Let $\Psi=\{|\psi_i\rangle,|\phi_i\rangle\}_{i\in[t]}$ be a list of $2t$ pure quantum states over $\Hs_2^{\otimes n}$ that are orthogonal to $|0\rangle$; denote this space as $\Hs_2^{\otimes n}\setminus\{|0\rangle\}$. We will think of these states as each being Haar random over $\Hs_2^{\otimes n}\setminus\{|0\rangle\}$. Let $o\in\{0,1\}$ be a bit, which we will think of as being a uniform random bit. Define the following unitary function $\Om_{\Psi,\out}$ that acts on $\Hs_2^{\otimes(m+n+n)}$:
\[\arraycolsep=0pt\begin{array}{lllllll}
	\Om_{\Psi,\out}|0\rangle&|0^n\rangle&|0^n\rangle&=|0\rangle&|\psi_1\rangle&|\phi_1\rangle\\\vspace{5pt} \Om_{\Psi,\out}|0\rangle&|\psi_1\rangle&|\phi_1\rangle&=|0\rangle&|0^n\rangle&|0^n\rangle\\
	\Om_{\Psi,\out}|i\rangle&|\psi_i\rangle&|0^n\rangle&=|i\rangle&|\psi_{i+1}\rangle&|\phi_{i+1}\rangle&\text{ for }i\in[1,t-1]\\\vspace{5pt}
	\Om_{\Psi,\out}|i\rangle&|\psi_{i+1}\rangle&|\phi_{i+1}\rangle&=|i\rangle&|\psi_i\rangle&|0^n\rangle&\text{ for }i\in[1,t-1]\\
	\Om_{\Psi,\out}|t\rangle&|\psi_t\rangle&|z\rangle&=|t\rangle&|\psi_t\rangle&|z\oplus \out 0^{n-1}\rangle&\text{ for }z\in\{0,1\}^n
\end{array}\]
Meanwhile, $\Om_{\Psi,\out}$ preserves all states orthogonal to the states above. The goal will be, given oracle access to $\Om_{\Psi,\out}$, to compute $\out$.

An alternative view of the transformation is as:
\begin{align*}
	\Om_{\Psi,\out}=|0\rangle\langle 0|\otimes &\Big[\;\Id-\Big(\;|0^n\rangle|0^n\rangle-|\psi_1\rangle|\phi_1\rangle\;\Big)\Big(\;\langle 0^n|\langle 0^n|-\langle\psi_1|\langle\phi_1|\;\Big)\;\Big]\\
	+\sum_{i=1}^{t-1}|i\rangle\langle i|\otimes &\Big[\;\Id-\Big(\;|\psi_i\rangle|0^n\rangle-|\psi_{i+1}\rangle|\phi_{i+1}\rangle\;\Big)\Big(\;\langle \psi_i|\langle 0^n|-\langle\psi_{i+1}|\langle\phi_{i+1}|\;\Big)\;\Big]\\
	+|t\rangle\langle t|\otimes &\Big[\;\Big(\Id-|\psi_t\rangle\langle\psi_t|\Big)\otimes\Id+|\psi_t\rangle\langle\psi_t|\otimes \;\Big(\sum_z |z\oplus \out 0^{n-1}\rangle\langle z|\;\Big)\;\Big]
\end{align*}

\paragraph{A Low-Space-Time Circuit with Measurements.}

\begin{lemma}\label{lem:generalcircuitupper} There exists a general quantum circuit $C$ with space $O(n+m)=O(n+\log t)$ and time $O(nt)$ that computes $\out$ probability 1.
\end{lemma}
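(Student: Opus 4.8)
The whole point of the lemma is the feature that reset (measurement) gates buy us: the ``garbage'' state $|\phi_i\rangle$ produced at each level can simply be discarded and its register reused, so the computation never needs more space than one $(|\psi\rangle,|\phi\rangle)$ pair plus a level counter. The plan is to have $C$ maintain exactly the three registers on which $\Om := \Om_{\Psi,\out}$ acts --- an $m$-qubit counter register, and two $n$-qubit data registers $A$ and $B$ --- together with one extra output qubit, and to ``climb the ladder'' of transitions one level at a time, resetting $B$ between levels. The total space is then $m + 2n + O(1) = O(n+m)$ at every point, independent of $t$.

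\textbf{The circuit.} Initialize the registers to $|0\rangle\,|0^n\rangle\,|0^n\rangle$ and apply $\Om$ once, reaching $|0\rangle\,|\psi_1\rangle\,|\phi_1\rangle$ by the first transition. Then, for $i = 1,\dots,t-1$: (i) reset $B$ to $|0^n\rangle$ --- this uses no oracle call, only the measurement part of the gate set (e.g.\ ${\sf Trash}$ then ${\sf Init}$ on each of the $n$ qubits), and leaves $|i-1\rangle\,|\psi_i\rangle\,|0^n\rangle$; (ii) increment the counter register in place, giving $|i\rangle\,|\psi_i\rangle\,|0^n\rangle$; (iii) apply $\Om$, which by the third family of transitions produces $|i\rangle\,|\psi_{i+1}\rangle\,|\phi_{i+1}\rangle$. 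After this loop the state is $|t-1\rangle\,|\psi_t\rangle\,|\phi_t\rangle$. Finally reset $B$, increment the counter to $|t\rangle$, and apply $\Om$ one last time; by the fifth family of transitions with $z = 0^n$ this yields $|t\rangle\,|\psi_t\rangle\,|\out0^{n-1}\rangle$. One CNOT copies the leading qubit of $B$ (which holds $|\out\rangle$) into the output register.

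\textbf{Correctness and resources.} Just before the final measurement of $\Hs_{\sf out}$ the output qubit is the computational basis state $|\out\rangle$, so the measurement returns $\out$ with probability $1$; $C$ is a legitimate general quantum circuit because it uses the non-unitary reset gate, but its visible outcome is nevertheless deterministic. For resources: $C$ makes $t+1 = O(t)$ oracle calls; each of the $t$ resets of the $n$-qubit register $B$ costs $\Theta(n)$ gates; each of the $t$ in-place counter increments costs $O(m)$ gates via a ripple-carry incrementer with a constant number of ancilla reset after use; and the final CNOT is $O(1)$. Hence the running time is $O(t(n+m)) = O(tn)$ in the parameter regime of interest $t = 2^{O(n)}$ (equivalently $m = O(n)$), and the space is $m + 2n + O(1) = O(n+m) = O(n + \log t)$ throughout, since $B$ is reused rather than accumulated across the $t$ levels.

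\textbf{Main obstacle.} There is essentially no difficulty in this direction --- exhibiting this witness circuit \emph{is} the content of the lemma, and it is short. The only points worth stating with care are the bookkeeping ones: that resetting $n$ qubits costs $\Theta(n)$ gates in this model (this is exactly why $C$ has size $\Theta(tn)$, matching the $T := t\cdot S$ convention used later), that the counter increments do not dominate the running time, and that the reset gate is available precisely because a normal-form gate set contains a universal measurement set. All of the real work lies in the companion \emph{unitary} lower bound, where the Haar-randomness of $\Psi$ and the no-cloning bounds will be essential.
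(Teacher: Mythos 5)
Your proposal is correct and is essentially identical to the paper's construction: climb the ladder with a single $\Om$ call per level, resetting the $B$ register between levels, so that only one $(|\psi\rangle,|\phi\rangle)$ pair plus the $m$-qubit counter ever need to be stored. The only (cosmetic) differences are how the loop is indexed and that you CNOT the answer into a separate output qubit rather than discarding everything but the leading qubit of $B$; you also correctly note the small bookkeeping point, left implicit in the paper, that the $O(m)$ cost of each counter increment is absorbed into $O(n)$ because $m = O(\log t) = O(n)$ in the relevant parameter regime.
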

\begin{proof}Our general quantum circuit does the following:
	\begin{itemize}
		\item Initialize registers $|0\rangle|0^n\rangle|0^n\rangle$.
		\item Apply an $\Om$ gate, to obtain $|0\rangle|\psi_1\rangle|\phi_1\rangle$
		\item Repeat the following loop for $i=1,\dots,t$, where the state at the beginning of the loop is $|i-1\rangle|\psi_i\rangle|\phi_i\rangle$:
		\begin{itemize}
			\item Reset the state $|\phi_i\rangle$ to $|0^n\rangle$
			\item Add 1 (mod $t+1$) to the first register, which contains $i-1$. At this point, the state is $|i\rangle|\psi_i\rangle|0^n\rangle$
			\item Make a query to $\Om$. If $i<t$, the resulting state is now $|i\rangle|\psi_{i+1}\rangle|\phi_{i+1}\rangle$. If $i=t$, the state is now $|t\rangle|\psi_t\rangle|\out 0^{n-1}\rangle$.
		\end{itemize}
		\item Discard $|t\rangle,|\psi_t\rangle$, and $|0^{n-1}\rangle$, leaving $|\out\rangle$. Measure and output $\out$.
	\end{itemize}
Above, resetting $|\phi_i\rangle$ to $|0^n\rangle$ can be accomplished with $n$ qubit reset gates. As mentioned in Section~\ref{sec:intro}, qubit reset gates are space- and time-equivalent to many typical notions of measurement gates, assuming the ability to use the classical results of measurement to control later gates. However, we cannot reset $|\phi_i\rangle$ with a unital measurement: such a unital measurement will result in $|z\rangle$ for a random $z$, but then there is no way to overwrite $|z\rangle$ with $|0^n\rangle$. Please see Figure~\ref{fig:constrmain} for a depiction of our algorithm.

\begin{figure}[htb]
	\centering
	\includegraphics[width=.8\textwidth]{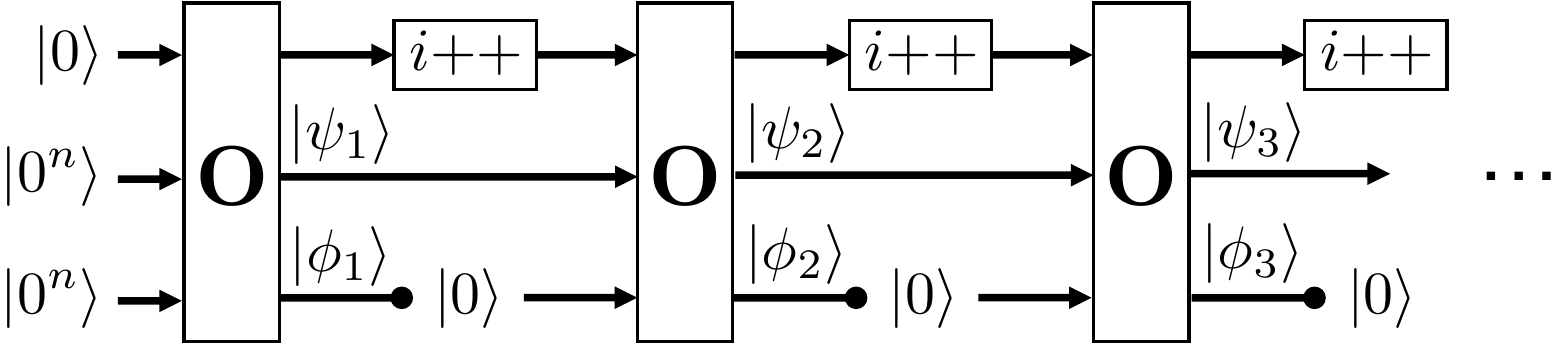}
	\captionsetup{width=.8\textwidth}
	\caption{\label{fig:constrmain} Our task that can be computed in low space and time with measurements, but requires large space or large time without.}
\end{figure}

The space of the algorithm above is $2n+m=O(n+\log(t))$, plus any extra space needed to discard and initialize new registers, which is constant. Thus, the overall space is $O(n+\log(t))$. For time, there are $t+1$ applications of the $\Om$ gate, plus in each of the $t$ iterations we have $n$ qubits are discarded and re-initialized, taking time $O(n)$ per iteration. This gives an overall number of gates equal to $O(nt)$.\end{proof}

\paragraph{No Low-Space-Time Circuit without Measurements.}

\begin{lemma}\label{lem:unitarycircuitlower} There exists a distribution over $\Psi,\out$ and constants $c',d'$ such that, for any unitary circuit over any gate set which makes at most $2^{d'n}$ queries to $\Om_{\Psi,\out}$ and runs in space $S\leq c'nt$, the probability of outputting $\out$ is less than $7/12$.\end{lemma}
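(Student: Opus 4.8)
The plan is to take the distribution to be the natural one --- the $2t$ states $|\psi_i\rangle,|\phi_i\rangle$ drawn independently and Haar-randomly from $\Hs_2^{\otimes n}\setminus\{|0\rangle\}$ (which has dimension $D=2^n-1$), and $\out$ a uniform independent bit --- and to replace the true oracle $\Om_{\Psi,\out}$ by a \emph{stateful simulator} $\Sm$ that never stores $\Psi$. The simulator keeps a database $\Ds$ holding $L$ fresh copies of each of the $2t$ states, where $L$ is chosen much larger than the query bound $q=2^{d'n}$, and answers a query by a unitary on (query register)$\otimes\Ds$ that: (i) uses reference copies from $\Ds$ to \emph{approximately} implement the reflections $P_{|\psi_i\rangle},P_{|\phi_i\rangle},P_{|0^n\rangle}$ needed to detect which branch of $\Om_{\Psi,\out}$ the query lies in; (ii) on a ``creation'' branch $i\to i{+}1$ (or $0\to1$), applies swap gates to move stored copies of $|\psi_{i+1}\rangle|\phi_{i+1}\rangle$ (resp.\ $|\psi_1\rangle|\phi_1\rangle$) out of $\Ds$ into the query register while moving the consumed $|\psi_i\rangle$, if present, from the query register into an empty slot of $\Ds$; (iii) performs the mirror-image moves on an ``annihilation'' branch; (iv) on the $|t\rangle|\psi_t\rangle|z\rangle$ branch XORs $\out\,0^{n-1}$ into the $z$-register; and (v) is the identity on the orthogonal complement. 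Thus $\Sm$ is itself unitary, and its only ``knowledge'' of $\Psi$ is the collection of copies in $\Ds$, which it hands out and takes back one at a time.

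First I would show that $\Sm$ is indistinguishable from $\Om_{\Psi,\out}$ for any circuit making at most $q$ queries, up to error $2^{-\Omega(n)}$. This rests on the two Haar facts from the preliminaries: that $L$ copies of a Haar-random state have the density matrix of the totally mixed state on $\Sym^L(\Hs_2^{\otimes n}\setminus\{|0\rangle\})$, and that an $(\ell{+}1)$-st copy can be extracted from $\ell\le L$ copies with probability at most $\ell/(D+\ell)$. From these I would get that (a) the approximate reflections differ from the true ones only on a component of magnitude $\poly(L)/D$, since a Haar-random direction is nearly orthogonal to anything the circuit could have prepared from earlier copies, and (b) each copy moved in from $\Ds$ is $\poly(L)/D$-close to a genuinely fresh Haar-random copy. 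Composing these per-query errors via Lemma~\ref{lem:bbbv1} and a hybrid over the $q$ queries gives total trace distance $2^{-\Omega(n)}$ between the two worlds --- \emph{provided} $d'$ is a small enough constant, which is exactly the role of the restriction $q\le 2^{d'n}$: it keeps $\poly(L)/D$ exponentially small and keeps $q$ such errors summing to $o(1)$ (and indeed beyond roughly $D$ queries the simulation genuinely fails, since one could then do tomography on $\Psi$, matching the remark that the bound only applies for running time $2^{O(n)}$).

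Next I would run the unitary circuit $\As$ in question (space $S$, at most $q$ queries) against $\Sm$; since $\As$ and $\Sm$ are both unitary, the joint state on $\As$'s registers and $\Ds$ stays pure throughout. Because $\out$ enters $\Sm$ only through branch (iv), the oracles for $\out=0$ and $\out=1$ agree except on inputs $|t\rangle|\psi_t\rangle|\cdot\rangle$, so a BBBV-style argument in the spirit of Lemma~\ref{lem:bbbv2} --- adapted to this ``controlled-on-$|t\rangle\otimes|\psi_t\rangle\langle\psi_t|$'' branch --- bounds $\As$'s bias in reporting $\out$ by a function of the total query magnitude $\As$ places on that branch. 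Amplitude on $|t\rangle|\psi_t\rangle$, in turn, can be created only by running the creation chain $0\to1\to\cdots\to t$, and each firing of creation branch $i\to i{+}1$ permanently removes a copy of $|\phi_{i+1}\rangle$ from $\Ds$; annihilation branches restore copies, but each requires $\As$ to \emph{present} $|\psi_{i+1}\rangle|\phi_{i+1}\rangle$, which it cannot fabricate from fewer copies by no-cloning. I would therefore track a potential $\Phi$, the expected number of full $\phi$-slots remaining in $\Ds$: it starts at $tL$, and --- since reaching the last branch at all requires passing through all $t$ forward steps, and the amplitude that reaches it is essentially $\As$'s bias --- it must drop by $\Omega(t)$ for $\As$ to achieve a constant bias. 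Finally, because $\As\otimes\Ds$ is pure and $\Ds$ began as $L$ genuine copies of each Haar-random state, those copies cannot be destroyed by unitary evolution, so copies missing from $\Ds$ must now be carried, possibly in disguise, inside $\As$'s registers; holding the equivalent of $\Omega(t)$ independent Haar-random $n$-qubit states forces the reduced state of $\As$'s registers to be supported on a subspace of dimension $2^{\Omega(tn)}$, hence $S=\Omega(tn)$. Contrapositively, if $S\le c'nt$ for a small enough constant $c'$, then $\Ds$ can shrink by only a small constant multiple of $t$, so $\As$'s bias against $\Sm$ is bounded by a constant we can make well below $1/12$; adding back the $2^{-\Omega(n)}$ simulation error still gives $\Pr[\As^{\Om_{\Psi,\out}}\to\out]<1/2+1/12=7/12$.

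I expect the main obstacle to be the accounting in the third step: proving rigorously that $\Sm$'s database must be depleted in proportion to the circuit's net forward progress toward $|\psi_t\rangle$ even when $\As$ queries arbitrary superpositions and keeps the $|\phi_i\rangle$ hidden behind complicated unitaries (so that no $|\phi_i\rangle$ is literally sitting in memory), and doing this while controlling how the $2^{-\Omega(n)}$ per-query simulation errors accumulate so that they never overwhelm the $\Omega(t)$ signal. Pinning down the right invariant --- one provably (nearly) conserved by the joint unitary evolution yet provably decreased along any route to constant bias, and robustly quantified in terms of symmetric-subspace dimensions --- is where the real work lies; the rest is a careful but essentially routine combination of the Haar-random, no-cloning, and BBBV facts already recorded in the preliminaries.
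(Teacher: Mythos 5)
Your high-level plan matches the paper's architecture quite closely: replace $\Om_{\Psi,\out}$ with a stateful simulator that holds copies of the Haar-random states, answer queries by swapping copies in and out and implementing the needed reflections via symmetric-subspace projections (the Ji--Liu--Song technique), convert ``computes $\out$'' into ``places query weight on $|\psi_t\rangle$'' via BBBV, and finish with a dimension/rank argument exploiting that the joint state of circuit and simulator stays pure. Those are all the paper's ingredients.

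The gap is precisely the step you flag as the hard part, and your proposed substitute for it would not close the argument. You propose to track an \emph{expected} potential $\Phi$ (the expected number of full $\phi$-slots) and argue it drops by $\Omega(t)$. An expectation bound cannot deliver what the final dimension argument needs: you must conclude that, \emph{on the event that the circuit succeeds}, a copy of \emph{every} $\phi_i$ ($i=1,\dots,t$) is simultaneously missing from the database, so that the post-selected joint state lives in a subspace of dimension $\prod_i \binom{D+\ell-2}{\ell-1}\cdot 2^S$ rather than $\prod_i\binom{D+\ell-1}{\ell}$. An expectation decrease of $\Omega(t)$ is compatible with the deficit being concentrated on a few indices, or spread diffusely across branches where the circuit fails. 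The paper instead proves an \emph{exact} per-branch conservation law on the measured counters: if $c_i$ is the net number of $\psi_i$'s handed out and $d_i$ the net number of $\phi_i$'s handed out, then with probability one over the measurement of the database, $d_{i+1}=d_i-c_i$ (Lemma~\ref{lem:countconstraints} / Corollary~\ref{cor:countconstraints2}). Together with no-cloning ($c_i,d_i\geq 0$) and success forcing $c_t\geq 1$, this chain of equalities propagates to give all $d_i\geq 1$ simultaneously, which is exactly the subspace collapse the rank bound (Lemma~\ref{lem:rank}) needs.

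Establishing that exact law is also where the paper's proof diverges methodologically from your sketch. You argue indistinguishability of the simulator purely from approximate Haar facts (each action is ``$\poly(L)/D$-close''). The paper instead first shows that the oracle can be modified to \emph{exactly} record the net counters $c_i,d_i$ in external registers with \emph{zero} distinguishing advantage (Lemma~\ref{lem:count}), using a purified-oracle/Fourier-basis argument that exploits phase invariance of the Haar distribution: in the Fourier basis the phase oracles become increment operations, and the counter is then literally a function of the purified oracle register, so it can be computed and uncomputed invisibly. This zero-error step is what makes the conservation law hold exactly rather than only approximately, and it is qualitatively different from a per-query hybrid. Only afterward does the paper introduce the copy-based database ($\Om^{(2)}$, exact equivalence in Lemma~\ref{lem:swap}) and then the approximate projections ($\Om^{(3)}$, Lemma~\ref{lem:swap2}). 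Relatedly, your claim that the cumulative simulation error is $2^{-\Omega(n)}$ is not achievable for the projection-approximation part: the Ji--Liu--Song bound gives error $O(Q/\sqrt{\ell})$, polynomial in $T$, and since the BBBV step only guarantees a success probability of order $T^{-2}$, the paper must pick $\ell=\Omega(T^6)$ to make the simulation error subdominant. Conflating those two error scales would break the final accounting. So: the architecture is right, but you are missing the exact counter invariant and the phase-invariance/Fourier mechanism that proves it, and without those the ``potential'' argument doesn't reach the required per-branch conclusion.
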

The constant $7/12$ above is arbitrary, as long as it is strictly between $1/2$ and $2/3$ (the latter being the arbitrary constant in the definition of a black box purifier). Above, note that the time $T$ of the circuit is at least the number of queries to $\Om_{\Psi,\out}$. As a consequence, there is no time $T\leq 2^{d' n}$ and space $S\leq c'nt$ circuit that can guess $\out$ with probability at least $7/12$. This holds independent of the gate set that the algorithm uses, as long as the gates do not depend of the choice of $\Psi,\out$. This independence from the gate set will be important for our ultimate result in Section~\ref{sec:main}. There, we will construct from any low space-time purifier a circuit which contradicts Lemma~\ref{lem:unitarycircuitlower}. Our circuit, however, will have to make some computations that are potentially very expensive, but fortunately are independent of $\Psi,\out$. In order for these extra computations to still contradict Lemma~\ref{lem:unitarycircuitlower}, we state the lemma as only counting queries to $\Om_{\Psi,\out}$ but being otherwise independent of the actual time-complexity of the circuit.

\noindent The rest of this section will be devoted to proving Lemma~\ref{lem:unitarycircuitlower}. 

\paragraph{Roadmap.} We will assume an algorithm $\As$ running in time much less than $2^{O(n)}$ with probability of outputting $\out$ being at least $7/12$. We will show that such an algorithm must have space $\Omega(nt)$. First, we will show that any unitary algorithm that outputs $\out$ with significant probability must actually be able to produce $|\psi_t\rangle$. This follows from standard quantum query techniques. Then, we will design a simulator which approximately simulates $\Psi$ using only several copies of the $|\psi_i\rangle,|\phi_i\rangle$ instead of the full descriptions of these states. This simulation uses ideas from~\cite{C:JiLiuSon18}, and will cause some error which will be small assuming the unitary algorithm's running time is small. As the unitary algorithm is run, some of these copies will be provided to the algorithm, decreasing the storage of the simulator. We show essentially that the simulator must have given at least one copy of $|\phi_i\rangle$ for each $i$ to the algorithm in order for the algorithm to have obtained $|\psi_t\rangle$. This implies an upper bound on the space of the simulator at the end of the computation. Finally, we observe that the total joint storage of the simulator and the unitary algorithm cannot drop below the initial simulator storage. This then implies a lower bound on the space of the unitary algorithm.

\subsection{From Computing $\out$ to computing $|\psi_t\rangle$}

\begin{lemma}Fix any $\Psi$. Let $\As$ be a unitary algorithm making queries to $\Om_{\Psi,\out}$ running in time $T$ and space $S$, such that $\Pr_{o\gets\{0,1\}}[\As^{\Om_{\Psi,\out}}()=o]\geq 7/12$. Then there is a another unitary algorithm $\As_1$ running in time $\leq T$ and space $\leq S$ that attempts to output $|\psi_t\rangle$ with the following guarantee. If $\rho$ is the final state of $\As_1$ when making queries to $\Om_{\Psi,\out}$ for a random $\out$, then $\langle\psi_t|\rho|\psi_t\rangle\geq (48T)^{-2}$.
\end{lemma}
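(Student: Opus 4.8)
The plan is a standard BBBV-style hybrid argument. The point is that $\out$ affects $\Om_{\Psi,\out}$ only on one small subspace, so $\As$ can only learn $\out$ by querying with non-negligible amplitude on that subspace; and any such query has its second oracle register equal to $|\psi_t\rangle$, so truncating $\As$ just before such a query and reading out that register produces $|\psi_t\rangle$ with the claimed fidelity. We take $\As_1$ to do exactly this.

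\emph{Step 1: reduce to distinguishing $\Om_{\Psi,0}$ from $\Om_{\Psi,1}$.} The oracles $\Om_{\Psi,0}$ and $\Om_{\Psi,1}$ act identically except on the $(m+n+n)$-qubit subspace spanned by $\{|t\rangle|\psi_t\rangle|z\rangle : z\in\{0,1\}^n\}$, on which $\Om_{\Psi,0}$ is the identity and $\Om_{\Psi,1}$ flips the first qubit of the last register. Hence $\Om_{\Psi,1}=W\,\Om_{\Psi,0}$ for a unitary $W$ with $W-\Id$ supported on the projector $\Pi:=|t\rangle\langle t|\otimes|\psi_t\rangle\langle\psi_t|\otimes\Id$ and $\|W-\Id\|\le 2$. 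The hypothesis $\Pr_{\out\gets\{0,1\}}[\As^{\Om_{\Psi,\out}}()=\out]\ge 7/12$ rearranges (using that statistical distance between two $\{0,1\}$-valued distributions is $|\Pr[{=}1]-\Pr'[{=}1]|$) to: the output distributions of $\As^{\Om_{\Psi,0}}$ and $\As^{\Om_{\Psi,1}}$ are at statistical distance at least $1/6$.

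\emph{Step 2: lower-bound the query mass on $\Pi$.} For $b\in\{0,1\}$, let $|\phi^{(b)}_\ell\rangle$ be the state of $\As$ just before its $\ell$-th query to $\Om_{\Psi,b}$, let $q^{(b)}_\ell:=\|\Pi\,|\phi^{(b)}_\ell\rangle\|^2$ (with $\Pi$ on the qubits designated as the query-$\ell$ oracle input), and $q^{(b)}:=\sum_{\ell=1}^{T}q^{(b)}_\ell$. A telescoping hybrid that switches the oracle from $\Om_{\Psi,0}$ to $\Om_{\Psi,1}$ one query at a time gives, for the final states $|\Phi_0\rangle,|\Phi_1\rangle$, $\||\Phi_0\rangle-|\Phi_1\rangle\|\le\sum_{\ell=1}^{T}\|(W-\Id)\,\Om_{\Psi,0}|\phi^{(0)}_\ell\rangle\|\le 2\sum_{\ell=1}^{T}\sqrt{q^{(0)}_\ell}\le 2\sqrt{T\,q^{(0)}}$, using $\|W-\Id\|\le2$, that $\Om_{\Psi,0}$ commutes with $\Pi$, and Cauchy–Schwarz; this is the unitary-oracle analogue of Lemma~\ref{lem:bbbv2}. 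Combining with Lemma~\ref{lem:bbbv1} and Step 1, $8\sqrt{T\,q^{(0)}}\ge 1/6$, so $q^{(0)}\ge 1/(48^2 T)$; running the hybrid in the other direction gives $q^{(1)}\ge 1/(48^2 T)$ as well, hence $\E_{\out\gets\{0,1\}}\big[q^{(\out)}\big]\ge 1/(48^2 T)$.

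\emph{Step 3: extract $|\psi_t\rangle$.} Since $q^{(\out)}=\sum_{\ell}q^{(\out)}_\ell$, there is an index $\ell^\star=\ell^\star(\Psi)\in[T]$ with $\E_{\out}[q^{(\out)}_{\ell^\star}]\ge 1/(48^2T^2)=(48T)^{-2}$. Define $\As_1$ to run $\As$ for exactly $\ell^\star-1$ oracle queries (and the gates in between) and then stop, declaring as its output register the $n$ qubits that serve as the second oracle input of query $\ell^\star$. If $\rho$ is the resulting reduced state on that register when querying $\Om_{\Psi,\out}$ for random $\out$, then, because $\Pi\preceq \Id\otimes|\psi_t\rangle\langle\psi_t|\otimes\Id$,
\[\langle\psi_t|\rho|\psi_t\rangle=\E_{\out}\,\|(\Id\otimes|\psi_t\rangle\langle\psi_t|\otimes\Id)\,|\phi^{(\out)}_{\ell^\star}\rangle\|^2\ \ge\ \E_{\out}[q^{(\out)}_{\ell^\star}]\ \ge\ (48T)^{-2}.\]
Here $\As_1$ runs in time at most $T$ (it is a prefix of $\As$) and occupies the same space $\le S$ as $\As$, and it is allowed to depend on the fixed $\Psi$ in its choice of $\ell^\star$. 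I expect the only point needing care to be Step 2: Lemmas~\ref{lem:bbbv1}–\ref{lem:bbbv2} are stated for classical-function oracles, so I would spell out the short telescoping argument directly for the unitary oracles $\Om_{\Psi,b}$ and track constants so as to land exactly at the stated bound $(48T)^{-2}$.
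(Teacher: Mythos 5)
Your proposal is correct and follows essentially the same approach as the paper's proof: a BBBV-style hybrid argument showing that $\As$ must place total query mass $\geq(48^2T)^{-1}$ on the subspace where $\Om_{\Psi,0}$ and $\Om_{\Psi,1}$ differ (equivalently on $|\psi_t\rangle$), then truncating $\As$ just before the query carrying at least a $1/T$ fraction of that mass and declaring that query's second register as the output. Your version is if anything slightly more careful on two points the paper handles informally — it spells out the telescoping bound directly for the unitary oracle rather than quoting Lemma~\ref{lem:bbbv2} (which is stated for classical-function oracles), and it explicitly averages the query-mass bound over $\out\in\{0,1\}$ — but the decomposition, the use of Lemma~\ref{lem:bbbv1}, and the arithmetic landing on $(48T)^{-2}$ all match the paper.
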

\begin{proof}Since $\As$ runs in time at most $T$, it makes at most $T$ queries to $\Om$. We zoom in on the basis states of the queries to $\Om$ where the first register is $|t\rangle$. Pick any basis for $\Hs_2^{\otimes n}$ which contains $|\psi_t\rangle$ as the first element, and look at the queries to $\Om$ in that basis. Then we see that $\Om$, when restricted to the first register being $|t\rangle$, is implementing a quantum query to a classical function, namely the function that maps $0$ (corresponding to the first basis element being $|\psi_t\rangle$) to $\out$, and everything else to 0. Let the total query weight on the basis element $|\psi_t\rangle$ be $\epsilon$. We now switch from $\Om_{\Psi,\out}$ to $\Om_{\Psi,0}$, which contains no information about $\out$. This means $\As$ when querying $\Om_{\Psi,0}$ outputs $\out$ with probability $1/2$. By Lemma~\ref{lem:bbbv2}, this change moves the output state of $\As$ by at most $\sqrt{T\epsilon}$. Then applying Lemma~\ref{lem:bbbv1} shows that the output distribution is affected by at most $4\sqrt{T\epsilon}$. In other words, $|p-1/2|\leq 4\sqrt{T\epsilon}$. By our assumption that $p\geq 7/12$, we have that $\epsilon\geq (48^2T)^{-1}$. 
	
This means there is some query $j\in[T]$ such that the query weight on $|\psi_t\rangle$ is at least $\epsilon/T\geq(48T)^{-2}$. We therefore define $\As_1$ as the algorithm which runs $\As$ until query $j$, and outputs the middle register of the query. See Figure~\ref{fig:computephit}.
\begin{figure}[htb]
	\centering
	\includegraphics[width=.8\textwidth]{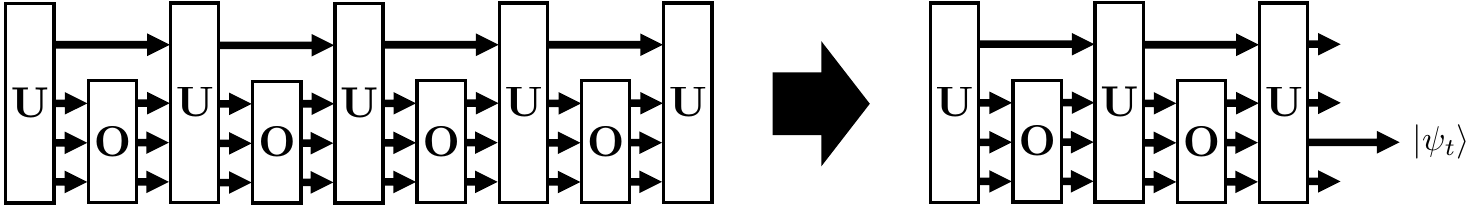}
	\captionsetup{width=.8\textwidth}
	\caption{\label{fig:computephit} Converting from an algorithm $\As$ (top) which computes $\out$ to an algorithm $\As_1$ (bottom) which computes $|\psi_t\rangle$.}
\end{figure}
By the analysis above, if $\rho$ is the output state of $\As_1$, then $\langle\psi_t|\rho |\psi_t\rangle\geq (48T)^{-2}$. Observe that the number of queries $\As_1$ makes and the space of $\As_1$ are at most the query count and space for $\As$. If $\As$ is unitary, then so is $\As_1$.\end{proof}

\noindent From now on, we will assume an algorithm $\As$ which outputs $|\psi_t\rangle$.

\subsection{Simulating $\Om_{\Psi,\out}$: Counting}\label{sec:counting}

We now gradually build up a simulator which simulates $\Om_{\Psi,\out}$ to the algorithm. Our ultimate simulator will only use several copies of each of the $|\psi_i\rangle$ and $|\phi_i\rangle$.

As a first step, we show that we can simulate $\Om_{\Psi,\out}$ as specified above, but we can record in some ancilla registers the net number of copies of each of the $|\psi_i\rangle,|\phi_i\rangle$ that have given out/consumed\footnote{By ``net'', we mean the difference between the number given out minus the number consumed.}.

Let $\Hs_\infty$ be the infinite-dimensional Hilbert space spanned by $\{|x\rangle\}_{x\in\Z}$. Note that we use an infinite dimensional space for simplicity, and we can make the space finite-dimensional by instead using the Hilbert space spanned by $\{-U,-U+1,\cdots,U-1,U\}$ for some $U\geq T$.

For $i=1,\dots,t$ let $\Cs_i$ be a copy of $\Hs_\infty$ and likewise for $i=1,\dots,t-1$ let $\Ds_i$ be a copy of $\Hs_\infty$. Each of the $\Cs_i,\Ds_i$ will be initialized to $|0\rangle$. Let $\Cs$ be the joint system of all $\Cs_i,\Ds_i$. We will write the basis states of $\Cs$ as $|\cv\rangle$ where $\cv\in\Z^{2t-1}$.

Let ${\sf Incr}_{\Cs_i}$ and ${\sf Decr}_{\Cs_i}$ be the operation on $\Cs$ which applies the map $|j\rangle\mapsto|j+1\rangle$ and $|j\rangle\mapsto|j-1\rangle$, respectively, to register $\Cs_i$. Likewise define ${\sf Incr}_{\Ds_i}$ and ${\sf Decr}_{\Ds_i}$.

Now define the following unitary $\Om^{(1)}_{\Psi,\out}$ that acts on $\Hs_2^{\otimes(m+n+n)}\otimes \Cs$:
\[\arraycolsep=0pt\begin{array}{lllllllll}
	\Om^{(1)}_{\Psi,\out}|0\rangle&|0^n\rangle&|0^n\rangle&|\cv\rangle&=|0\rangle&|\psi_1\rangle&|\phi_1\rangle&{\sf Incr}_{\Cs_1}{\sf Incr}_{\Ds_1}|\cv\rangle\\\vspace{5pt} \Om^{(1)}_{\Psi,\out}|0\rangle&|\psi_1\rangle&|\phi_1\rangle&|\cv\rangle&=|0\rangle&|0^n\rangle&|0^n\rangle&{\sf Decr}_{\Cs_1}{\sf Decr}_{\Ds_1}|\cv\rangle\\
	\Om^{(1)}_{\Psi,\out}|i\rangle&|\psi_i\rangle&|0^n\rangle&|\cv\rangle&=|i\rangle&|\psi_{i+1}\rangle&|\phi_{i+1}\rangle&{\sf Decr}_{\Cs_{i}}{\sf Incr}_{\Cs_{i+1}}{\sf Incr}_{\Ds_{i+1}}|\cv\rangle&\text{ for }i\in[1,t-1]\\\vspace{5pt}
	\Om^{(1)}_{\Psi,\out}|i\rangle&|\psi_{i+1}\rangle&|\phi_{i+1}\rangle&|\cv\rangle&=|i\rangle&|\psi_i\rangle&|0^n\rangle&{\sf Incr}_{\Cs_{i}}{\sf Decr}_{\Cs_{i+1}}{\sf Decr}_{\Ds_{i+1}}|\cv\rangle&\text{ for }i\in[1,t-1]\\
	\Om^{(1)}_{\Psi,\out}|t\rangle&|\psi_t\rangle&|z\rangle&|\cv\rangle&=|t\rangle&|\psi_t\rangle&|z\oplus \out 0^{n-1}\rangle&|\cv\rangle&\text{ for }z\in\{0,1\}^n
\end{array}\]
In other words, any time $\Om^{(1)}_{\Psi,\out}$ outputs a $|\psi_i\rangle$, it increments the corresponding $\Cs_i$ register; and analogously increments $\Ds_i$ whenever it outputs a $|\phi_i\rangle$. On the other hand, any time $\Om^{(1)}_{\Psi,\out}$ must absorb a $|\psi_i\rangle$ or $|\phi_i\rangle$, it decrements the analogous register.

Above, we will think of $\Cs$ as being in the private state of a simulator, inaccessible to the algorithm. We now demonstrate that $\Om^{(1)}_{\Psi,\out}$ is actually indistinguishable from $\Om_{\Psi,\out}$, for certain distributions over $\Psi$. Note that while $\Om^{(1)}_{\Psi,\out}$ and $\Om_{\Psi,\out}$ may look like they act identically on $\Hs_2^{\otimes(m+n+n)}$, the fact that $\Om^{(1)}_{\Psi,\out}$ is modifying external registers based on the contents of $\Om^{(1)}_{\Psi,\out}$ means that the operations are, in fact, not identical on $\Hs_2^{\otimes(m+n+n)}$. Essentially, by adding/subtracting from $\Cs$, $\Om^{(1)}_{\Psi,\out}$ may split different branches of the computation, eliminating interference that may be present with $\Om_{\Psi,\out}$. In particular, if $\Psi$ is fixed and known to the algorithm, it is not hard to design an algorithm which can successfully distinguish between $\Om^{(1)}_{\Psi,\out}$ and $\Om_{\Psi,\out}$. We show, however, that if $\Psi$ is chosen from a ``sufficiently random'' collection of states, then such distinguishing is not possible.

\begin{definition}We say a distribution $\Ds$ over states in $\Hs_2^{\otimes n}$ is \emph{phase invariant} if $\Ds$ is identical to a distribution of the following form:
	\begin{itemize}
		\item Sample a state $|\psi'\rangle$ from some distribution $\Ds'$.
		\item Choose for each $x\in\{0,1\}^n$ a uniform real number $\tau_x\in(-\pi,\pi]$
		\item Apply to $|\psi'\rangle$ the operation which maps $|x\rangle\mapsto e^{i\tau_x}|x\rangle$ for each $x$.
		\item Output the resulting state $|\psi\rangle$.
	\end{itemize}
We say $\Ds$ is $M$-phase invariant if $\tau_x$ is instead uniform on the multiples of $2\pi/M$ in $(-\pi,\pi]$.
\end{definition}
We note that any phase invariant distribution is also $M$-phase invariant, since we can absorb the ``extra'' randomness of $\tau_x$ into the distribution of $|\psi'\rangle$. We also observe that Haar random states are phase invariant.

\begin{lemma}\label{lem:count} Let $\As$ be a time $T$ algorithm. Then for any $M\geq 2T+1$ and any $M$-phase invariant distribution $\Ds$, if $\Psi\gets\Ds^{2t-1}$ (meaning the $|\psi_i\rangle$ and $|\phi_i\rangle$ are sampled from $\Ds$), then
	\[\E_\Psi\left[\As^{\Om_{\Psi,\out}}()\right]=\E_\Psi\left[\As^{\Om^{(1)}_{\Psi,\out}}()\right]\]
\end{lemma}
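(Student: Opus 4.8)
\textbf{Proof proposal for Lemma~\ref{lem:count}.}

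The plan is to argue that the extra registers $\Cs$ maintained by $\Om^{(1)}_{\Psi,\out}$ are, from the algorithm's point of view, perfectly correlated with information that is already ``spread out'' in the random phases of the states in $\Psi$, so that tracing them out changes nothing. Concretely, I would introduce a third hybrid oracle $\Om^{(2)}_{\Psi,\out}$ that behaves like $\Om_{\Psi,\out}$ but, instead of writing counters into $\Cs$, writes into $\Cs$ a \emph{copy of the global phase vector} that would have been applied to the various $|\psi_i\rangle,|\phi_i\rangle$ slots. The key observation is that a phase-invariant distribution lets us re-sample: drawing $|\psi\rangle\gets\Ds$ and then handing the algorithm $|\psi\rangle$ is the same as drawing $|\psi'\rangle\gets\Ds'$, drawing a fresh random phase vector $\tau$, applying it, and recording $\tau$ in a separate register that the algorithm never sees. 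So I would first rewrite the experiment with $\Om_{\Psi,\out}$ as one in which a simulator holds, for each of the $2t-1$ states, its ``base'' state $|\psi'\rangle$ together with the secret phase vector $\tau$; each time the oracle would output $|\psi_i\rangle$ it actually outputs $\Phi_{\tau^{(i)}}|\psi'_i\rangle$, and each time it would absorb $|\psi_i\rangle$ it applies $\Phi_{\tau^{(i)}}^\dagger$ first.

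Next I would show $\Om^{(1)}$ is equivalent to this picture by exhibiting an \emph{isometry on the simulator's private space} that maps the counter register $\Cs_i$ to (a sub-register of) the phase-record register. The point is that the net count in $\Cs_i$ is exactly a deterministic function of the sequence of oracle operations, and conversely, conditioned on the algorithm's view, the only thing the phase-record register reveals beyond the count is the actual value of $\tau^{(i)}$ — but $\tau^{(i)}$ was chosen independently of everything the algorithm touches, so it factors out as an independent uniform register on the simulator side. Since $M\geq 2T+1 > $ (number of times any single $|\psi_i\rangle$ or $|\phi_i\rangle$ can be emitted or absorbed, which is at most $T$ in each direction), the finite phase group $\Z/M\Z$ is large enough that incrementing/decrementing the count never ``wraps'' in a way that would cause two distinct count-histories to collide modulo the available phase resolution; this is precisely where the hypothesis on $M$ is used. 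Formally, I would check that for any fixed $\Psi$-base and any algorithm state, applying $\Om^{(1)}$ and then tracing out $\Cs$ gives the same density matrix as applying $\Om_{\Psi,\out}$ and tracing out the simulator's (independent) phase registers, and then take the expectation over $\Psi\gets\Ds^{2t-1}$.

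Alternatively — and this may be the cleaner route to actually write down — I would argue directly that the reduced state on the algorithm's registers evolves identically under the two oracles. Write the algorithm's state before a query as $\sum_{\xv}\alpha_{\xv}|\xv\rangle$ in the basis adapted to $\Psi$ (so the relevant ``trigger'' basis states are $|0\rangle|0^n\rangle|0^n\rangle$, $|0\rangle|\psi_1\rangle|\phi_1\rangle$, $|i\rangle|\psi_i\rangle|0^n\rangle$, etc.). Under $\Om_{\Psi,\out}$ these branches get mapped around and \emph{can interfere} on the next query. Under $\Om^{(1)}$ they get the same image on $\Hs_2^{\otimes(m+n+n)}$ but are tagged with distinct $\Cs$-values. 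I would show that, after averaging over the random phases in $\Ds$, the cross terms between any two branches that $\Om^{(1)}$ tags differently already vanish in the $\Om_{\Psi,\out}$ experiment — because such branches differ in how many copies of some $|\psi_i\rangle$ they contain, and the independent uniform phase on $|\psi_i\rangle$ (refined to multiples of $2\pi/M$, which still averages a nonzero net power of $e^{i\tau}$ to zero as long as the net power is a nonzero integer of absolute value $<M$) kills the expectation of the interference. Thus the decohering effect of the counters is invisible on average, giving $\E_\Psi[\As^{\Om_{\Psi,\out}}()]=\E_\Psi[\As^{\Om^{(1)}_{\Psi,\out}}()]$. The main obstacle I anticipate is bookkeeping: making the ``net power of $e^{i\tau_x}$'' argument rigorous requires tracking, across all $T$ queries and all pairs of computational branches, exactly which state-slots appear with which multiplicities, and verifying the net exponent is always a nonzero integer bounded by $M$ whenever the two branches carry different $\Cs$-labels — this is where the $M\ge 2T+1$ bound must be deployed carefully, and where an induction on the number of queries is the natural structuring device.
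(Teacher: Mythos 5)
Your second (``direct interference cancellation'') route is a genuine alternative to the paper's argument and can be made to work, and you have correctly identified where the hypothesis $M \ge 2T+1$ must enter. The paper instead \emph{purifies} the random phases into an oracle-private register $\Fs$ (a uniform superposition over the functions $F_i,G_i:\{0,1\}^n\to\Z_M$), conjugates each oracle call by $\QFT_\Fs$, and observes that in the Fourier basis the increment/decrement operations on $\Cs$ become deterministic functions of the increments already being recorded in $\Fs$: concretely, at all times $\Cs_i$ equals $\sum_x \Fs_i(x)$ over the integers, where the no-wraparound hypothesis $M\ge 2T+1$ is exactly what keeps each $\Fs_i(x)$ in $[-\lfloor(M-1)/2\rfloor,\lceil(M-1)/2\rceil]$. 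Since $\Cs$ is then computable from (and uncomputable into) $\Fs$, which is never seen by the algorithm, the two oracles are perfectly indistinguishable. What this buys over your approach is that it dissolves the bookkeeping problem you flag as the main obstacle: the identity $\Cs_i=\sum_x\Fs_i(x)$ is checked per-query by inspection, with no induction over histories or pairs of branches.

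The corresponding fact you would need to pin down to complete your cross-term argument is precisely this counting identity, translated out of the Fourier picture: for any query history $h$, the counter value $\cv_i(h)$ equals $\sum_x e_{i,x}(h)$, where $e_{i,x}(h)$ is the net exponent of $\tau^{(i)}_x$ accumulated along $h$. With this in hand, your argument closes: if two histories $h,h'$ land on different $\Cs$-values then $\sum_x(e_{i,x}(h)-e_{i,x}(h'))\neq 0$ for some $i$, hence some single $e_{i,x}(h)-e_{i,x}(h')$ is a nonzero integer of magnitude at most $2T<M$, and averaging $\tau^{(i)}_x$ over multiples of $2\pi/M$ kills that cross-term. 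Without stating that counting identity, your phrasing (``the net exponent is always a nonzero integer bounded by $M$ whenever the two branches carry different $\Cs$-labels'') is slightly off: a single per-$x$ exponent difference need not be nonzero; what is guaranteed is only that \emph{at least one} is. Your first proposed hybrid (writing ``a copy of the global phase vector'' into $\Cs$ and then exhibiting an isometry from the counter to the phase record) is not well-formed as stated: the counter carries strictly less information than the phase record, so there is no such isometry; the correct statement is that the counter is a \emph{function} of the (Fourier-basis) phase record, which is what the paper proves.
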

\begin{proof}Let $\As,M,\Ds$ be as in the statement of Lemma~\ref{lem:count}. Let $\Z_M$ be the set of of integers mod $M$, which we will associate with the interval $[-\lfloor (M-1)/2\rfloor,\cdots,\lceil (M-1)/2\rceil]$. 
	
Consider sampling each $|\psi_i\rangle$ and $|\phi_i\rangle$ from $\Ds$. Let $|\psi_i'\rangle,|\phi_i'\rangle$ be the samples from $\Ds'$ used to sample $|\psi_i\rangle,|\phi_i\rangle$ according to the definition of $M$-phase invariance. For each state $|\psi_i\rangle$, let $F_i:\{0,1\}^n\rightarrow\Z_M$ be the function mapping $x$ to $\tau_x*(M/2\pi)$ where the $\tau_x$ are the random values used to construct $|\psi_i\rangle$. Likewise define $G_i:\{0,1\}^n\rightarrow\Z_M$ as the functions mapping $x$ to $\tau_x*(M/2\pi)$ for each of the $|\phi_i\rangle$. Then the $F_i,G_i$ are uniform random functions.

Now fix each $|\psi_i'\rangle,|\phi_i'\rangle$. We will show that $\Pr[\As^{\Om_{\Psi,\out}}()=z]=\Pr[\As^{\Om^{(1)}_{\Psi,\out}}()=z]$ holds even when fixing these states. Now the only randomness is over the choice of $F_i,G_i$ and any randomness of $\As$.

For a function $F:\{0,1\}^n\rightarrow\Z_M$, let $\Phase_F$ be the unitary that maps $|x\rangle\mapsto e^{i2\pi F(x)/M}|x\rangle$. Then $|\psi_i\rangle=\Phase_{F_i}|\psi_i'\rangle$ and $|\phi_i\rangle=\Phase_{G_i}|\phi_i'\rangle$
	
We will use a variant of the query recording technique of~\cite{C:Zhandry19}. Instead of sampling random $F_i,G_i$, we will purify them, initializing uniform superpositions $\frac{1}{M^{2^n/2}}\sum_{F_i}|F_i\rangle$ and $\frac{1}{M^{2^n/2}}\sum_{G_i}|G_i\rangle$. Let $\Fs$ be the register containing all the $F_i,G_i$. Now we can think of $\Om_{\Psi,\out}$ as a larger unitary $\Om_\out$ acting on $\Hs_2^{\otimes(m+n+n)}\otimes\Fs$, and $\Om^{(1)}_{\Psi,\out}$ as a larger unitary $\Om^{(1)}_\out$ acting on $\Hs_2^{\otimes(m+n+n)}\otimes\Fs\otimes\Cs$. $\Om_{\out}$ has the following behavior:

\[\arraycolsep=0pt\begin{array}{llllllll}
	\Om_{\out}|0\rangle&|0^n\rangle&|0^n\rangle&|\{F_i\}_i,\{G_i\}_i\}\rangle&=|0\rangle&\Phase_{F_1}|\psi_1'\rangle&\Phase_{G_1}|\phi_1'\rangle&|\{F_i\}_i,\{G_i\}_i\}\rangle\\\vspace{5pt} \Om_{\out}|0\rangle&\Phase_{F_1}|\psi_1'\rangle&\Phase_{G_1}|\phi_1'\rangle&|\{F_i\}_i,\{G_i\}_i\}\rangle&=|0\rangle&|0^n\rangle&|0^n\rangle&|\{F_i\}_i,\{G_i\}_i\}\rangle\\
	\Om_{\out}|i\rangle&\Phase_{F_i}|\psi_i'\rangle&|0^n\rangle&|\{F_i\}_i,\{G_i\}_i\}\rangle&=|i\rangle&\Phase_{F_{i+1}}|\psi_{i+1}'\rangle&\Phase_{G_{i+1}}|\phi_{i+1}'\rangle&|\{F_i\}_i,\{G_i\}_i\}\rangle\\\vspace{5pt}
	\Om_{\out}|i\rangle&\Phase_{F_{i+1}}|\psi_{i+1}'\rangle&\Phase_{G_{i+1}}|\phi_{i+1}'\rangle&|\{F_i\}_i,\{G_i\}_i\}\rangle&=|i\rangle&\Phase_{F_i}|\psi_i'\rangle&|0^n\rangle&|\{F_i\}_i,\{G_i\}_i\}\rangle\\
	\Om_{\out}|t\rangle&\Phase_{F_t}|\psi_t'\rangle&|z\rangle&|\{F_i\}_i,\{G_i\}_i\}\rangle&=|t\rangle&\Phase_{F_T}|\psi_t'\rangle&|z\oplus \out 0^{n-1}\rangle&|\{F_i\}_i,\{G_i\}_i\}\rangle
\end{array}\]
Meanwhile, $\Om^{(1)}_\out$ has the same behavior, except that it also acts on $\Cs$ using the ${\sf Incr},{\sf Decr}$ operations.

Now we switch to viewing the $\Fs$ register in the Fourier basis. To do so, we use the following:
\begin{lemma}\label{lem:compressed} Consider an algorithm making queries in one of two worlds. In the first, $\Fs$ is initialized to the uniform superposition over tuples of random functions, and the algorithm makes queries to $\Om_\out$ (resp. $\Om^{(1)}_\out$). In the other world, $\Fs$ is initialized to a list of all-zero functions $|(0^{2^n})^{2t-1}\rangle$, and the algorithm makes queries to $\QFT_\Fs^\dagger\cdot\Om_\out\cdot\QFT_\Fs$ (resp. $\QFT_\Fs^\dagger\cdot\Om^{(1)}_\out\cdot\QFT_\Fs$). Then the output distributions in the two worlds are equal.
\end{lemma}
\begin{proof}We prove the $\Om_\out$ case, the other being essentially identical. We insert $\QFT\cdot\QFT^\dagger=\Id$ applied to $\Fs$ between each query to $\Om_\out$. We likewise observe that the initial state of $\Fs$, the uniform superposition over all tuples of functions, is just $\QFT$ applied to $|(0^{2^n})^{2t-1}\rangle$. We can also apply a final $\QFT^\dagger$ to $\Fs$ at the very end of the computation, which does not affect the algorithm's registers. Now we observe that each $\QFT,\QFT^\dagger$ we injected commutes with the algorithm's gates other than $\Om_\out$, so we can take one half of each  $\QFT\cdot\QFT^\dagger$ and push it to being next to the previous query to $\Om_\out$ and push the other half to being next to the subsequent query. The result is each query to $\Om_\out$ is sandwiched between two $\QFT$ gates. In other words, the algorithm is now making queries to $\QFT^\dagger\cdot\Om_\out\cdot\QFT$, and the initial state of $\Fs$ is $|(0^{2^n})^{2t-1}\rangle$. These changes are all perfectly indistinguishable to the algorithm. This completes the proof of Lemma~\ref{lem:compressed}.\end{proof}

We now observe that $\QFT_\Fs^\dagger\cdot\Om_\out\cdot\QFT_\Fs$ (resp. $\QFT_\Fs^\dagger\cdot\Om^{(1)}_\out\cdot\QFT_\Fs$) have particularly nice forms. We start by observing that if we define $\Phase|y\rangle=e^{i2\pi y/M}|y\rangle$, then 
\[\QFT^\dagger\cdot \Phase\cdot\QFT |z\rangle=\frac{1}{M}\sum_{z',y}e^{-i2\pi yz'/M}e^{i2\pi y/M}e^{i2\pi yz/M}|z'\rangle=\frac{1}{M}\sum_{z',y}e^{i2\pi y(z+1-z')/M}|z'\rangle=|z+1\rangle\]
where above we used that $\sum_y e^{i2\pi yw/M}$ equals $M$ if $w=0$ and equals 0 otherwise.

Using this identity, we will interpret $\Fs$ as a collection $V$ of $2t-1$ tables, each table having length $2^n$ and containing entries from $\Z_M$. For a string $x$, let ${\sf Incr}_{\Fs_i(x)}$ be the operation that adds 1 (mod $M$) to the entry of register $\Fs$ corresponding to the $i$th function $F_i$ on input $x$. Likewise define ${\sf Decr}_{\Fs_i(x)},{\sf Incr}_{\Gs_i(x)},{\sf Decr}_{\Gs_i(x)}$. Let $|\psi_i'\rangle=\sum_x \alpha^{(i)}_x|x\rangle$ and $|\phi_i'\rangle=\sum_x \beta^{(i)}_x|x\rangle$. Then $\Pm_\out:=\QFT_\Fs^\dagger\cdot\Om_\out\cdot\QFT_\Fs$ is just
{\footnotesize
\[\arraycolsep=0pt\begin{array}{llllllll}
	\Pm_{\out}|0\rangle&|0^n\rangle&|0^n\rangle&|V\rangle&=|0\rangle&\sum\limits_{x,y}\alpha^{(1)}_x\beta^{(1)}_y |x\rangle&|y\rangle&{\sf Incr}_{\Fs_1(x)}{\sf Incr}_{\Gs_1(y)}|V\rangle\\\vspace{5pt} 
	\Pm_{\out}|0\rangle&\sum\limits_{x,y}\alpha^{(1)}_x\beta^{(1)}_y |x\rangle&|y\rangle&{\sf Incr}_{\Fs_1(x)}{\sf Incr}_{\Gs_1(y)}|V\rangle&=|0\rangle&|0^n\rangle&|0^n\rangle&|V\rangle\\
	\Pm_{\out}|i\rangle&\sum\limits_{x}\alpha^{(i)}_x |x\rangle&|0^n\rangle&{\sf Incr}_{\Fs_i(x)}|V\rangle&=|i\rangle&\sum\limits_{x',y'}\alpha^{(i+1)}_{x'}\beta^{(i+1)}_{x'} |x'\rangle&|y'\rangle&{\sf Incr}_{\Fs_{i+1}(x')}{\sf Incr}_{\Gs_{i+1}(y')}|V\rangle\\\vspace{5pt}
	\Pm_{\out}|i\rangle&\sum\limits_{x',y'}\alpha^{(i+1)}_{x'}\beta^{(i+1)}_{x'} |x'\rangle&|y'\rangle&{\sf Incr}_{\Fs_{i+1}(x')}{\sf Incr}_{\Gs_{i+1}(y')}|V\rangle&=|i\rangle&\sum\limits_{x}\alpha^{(i)}_x |x\rangle&|0^n\rangle&{\sf Incr}_{\Fs_i(x)}|V\rangle\\
	\Pm_{\out}|t\rangle&\sum\limits_{x}\alpha^{(t)}_x |x\rangle&|z\rangle&{\sf Incr}_{\Fs_t(x)}|V\rangle&=|t\rangle&\sum\limits_{x}\alpha^{(i)}_x |x\rangle&|z\oplus \out 0^{n-1}\rangle&{\sf Incr}_{\Fs_t(x)}|V\rangle
\end{array}\]}

Likewise we can define $\Pm^{(1)}_\out:=\QFT_\Fs^\dagger\cdot\Om^{(1)}_\out\cdot\QFT_\Fs$, and obtain equations for the definition of $\Pm^{(1)}_\out$, which look exactly like those for $\Pm^{(1)}_\out$, except that they include additionally the operations ${\sf Incr},{\sf Decr}$ on the register $\Cs$. 

\begin{claim}Each register $\Cs_i$ in $\Cs$ contains exactly the sum of the entries of $\Fs_i$ (when interpreted as integers in the interval $[-\lfloor (M-1)/2\rfloor,\cdots,\lceil (M-1)/2\rceil]$), and likewise $\Ds_i$ contains exactly the sum of the entries of $\Gs_i$.\end{claim}
\begin{proof}This is true initially since all the registers are zero. Then the property is preserved $\bmod M$ since, any time ${\sf Incr}_{\Fs_i(x)}$ is applied, then so is ${\sf Incr}_{\Cs_i}$, and likewise for ${\sf Incr}_{\Gs_i(x)}$ and ${\sf Incr}_{\Ds_i}$. Since the number of queries is less than $M/2$ and each query only increases or decreases the value of any register by 1, the entries in $\Fs_i,\Gs_i$ always remain in the interval $[-\lfloor (M-1)/2\rfloor,\cdots,\lceil (M-1)/2\rceil]$ and never need to be reduced $\bmod M$. Therefore, equality holds over the integers.\end{proof}

Since the register $\Cs$ can be computed from $\Fs$, which is local to the oracle simulation and not seen by the algorithm, we can imagine computing $\Cs$ from $\Fs$ immediately before each query, and then uncomputing $\Cs$ from $\Fs$ immediately after each query, and this change will not affect the algorithm in any way. The result is that we move from applying $\Pm_\out$ to $\Pm_\out^{(1)}$ without any affect on the algorithm. This shows that $\Pm_\out$ to $\Pm_\out^{(1)}$, and hence $\Om_\out$ to $\Om_\out^{(1)}$, are perfectly indistinguishable. This completes the proof of Lemma~\ref{lem:count}.
\end{proof}

We next observe the following feature of $\Om_{\Psi,\out}$:
\begin{lemma}\label{lem:countconstraints}At all times, for $i=1,\dots,t-1$, the support of $\Cs$ is on states where the count in register $\Ds_{i+1}$ is equal to the count in register $\Ds_i$ minus the count in register $\Cs_i$.
\end{lemma}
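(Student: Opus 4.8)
The plan is a straightforward invariant argument on the simulator's private counting registers. I would work with the unitary $\Om^{(1)}_{\Psi,\out}$ and let $\Vs\subseteq\Cs$ be the subspace spanned by those basis states $|\cv\rangle$ for which the count in register $\Ds_{i+1}$ equals the count in $\Ds_i$ minus the count in $\Cs_i$, for every $i\in[1,t-1]$. The lemma then amounts to the claim that at every point in the execution the global state lies in the subspace in which the $\Cs$-registers satisfy these relations, that is, has its $\Cs$-component confined to $\Vs$. I would prove this by induction on the gates applied, using three observations: (i) the initial state of $\Cs$ lies in $\Vs$; (ii) $\Om^{(1)}_{\Psi,\out}$ maps $\Hs_2^{\otimes(m+n+n)}\otimes\Vs$ into itself; and (iii) every other gate in the circuit acts as the identity on $\Cs$ and hence trivially preserves $\Hs_2^{\otimes(m+n+n)}\otimes\Vs$.

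Observations (i) and (iii) are immediate: $\Cs$ is initialized to the all-zero state, which satisfies every one of the claimed relations, and $\Cs$ is private to the simulator, so the algorithm's own gates act only on $\Hs_{\sf in},\Hs_{\sf out},\Hs_{\sf work}$ and the query registers. The content is in (ii). For this I would introduce, for each basis state $|\cv\rangle$, the integer $\delta_i(\cv):=(\text{count in }\Ds_i)-(\text{count in }\Ds_{i+1})-(\text{count in }\Cs_i)$, so that $|\cv\rangle\in\Vs$ exactly when $\delta_i(\cv)=0$ for all $i$, and then check that each of the five families of transitions defining $\Om^{(1)}_{\Psi,\out}$ leaves every $\delta_i$ invariant. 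This is a short case check: the first family applies ${\sf Incr}_{\Cs_1}{\sf Incr}_{\Ds_1}$, which shifts $\delta_1$ by $(+1)-0-(+1)=0$ and fixes the rest; the $i_0$-th member of the third family applies ${\sf Decr}_{\Cs_{i_0}}{\sf Incr}_{\Cs_{i_0+1}}{\sf Incr}_{\Ds_{i_0+1}}$, shifting $\delta_{i_0}$ by $0-(+1)-(-1)=0$ and $\delta_{i_0+1}$ by $(+1)-0-(+1)=0$ while fixing all others (in particular $\delta_{i_0-1}$, since neither $\Ds_{i_0}$ nor $\Cs_{i_0-1}$ is touched); the second and fourth families are the inverses of the first and third, so they too fix every $\delta_i$; and the fifth family does not touch $\Cs$ at all. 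Since $\Om^{(1)}_{\Psi,\out}$ is also the identity on the orthogonal complement of the listed states, it maps $\Hs_2^{\otimes(m+n+n)}\otimes\Vs$ into itself, giving (ii).

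One bookkeeping point I would make explicit is that no reduction modulo the window size is ever needed: each query changes any single counting register by at most $1$, and there are at most $T$ queries, so all counts stay within the chosen interval and the relations hold over $\Z$, not merely modulo $M$. I do not anticipate a genuine obstacle: the argument is entirely combinatorial. The only things needing care are the two index boundaries --- the transition that first produces $|\phi_1\rangle$ against the relation at $i=1$, and the transition producing $|\phi_t\rangle$ against the relation at $i=t-1$ --- together with the (automatic, once each transition is checked individually) fact that applying $\Om^{(1)}_{\Psi,\out}$ to a query state that is a superposition over several values of the first register still keeps $\Cs$ supported on $\Vs$.
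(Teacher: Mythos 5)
Your proof is correct and takes essentially the same route as the paper: both argue by induction over the gates applied, using that the initial $\Cs$-state is supported on the claimed relations and that each of the five transition families defining $\Om^{(1)}_{\Psi,\out}$ (plus the identity action off-support, plus all algorithm gates that do not touch $\Cs$) preserves them; linearity then handles superpositions over the control register for free. Your version is a bit more careful than the paper's, which at one point asserts of the forward transition on control value $i\in[1,t-1]$ that ``no other registers are effected'' even though ${\sf Incr}_{\Cs_{i+1}}$ is also applied --- the conclusion is still right because the simultaneous ${\sf Incr}_{\Ds_{i+1}}$ cancels it in $\delta_{i+1}$, which is exactly the check you make explicit via $\delta_{i_0+1}$. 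Introducing the invariant quantities $\delta_i(\cv)$ and verifying $\Delta\delta_i=0$ per family is a clean way to package the paper's verbal bookkeeping, and your parenthetical about the relations holding over $\Z$ rather than merely modulo $M$ is a useful clarification that the paper defers to a later claim.
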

In other words, the net number of $|\phi_i\rangle$ given out is equal to the difference in the net numbers of $|\psi_i\rangle$ and $|\psi_{i+1}\rangle$ given out.
\begin{proof}Initially all counts are 0 so the lemma is trivially true. In any query where the first register is 0, the difference between $\Cs_1$ and $\Ds_1$ is preserved (since both are increased or decreased or preserved together) and all other counts are kept the same. Thus, the relations between the counts are preserved. For any query where the first register is $i\in[1,t-1]$, the count in $\Cs_i$ may be decreased, therefore increasing the difference between $\Ds_i$ and $\Cs_i$, but in this case $\Ds_{i+1}$ is increased; no other registers are effected. Thus the relations between the counts are preserved. For any query where the first register is $t$, the counts, and therefore the relations between them, are preserved.
\end{proof}

\subsection{Simulating $\Om_{\Psi,\out}$: State Swap}\label{sec:swap}

Fix a list of states $\Psi$. Now we replace the register $|\cv\rangle$ with the following. Let $\Hs=\Hs_2^n\setminus\{|0\rangle\}$, the space of an $n$-qubit system with the state $|0\rangle$ removed. Recall that $\Sym^\ell\Hs$ is the symmetric subspace of $\Hs^\ell$. Let $\Sym^*\Hs=\cup_{\ell=1}^\infty \Sym^\ell\Hs$.

For $i=1,\dots,t$, let $\Ss_i$ be a copy of $\Sym^*\Hs$; for $i=1,\dots,t-1$, let $\Ts_i$ be another copy of $\Sym^*\Hs$. Let $\Ss$ be the joint system of all $\Ss_i,\Ts_i$. Each $\Ss_i$ is initialized with $\ell$ copies of $|\psi_i\rangle$, and each $\Ts_i$ is initialized with $\ell$ copies of $|\phi_i\rangle$. Here, $\ell\geq T$ is a parameter to be chosen later; think of $\ell$ as polynomial in $T$.

Let ${\sf Incr}_{\Ss_i}$ increase the number of copies of $|\psi_i\rangle$ in $\Ss_i$ by 1 (mod $N$ for some $N> T+\ell$), and likewise define ${\sf Decr}_{\Ss_i},{\sf Incr}_{\Ts_i},{\sf Decr}_{\Ts_i}$. Note that because each $\Ss_i,\Ts_i$ contains many copies of an identical state, the state of the system is always in a symmetric subspace.

Now define the following unitary $\Om^{(2)}_{\Psi,\out,\ell}$ that acts on $\Hs_2^{\otimes(m+n+n)}\otimes \Ss$:
\[\arraycolsep=0pt\begin{array}{lllllllll}
	\Om^{(2)}_{\Psi,\out,\ell}|0\rangle&|0^n\rangle&|0^n\rangle&|\omega\rangle&=|0\rangle&|\psi_1\rangle&|\phi_1\rangle&{\sf Decr}_{\Ss_1}{\sf Decr}_{\Ts_1}|\omega\rangle\\\vspace{5pt} \Om^{(2)}_{\Psi,\out,\ell}|0\rangle&|\psi_1\rangle&|\phi_1\rangle&|\omega\rangle&=|0\rangle&|0^n\rangle&|0^n\rangle&{\sf Incr}_{\Ss_1}{\sf Incr}_{\Ts_1}|\omega\rangle\\
	\Om^{(2)}_{\Psi,\out,\ell}|i\rangle&|\psi_i\rangle&|0^n\rangle&|\omega\rangle&=|i\rangle&|\psi_{i+1}\rangle&|\phi_{i+1}\rangle&{\sf Incr}_{\Cs_{i}}{\sf Decr}_{\Ss_{i+1}}{\sf Decr}_{\Ts_{i+1}}|\omega\rangle&\text{ for }i\in[1,t-1]\\\vspace{5pt}
	\Om^{(2)}_{\Psi,\out,\ell}|i\rangle&|\psi_{i+1}\rangle&|\phi_{i+1}\rangle&|\omega\rangle&=|i\rangle&|\psi_i\rangle&|0^n\rangle&{\sf Decr}_{\Cs_{i}}{\sf Incr}_{\Ss_{i+1}}{\sf Incr}_{\Ts_{i+1}}|\omega\rangle&\text{ for }i\in[1,t-1]\\
	\Om^{(2)}_{\Psi,\out,\ell}|t\rangle&|\psi_t\rangle&|z\rangle&|\omega\rangle&=|t\rangle&|\psi_t\rangle&|z\oplus \out 0^{n-1}\rangle&|\omega\rangle&\text{ for }z\in\{0,1\}^n
\end{array}\]

Observe that, as long as the number of queries is at most $\ell$, then $\Om^{(2)}_{\Psi,\out,\ell}$ can be easily simulated from just the initial state of $\Ss$ containing $\ell$ copies of each of the $|\psi_i\rangle,|\phi_i\rangle$, as well as oracle access to the reflections $P_{|\psi_i\rangle}$ and $P_{|\phi_i\rangle}$ where $P_{|\psi\rangle} := 1- 2|\psi\rangle\langle\psi|$. Indeed, we need two queries to each $P_{|\psi_i\rangle}$ and $P_{|\phi_i\rangle}$ in order to decide if the input register is in one of the states $|\psi_i\rangle,|\phi_i\rangle$, and then uncompute the decision at the end of simulating the query. Moreover, whenever we need to remove an $|\psi_i\rangle$ or $|\phi_i\rangle$ from $|\omega\rangle$, we also need to output an $|\psi_i\rangle$ or $|\phi_i\rangle$, respectively. So instead of deleting, say, one of the copies of $|\psi_i\rangle$ from $|\omega\rangle$, we just put it into the response register given back to the algorithm. Likewise, when we need to increase the number of $|\psi_i\rangle$, we also are given one of the $|\psi_i\rangle$ as input. Since the input $|\psi_i\rangle$ needs to be deleted to execute the gate, we can instead just swap the $|\psi_i\rangle$ given as input into $|\omega\rangle$, simultaneously deleting the input copy and increasing the number of copies in $|\omega\rangle$, as desired. The only issue is if the number of copies drops below 0 or increases to $N$ or larger, in which case the number of copies gets reduced mod $N$. But since we started with $\ell$ copies which is at least the number of queries, then we can never run out of copies. Likewise, the number of copies can never increase by more than $T$, for a total of $\ell+T<N$. Thus, we never need to reduce the number of copies $\bmod N$. See Figure~\ref{fig:swap}.

\begin{figure}[htb]
	\centering
	\includegraphics[width=.8\textwidth]{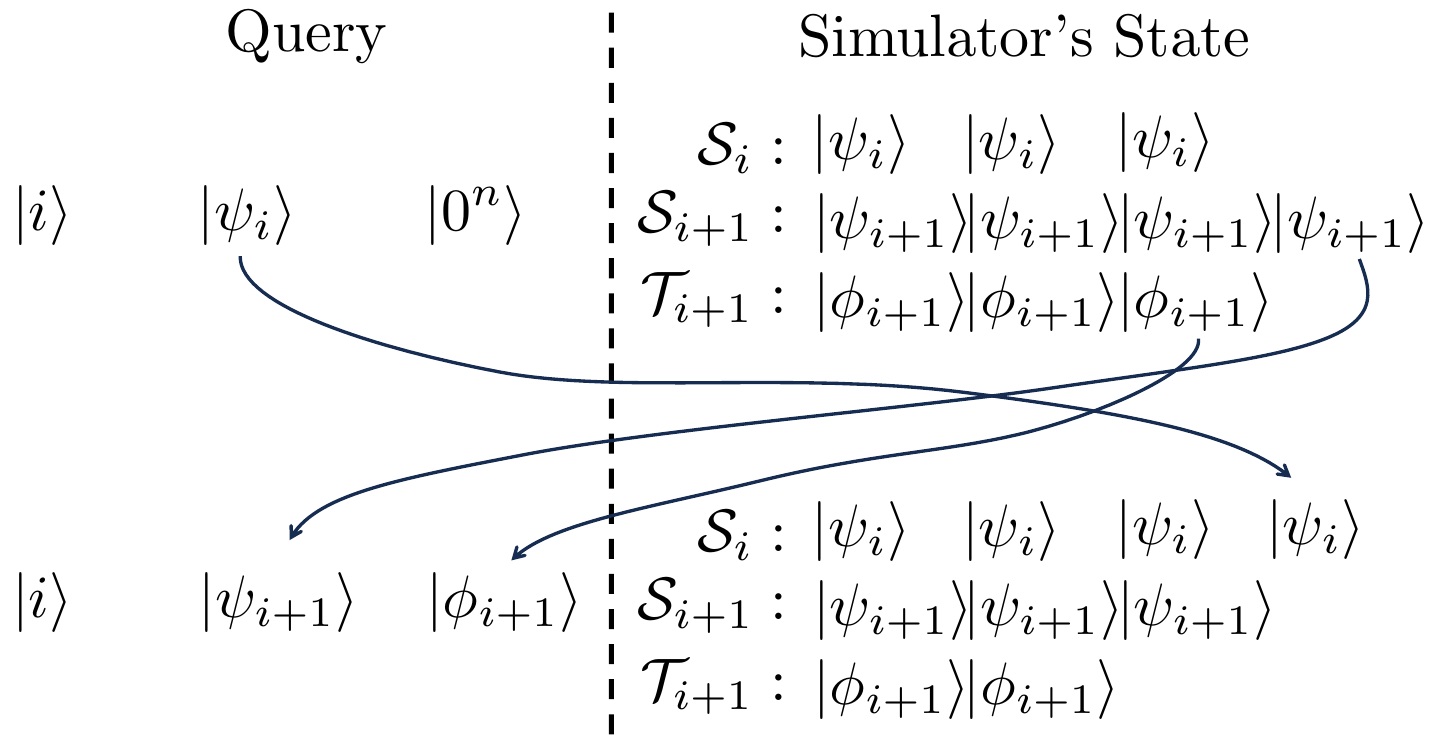}
	\captionsetup{width=.8\textwidth}
	\caption{\label{fig:swap} How our simulator maps the query input (top) to the query output (bottom) by simply moving registers around.}
\end{figure}

We also have the following:

\begin{lemma}\label{lem:swap} Let $\As$ be a time $T\leq \ell$ algorithm. Then for any $\Psi$, any $\out$, we have the following equality of density matrices:
	\[\As^{\Om^{(1)}_{\Psi,\out}}()=\As^{\Om^{(2)}_{\Psi,\out,\ell}}()\]
\end{lemma}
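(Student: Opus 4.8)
The plan is to exhibit an explicit isometry $V$ from the Hilbert space on which $\Om^{(1)}_{\Psi,\out}$ acts to the one on which $\Om^{(2)}_{\Psi,\out,\ell}$ acts, such that $V$ (i) is the identity on all registers of $\As$ (in particular on the query registers $\Hs_2^{\otimes(m+n+n)}$), (ii) intertwines the two oracles, and (iii) carries the initial joint state of the $\Om^{(1)}$-execution to that of the $\Om^{(2)}$-execution; the lemma then follows by pushing $V$ through the whole run and tracing out the simulator's private register. Concretely, $\Cs$ records the integer counts $\cv=(c_1,\dots,c_t,d_1,\dots,d_{t-1})$ of the \emph{net} number of copies of each $|\psi_i\rangle,|\phi_i\rangle$ handed to $\As$, whereas $\Ss$ physically holds the copies the simulator still \emph{retains}. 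Since the simulator starts with $\ell$ copies of each state, the natural dictionary is ``$c_i$ handed out'' $\leftrightarrow$ ``$\ell-c_i$ copies of $|\psi_i\rangle$ kept in $\Ss_i$'' (note the sign flip: ${\sf Incr}_{\Cs_i}$ corresponds to ${\sf Decr}_{\Ss_i}$), and similarly for the $d_i$. I would therefore define $V$ as the identity on $\As$'s side and, on the simulator's side, by the linear map $W:\ |\cv\rangle\mapsto \bigotimes_{i=1}^{t}|\psi_i\rangle^{\otimes(\ell-c_i)}_{\Ss_i}\otimes\bigotimes_{i=1}^{t-1}|\phi_i\rangle^{\otimes(\ell-d_i)}_{\Ts_i}$, with $|\psi_i\rangle^{\otimes 0}$ the unit scalar in $\Sym^0\Hs\cong\C$. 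Distinct values of any one exponent land in distinct blocks $\Sym^k\Hs$, so the images are orthonormal and $W$ (hence $V$) is an isometry.

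Next I would verify $\Om^{(2)}_{\Psi,\out,\ell}\,V=V\,\Om^{(1)}_{\Psi,\out}$ on the subspace reachable in at most $T\le\ell$ queries, by running through the five families of defining equations together with the ``acts as identity on everything orthogonal'' clause. In each branch the action on the three registers $\Hs_2^{\otimes(m+n+n)}$ is literally identical for the two oracles, the branch is selected purely by the content of those registers (which $V$ does not touch), and the only difference is that $\Om^{(1)}$ applies ${\sf Incr}/{\sf Decr}$ to some $\Cs_i,\Ds_i$ while $\Om^{(2)}$ applies the corresponding operations to the stacks in $\Ss_i,\Ts_i$; under $W$ these agree by the $c_i\leftrightarrow\ell-c_i$ dictionary. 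Two finiteness points need care here: because $\As$ makes at most $T\le\ell$ queries, every count stays in $[\ell-T,\ell+T]\subseteq[0,N)$ (using $N>T+\ell$), so no stack ever wraps mod $N$ or becomes negative; and on the reachable subspace each $\Ss_i$ always holds a stack of identical copies of the fixed vector $|\psi_i\rangle$, which is exactly the regime in which ${\sf Incr}_{\Ss_i},{\sf Decr}_{\Ss_i}$ act as the obvious norm-preserving shift $|\psi_i\rangle^{\otimes k}\mapsto|\psi_i\rangle^{\otimes(k\pm1)}$. This ``stack of identical copies'' subspace is manifestly invariant under both oracles and under $\As$'s other gates, so the computation never leaves it.

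Finally I would lift this to the full execution. Every gate of $\As$ other than an oracle call acts only on $\As$'s registers and trivially on the simulator's register, hence has the form $G\otimes\Id$ and commutes with $V=\Id\otimes W$; and $V$ sends the initial state $|x\rangle|0\rangle|0\rangle\otimes|\zerom\rangle_{\Cs}$ of the $\Om^{(1)}$-run to $|x\rangle|0\rangle|0\rangle\otimes\bigotimes_i|\psi_i\rangle^{\otimes\ell}\otimes\bigotimes_i|\phi_i\rangle^{\otimes\ell}$, the initial state of the $\Om^{(2)}$-run. By induction on the gate count, the state of the $\Om^{(2)}$-execution after every step equals $V$ applied to the state of the $\Om^{(1)}$-execution; in particular the two final joint states are related by $V$. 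Since $V=\Id_{\As}\otimes W$ with $W$ an isometry on the simulator's register, one has $\Tr_{\Ss}\big(V\rho V^\dagger\big)=\Tr_{\Cs}(\rho)$ for any $\rho$, so the reduced states on $\As$'s registers (and hence on $\Hs_{\sf out}$) coincide, yielding the claimed equality of density matrices. The only step that requires genuine thought is the intertwining check of the second paragraph, and in particular pinning down the invariant subspace on which ${\sf Incr}_{\Ss_i},{\sf Decr}_{\Ss_i}$ are unambiguously defined; everything else is bookkeeping.
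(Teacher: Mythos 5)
Your proposal is correct and takes essentially the same approach as the paper: the paper's proof is a terse two-sentence remark that one can compute $|\omega\rangle$ from $|\cv\rangle$ (and vice versa) locally on the simulator's register, so the algorithm cannot distinguish; your explicit isometry $W:|\cv\rangle\mapsto\bigotimes_i|\psi_i\rangle^{\otimes(\ell-c_i)}\otimes\bigotimes_i|\phi_i\rangle^{\otimes(\ell-d_i)}$, together with the intertwining check, the invariant ``stack of identical copies'' subspace, and the mod-$N$ boundedness argument, is a careful, rigorous spelling-out of exactly that dictionary.
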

\begin{proof}We can compute $|\omega\rangle$ from $|\cv\rangle$ (assuming knowledge of the $|\psi_i\rangle,|\phi_i\rangle$), and vice versa, as follows. The count in $\Cs_i$ is just $\ell$ minus the number of copies of $|\psi_i\rangle$ in $\Ss_i$. Likewise the count in $\Ds_i$ is just $\ell$ minus the number of copies of $|\phi_i\rangle$ in $\Ts_i$. Therefore, since $|\omega\rangle$ can be computed from $|\cv\rangle$ just by computing on registers of the simulator, the algorithm cannot distinguish whether $|\omega\rangle$ or $|\cv\rangle$ is stored by the simulator. The only issue is if the number of copies of some state in $|\omega\rangle$ gets reduced mod $N$, but this cannot happen by our choice of $\ell\geq T$ and $N\geq T+\ell$.
\end{proof}
Let $c_i$ be $\ell$ minus the number of copies of $|\phi_i\rangle$ and $d_i$ be $\ell$ minus the number of copies of $|\psi_i\rangle$ in $|\omega\rangle$. By mapping the constraints on $\cv$ from Lemma~\ref{lem:countconstraints} to $|\omega\rangle$, we also have:
\begin{corollary}\label{cor:countconstraints} At any point when running $\As^{\Om^{(2)}_{\Psi,\out,\ell}}()$, the support of the simulator's state is only on terms satisfying, $d_{i+1}=d_i-c_i$.
\end{corollary}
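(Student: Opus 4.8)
The plan is to read the corollary off Lemma~\ref{lem:countconstraints} by transporting its invariant through the dictionary between the two simulators that is already set up inside the proof of Lemma~\ref{lem:swap}. Recall that that proof records that the integer stored in the counting register $\Cs_i$ equals $\ell$ minus the number of copies of $|\psi_i\rangle$ held in $\Ss_i$, and the integer in $\Ds_i$ equals $\ell$ minus the number of copies of $|\phi_i\rangle$ held in $\Ts_i$; these are precisely the quantities $c_i$ and $d_i$ appearing in the corollary. The feature worth emphasizing is that this identification is a reversible map that acts only on the simulator's private registers and is applied query-by-query, so it carries the simulator's state during $\As^{\Om^{(1)}_{\Psi,\out}}$ at any point of the run to the simulator's state during $\As^{\Om^{(2)}_{\Psi,\out,\ell}}$ at the same point, and in particular it preserves which basis configurations lie in the support.

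With this in hand the argument is immediate: Lemma~\ref{lem:countconstraints} says that at all times the $\cv$-register of the $\Om^{(1)}$-simulator is supported only on configurations with $\Ds_{i+1}=\Ds_i-\Cs_i$ for all $i\in[1,t-1]$, and transporting this along the dictionary $\Cs_i\leftrightarrow c_i$, $\Ds_i\leftrightarrow d_i$ gives that the simulator's state in $\As^{\Om^{(2)}_{\Psi,\out,\ell}}$ is supported only on configurations with $d_{i+1}=d_i-c_i$. Here one should also recall the observation already used for Lemma~\ref{lem:swap} that the choices $\ell\ge T$ and $N>T+\ell$ keep every multiplicity in its intended range, so these relations hold over $\Z$ and not merely modulo $N$. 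If one would rather not lean on the bijection, an equivalent route is to re-run the induction of Lemma~\ref{lem:countconstraints} directly against $\Om^{(2)}_{\Psi,\out,\ell}$: between oracle calls the algorithm touches only its own registers so $c_i,d_i$ do not move, and each branch of the defining equation of $\Om^{(2)}_{\Psi,\out,\ell}$ shifts the multiplicities in $\Ss,\Ts$ in the bookkeeping-consistent way (for instance the forward branch on control $|i\rangle$ decrements $c_i$ by one while incrementing both $c_{i+1}$ and $d_{i+1}$ by one), and one checks case by case that $d_{j+1}-d_j+c_j$ stays $0$ for every $j$.

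I do not anticipate a genuine obstacle. The one point meriting a sentence of care is that Lemmas~\ref{lem:count} and~\ref{lem:swap} are phrased as equalities of \emph{final} density matrices whereas the corollary is an \emph{anytime} statement; this is not an actual gap, because the arguments behind those lemmas are step-by-step (inserting $\QFT/\QFT^\dagger$ around each query, respectively computing and then uncomputing $\cv$ from $|\omega\rangle$ around each query), so the correspondence in fact holds at every intermediate point --- and the direct-induction route above sidesteps the issue entirely.
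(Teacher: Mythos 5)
Your proof is correct and follows exactly the same route as the paper: the paper's proof of the corollary is the single sentence ``By mapping the constraints on $\cv$ from Lemma~\ref{lem:countconstraints} to $|\omega\rangle$,'' and your write-up just makes that mapping explicit (and the direct re-induction on $\Om^{(2)}_{\Psi,\out,\ell}$ is a fine alternative). One caveat worth flagging: your dictionary $\Cs_i\leftrightarrow c_i$, $\Ds_i\leftrightarrow d_i$ (so that $c_i=\ell-\#|\psi_i\rangle$ and $d_i=\ell-\#|\phi_i\rangle$) is the one consistent with Lemma~\ref{lem:countconstraints}, with the bookkeeping inside the proof of Lemma~\ref{lem:swap}, and with the measurement step in Section~\ref{sec:proj}, and it is the identification under which $\Ds_{i+1}=\Ds_i-\Cs_i$ becomes the claimed $d_{i+1}=d_i-c_i$; the sentence in the paper immediately preceding the corollary defines $c_i,d_i$ with $\psi$ and $\phi$ swapped, which appears to be a typo (following it literally would instead give $c_{i+1}=c_i-d_i$).
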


\subsection{Simulating $\Om_{\Psi,\out}$: Approximating $|\psi\rangle\langle\psi|$}\label{sec:proj}

Above in Section~\ref{sec:swap}, we show how to \emph{almost} simulate $\Om_{\Psi,\out}$ just using copies of $|\psi_i\rangle$ and $|\phi_i\rangle$. The only part where we need actual knowledge of $|\psi_i\rangle,|\phi_i\rangle$ is to implement the reflections $P_{|\psi_i\rangle},P_{|\phi_i\rangle}$. Here, we use techniques from~\cite{C:JiLiuSon18} to simulate queries to the reflection, just using our copies of $|\psi_i\rangle$.

Let $\As^{P_{|\psi\rangle}}$ be an algorithm making $Q$ queries to $P_{|\psi\rangle}$.~\cite{C:JiLiuSon18} simulate the queries to $P_{|\psi\rangle}$ as follows. Initialize a register $\Sym^\ell\Hs$ to contain $\ell$ copies of $|\psi\rangle$. Now, instead of responding to each query with $P_{|\psi\rangle}$, respond to each query with $\Sym^{\ell+1}$, the reflection about the symmetric subspace of $\ell+1$ copies of $\Hs$, where $\ell$ copies come from the simulator's register $\Sym^\ell\Hs$, and the remaining register is the query.

Let $\rho_0$ be the final state of the algorithm $\As^{P_{|\psi\rangle}}$ when making queries to the actual reflection, together with $\ell$ copies of $|\psi\rangle$. Let $\rho_1$ be the final state of $\As$ when the queries are simulated, together with the final state of $\Sym^\ell\Hs$ (which is symmetric but may no longer be identical copies of $|\psi\rangle$ since the simulation will have perturbed them).

\begin{lemma}[\cite{C:JiLiuSon18}, Theorem 4]\label{lem:projprior} $TD\left[\rho_0,\rho_1\right] \leq \frac{2Q}{\sqrt{\ell+1}}$
\end{lemma}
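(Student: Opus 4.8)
The plan is a hybrid argument over the $Q$ queries, where the crucial design choice is to switch queries from the real oracle to the simulated one \emph{starting from the first query}, so that at every hybrid boundary the simulator's register still holds $\ell$ undisturbed copies of $|\psi\rangle$. For $j=0,1,\dots,Q$, let $H_j$ be the experiment in which the first $j$ queries of $\As$ are answered by the true reflection $P_{|\psi\rangle}$ (acting only on the query register), and the remaining $Q-j$ queries are answered by the simulator, i.e.\ by the reflection $R_{\ell+1}:=\Id-2\,\Sym^{\ell+1}$ (with $\Sym^{\ell+1}$ the projector onto the symmetric subspace of $\ell+1$ copies of $\Hs$) acting jointly on the query register and the simulator register $\Ss\cong\Sym^{\ell}\Hs$, which is initialized to $|\psi\rangle^{\otimes\ell}$ and is never touched by the real oracle. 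Then $H_0$ is exactly the simulated experiment whose output is $\rho_1$, and $H_Q$ is the real experiment tensored with the pristine register $\Ss$ holding $|\psi\rangle^{\otimes\ell}$, whose output is $\rho_0$; hence $TD[\rho_0,\rho_1]\le\sum_{j=0}^{Q-1}TD[H_j,H_{j+1}]$ by the triangle inequality.

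Next I would reduce $TD[H_j,H_{j+1}]$ to a single-query estimate. The two experiments are identical up to and including $\As$'s unitary immediately before query $j+1$; at that moment, because the first $j$ queries used the real oracle (touching only the query register), the joint state has the product form $\sigma\otimes|\psi\rangle\langle\psi|^{\otimes\ell}_{\Ss}$ for some state $\sigma$ on $\As$'s registers. Query $j+1$ is then answered by $R_{\ell+1}$ in $H_j$ and by $P_{|\psi\rangle}\otimes\Id_{\Ss}$ in $H_{j+1}$, and every subsequent operation (the remaining $\As$-unitaries and the simulated queries $R_{\ell+1}$, all of which are unitary) is identical in the two hybrids. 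Since trace distance is invariant under a common unitary, and non-increasing under channels, $TD[H_j,H_{j+1}]$ is bounded by the trace distance between $R_{\ell+1}$ and $P_{|\psi\rangle}\otimes\Id_{\Ss}$ applied to $\sigma\otimes|\psi\rangle\langle\psi|^{\otimes\ell}$. Purifying $\sigma$ by a reference system (yielding a unit vector $|\Phi\rangle$) and using $TD[\,|u\rangle\langle u|,|v\rangle\langle v|\,]\le\||u\rangle-|v\rangle\|$ for unit vectors together with the fact that both query operators are unitary, this is at most $\big\|(P_{|\psi\rangle}\otimes\Id_{\Ss}-R_{\ell+1})(|\Phi\rangle\otimes|\psi\rangle^{\otimes\ell})\big\|$; expanding $|\Phi\rangle=\sum_k|e_k\rangle|\phi_k\rangle$ in an orthonormal basis $\{|e_k\rangle\}$ of the reference-plus-workspace system (with $\sum_k\||\phi_k\rangle\|^2=1$) reduces this to $\big(\sum_k\|(P_{|\psi\rangle}\otimes\Id-R_{\ell+1})(|\phi_k\rangle\otimes|\psi\rangle^{\otimes\ell})\|^2\big)^{1/2}$.

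It then remains to prove the clean single-query bound: for any query-register vector $|\phi\rangle$, $\big\|(P_{|\psi\rangle}\otimes\Id-R_{\ell+1})(|\phi\rangle\otimes|\psi\rangle^{\otimes\ell})\big\|\le\tfrac{2}{\sqrt{\ell+1}}\,\||\phi\rangle\|$. Writing $|\phi\rangle=\langle\psi|\phi\rangle\,|\psi\rangle+|\phi^{\perp}\rangle$ with $|\phi^{\perp}\rangle\perp|\psi\rangle$, the $|\psi\rangle$-component produces $|\psi\rangle^{\otimes(\ell+1)}\in\Sym^{\ell+1}\Hs$, on which $P_{|\psi\rangle}\otimes\Id$ and $R_{\ell+1}$ agree (both act as $-\Id$), so the difference vector equals exactly $2\,\Sym^{\ell+1}\big(|\phi^{\perp}\rangle\otimes|\psi\rangle^{\otimes\ell}\big)$. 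Its squared norm is $4\,\langle\phi^{\perp}|M|\phi^{\perp}\rangle$, where $M:=(\Id\otimes\langle\psi|^{\otimes\ell})\,\Sym^{\ell+1}\,(\Id\otimes|\psi\rangle^{\otimes\ell})$ is an operator on the query register; expanding $\Sym^{\ell+1}=\frac{1}{(\ell+1)!}\sum_{\pi}W_{\pi}$ over permutations of the $\ell+1$ slots and separating the $\ell!$ permutations fixing the first (query) slot from the $\ell\cdot\ell!$ that do not gives the standard identity $M=\frac{1}{\ell+1}\Id+\frac{\ell}{\ell+1}|\psi\rangle\langle\psi|$. Since $\langle\psi|\phi^{\perp}\rangle=0$, this yields squared norm $\frac{4}{\ell+1}\||\phi^{\perp}\rangle\|^2\le\frac{4}{\ell+1}\||\phi\rangle\|^2$. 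Plugging back through the expansion of $|\Phi\rangle$ gives $TD[H_j,H_{j+1}]\le 2/\sqrt{\ell+1}$ for every $j$, and summing the $Q$ terms gives $TD[\rho_0,\rho_1]\le 2Q/\sqrt{\ell+1}$.

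I expect the only real obstacle to be getting the hybrid order right: one must peel real queries off the \emph{front} so that the state compared at each step has an untouched simulator register, which is precisely what licenses the clean $2/\sqrt{\ell+1}$ estimate. Switching queries from the back instead would leave, at the boundary, a simulator register already corrupted by earlier simulated queries, for which the clean estimate fails and one would need a much more delicate accounting of how that perturbation propagates. The identity for $M$ is just the symmetric-subspace dimension count underlying the no-cloning statement recorded in the preliminaries, and the passage from vector norms to trace distance via purification is routine.
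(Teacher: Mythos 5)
The paper does not prove this lemma; it imports it verbatim from Ji--Liu--Song (Theorem~4 of~\cite{C:JiLiuSon18}), so there is no in-paper argument to compare against. Your self-contained derivation is correct and is, to my knowledge, essentially the argument behind the JLS result. In particular, the operator identity
\[
(\Id\otimes\langle\psi|^{\otimes\ell})\,\Sym^{\ell+1}\,(\Id\otimes|\psi\rangle^{\otimes\ell})
= \tfrac{1}{\ell+1}\Id + \tfrac{\ell}{\ell+1}|\psi\rangle\langle\psi|
\]
is right (the count is $\ell!$ permutations fixing the query slot contributing $\Id$ and $\ell\cdot\ell!$ contributing $|\psi\rangle\langle\psi|$), the cancellation of $(P_{|\psi\rangle}\otimes\Id-R_{\ell+1})$ on the $|\psi\rangle$-component of the query register is exact given your sign convention $R_{\ell+1}=\Id-2\Sym^{\ell+1}$ matched to the paper's $P_{|\psi\rangle}=\Id-2|\psi\rangle\langle\psi|$, and the passage from per-branch vector norms through a purification and the fidelity-type bound $TD[\,|u\rangle\langle u|,|v\rangle\langle v|\,]\le\||u\rangle-|v\rangle\|$ is routine but sound.

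Two small points on your closing remark. First, your ordering is in fact \emph{the same} hybrid sequence used in the paper's own proof of Corollary~\ref{lem:projpriorcor}, just indexed in the opposite direction: the paper's $H_i$ (first $Q-i$ queries real, last $i$ simulated) equals your $H_{Q-i}$. Both orderings put a real-query prefix before the transition point, so the simulator register is pristine exactly when it needs to be. Second, the phrase ``peel real queries off the front'' is a potentially misleading description of what you actually do: the invariant that matters, which your hybrids satisfy and your worry correctly identifies, is that every query \emph{preceding} the single divergent query is answered by the true $P_{|\psi\rangle}$ (which never touches $\Ss$). Your cautionary example for the reversed ordering is also apt: with a symmetric-but-not-$|\psi\rangle^{\otimes\ell}$ simulator register, $(P_{|\psi\rangle}\otimes\Id)-R_{\ell+1}=2\bigl(\Sym^{\ell+1}-|\psi\rangle\langle\psi|\otimes\Id\bigr)$ can have operator norm $\Theta(1)$ on the relevant state, so the per-step bound really would fail there.
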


In our case, we have to be a bit careful applying Lemma~\ref{lem:projprior}, since we do not have a fixed number of copies of $|\psi\rangle$, and the states in $\Ss_i,\Ts_i$ can be in superposition of having differing numbers of copies. Instead, we will need the following refinement. Initialize a register $\Sym^*\Hs$ to contain $\ell$ copies of $|\psi\rangle$. Now, respond to each query with the reflection $\Sym^{*+1}$: for states in $\Sym^*\Hs$ contained in $\Sym^{\ell'}\Hs$, $\Sym^{*+1}$ will reflect about the symmetric subspace of the joint system of $\Sym^{\ell'}\Hs$ and the query register. Between queries, $\As$ is now allowed to add or remove copies of $|\psi\rangle$ from $\Sym^*\Hs$. Let $T$ be an upper bound on the number of copies that can be removed. Let $\rho_0$ be the final joint state $\As$ and $\Sym^*\Hs$ when $\As$'s queries are answered by $P_{|\psi\rangle}$, and let $\rho_0$ be the final joint state when the queries are answered by $\Sym^{*+1}$.

\begin{corollary}\label{lem:projpriorcor} If the number of removed copies is at most $T$, $TD\left[\rho_0,\rho_1\right] \leq \frac{2Q}{\sqrt{\ell-T+1}}$
\end{corollary}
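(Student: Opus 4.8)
The idea is to reduce the variable-copy-number statement to the fixed-copy-number statement of Lemma~\ref{lem:projprior} by ``padding'' the simulator's register up to the worst case. The key observation is that if at most $T$ copies of $|\psi\rangle$ are ever removed from $\Sym^*\Hs$, then over the entire execution the simulator never needs to draw on fewer than $\ell-T$ of the original copies: we can think of $\ell-T$ of the initial copies as a ``reserve'' that is guaranteed to remain present (in the symmetric-subspace sense) throughout, while the remaining $T$ copies, together with any copies added back by $\As$, form a ``volatile'' part whose number fluctuates. I would make this precise by exhibiting a coupling: run the variable-copy process alongside an instance of the fixed-copy process of Lemma~\ref{lem:projprior} initialized with exactly $\ell-T$ copies, and argue that the reflection $\Sym^{*+1}$ applied in the variable process, when restricted to the guaranteed reserve, agrees with $\Sym^{(\ell-T)+1}$ applied in the fixed process, up to the action on the volatile copies which $\As$ controls anyway.

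Concretely, the first step is to purify/fix $\As$'s behavior on the number of copies: since the count in $\Sym^*\Hs$ is a register the simulator holds, I can imagine $\As$ declaring, in a classical control register, how many copies it removes or adds at each step. Conditioned on any fixed pattern of these declarations, the number of copies present at query $j$ is some deterministic value $\ell_j \geq \ell - T$. Now I would compare responding with $\Sym^{\ell_j+1}$ (the true variable-copy simulation) against the following alternative: always keep $\ell-T$ ``canonical'' copies untouched in a sub-register, treat the other $\ell_j - (\ell-T)$ copies as part of an extended ``query workspace'' that $\As$ manipulates, and respond using $\Sym^{(\ell-T)+1}$ on the canonical copies plus that workspace. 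The point is that reflecting about the symmetric subspace of $\ell_j+1$ registers is \emph{not} the same as reflecting about the symmetric subspace of $\ell-T+1$ of them — so a direct identity will not hold. This is where I expect the main obstacle to lie.

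To get around it, I would instead appeal to the robustness of the Ji–Liu–Song analysis rather than a clean reduction. The proof of Lemma~\ref{lem:projprior} bounds the per-query error of replacing $P_{|\psi\rangle}$ by $\Sym^{\ell+1}$ by something like $O(1/\sqrt{\ell+1})$, using the fact that with $\ell$ copies of a state the reflection about the symmetric subspace is $O(1/\sqrt{\ell+1})$-close (in the relevant operator sense on the span of the query register and the copies) to the true reflection. The same bound, applied at query $j$ where $\ell_j$ copies are present, gives per-query error $O(1/\sqrt{\ell_j+1}) \leq O(1/\sqrt{\ell-T+1})$, since $\ell_j \geq \ell - T$ whenever at most $T$ copies have been removed. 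Summing the $Q$ per-query errors via a hybrid argument (as in the original proof) then yields $TD[\rho_0,\rho_1] \leq \frac{2Q}{\sqrt{\ell-T+1}}$. So the plan is: (i) reduce to a fixed pattern of copy-count changes by conditioning on $\As$'s classical control; (ii) observe $\ell_j \geq \ell - T$ at every query; (iii) re-run the hybrid argument of Lemma~\ref{lem:projprior} query-by-query, using the worst-case bound $1/\sqrt{\ell-T+1}$ at each step; (iv) average back over the control register, which only convexly combines the trace-distance bounds. The hard part is step (iii) — verifying that the per-query error estimate in~\cite{C:JiLiuSon18} is genuinely local to the number of copies currently held and does not secretly use that the count is fixed across queries; I would check that their bound on $\|(P_{|\psi\rangle} - \Sym^{\ell+1})|\text{symmetric state}\rangle\|$ depends only on the instantaneous $\ell$, which it does, so the hybrid telescopes correctly.
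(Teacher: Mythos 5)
Your final plan, after you abandon the padding/coupling sketch, is essentially the paper's argument: a hybrid over the $Q$ queries, invoking the per-query error from Lemma~\ref{lem:projprior} together with the guarantee $\ell'\geq\ell-T$, then telescoping. The conditioning on $\As$'s classical control in your step~(i) is an unnecessary detour; the paper simply observes that the register is a superposition over $|\psi\rangle^{\otimes\ell'}$ for various $\ell'\geq\ell-T$ and applies Lemma~\ref{lem:projprior} coherently to each branch.

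The one point you flag as ``the hard part'' is genuinely the crux, and your tentative dismissal of it (``which it does, so the hybrid telescopes correctly'') glosses over the resolution. The per-query estimate $O(1/\sqrt{\ell'+1})$ is \emph{not} a bound on the operator norm of $P_{|\psi\rangle}-\Sym^{\ell'+1}$ over the full symmetric subspace (that operator has operator norm $\Theta(1)$); it holds only when the simulator's register is a superposition of exact product states $|\psi\rangle^{\otimes\ell'}$, i.e.\ unperturbed by earlier simulated queries. The fix is the \emph{ordering} of the hybrid: define $H_i$ so that the first $Q-i$ queries are answered with the true reflection $P_{|\psi\rangle}$ (which never touches $\Sym^*\Hs$) and only the last $i$ with $\Sym^{*+1}$. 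Then $H_i$ and $H_{i+1}$ diverge at query $Q-i$, and up to that point the simulator register is exactly a superposition of $|\psi\rangle^{\otimes\ell'}$ with $\ell'\geq\ell-T$, so the per-query bound applies cleanly, and monotonicity of trace distance under the common suffix closes each hybrid step. If you instead answered the earliest queries with $\Sym^{*+1}$, the register would already be perturbed at the diverging query and the per-query estimate would not directly apply. Your phrase ``as in the original proof'' could be read to mean this ordering, but since it is load-bearing you need to state it explicitly.
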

\begin{proof}This follows from a simple hybrid argument. Let $H_0$ be the case where $\As$'s queries are answered with $P_{|\psi\rangle}$, and $H_i$ be the case where the first $Q-i$ queries are answered with $P_{|\psi\rangle}$, and the remaining queries are answered with $\Sym^{*+1}$. It suffices to prove that the trace distance between $H_i$ and $H_{i+1}$ is at most $2/\sqrt{\ell-T+1}$, and the triangle inequality implies the lemma.
	
Toward that end, observe that $H_i,H_{i+1}$ are identical except for the $i$th query from the end. Up until this point, $\Sym^*\Hs$ has not been used to answer queries, though it may have had some copies of $|\psi\rangle$ added or removed. Therefore, the state of $\Sym^*\Hs$ is a superposition over $|\psi\rangle^{\otimes \ell'}$ for several different $\ell'$. Since $\As$ is only allowed to remove up to $T$ of the copies, we know that the support of this state has $\ell'\geq \ell-T$. It is therefore a straightforward application of Lemma~\ref{lem:projprior} that the trace distance between $H_i,H_{i+1}$ is at most $\frac{2}{\sqrt{\ell-T+1}}$, as desired.
\end{proof}

We now apply Lemma~\ref{lem:projprior} to for each $|\psi_i\rangle,|\phi_i\rangle$. We set $Q=2T$ and use that our simulation $\Om^{(2)}$ makes $2T$ queries to each projection oracle (2 for each of $\As$'s $T$ queries) and removes at most $T$ copies of each $|\psi_i\rangle,|\phi_i\rangle$. We therefore obtain a simulator $\Om^{(3)}_{\Psi,\out,\ell}$ which is given $\ell$ copies of each of the $|\psi_i\rangle,|\phi_i\rangle$, and attempts to simulate $\Om^{(2)}_{\Psi,\out,\ell}$. We immediately have:

\begin{lemma}\label{lem:swap2} Let $\As$ be a time $T$ algorithm. Then for any $\ell>T$, any $\Psi$, any $\out$, and output $z$,
	\[TD\left[\As^{\Om^{(2)}_{\Psi,\out,\ell}}(),\As^{\Om^{(3)}_{\Psi,\out,\ell}}()\right]\leq \frac{8tT}{\sqrt{\ell-T+1}}\]
	where the states on both sides include the register provided to $\Om^{(2)},\Om^{(3)}$ which initially contains the $\ell$ copies of each of the $|\psi_i\rangle,|\phi_i\rangle$.
\end{lemma}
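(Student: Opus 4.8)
The strategy is a straightforward hybrid argument that swaps the true reflections $P_{|\psi_i\rangle},P_{|\phi_i\rangle}$ used inside $\Om^{(2)}_{\Psi,\out,\ell}$ for the symmetric-subspace reflections analyzed in Corollary~\ref{lem:projpriorcor}, one state at a time. Recall from Section~\ref{sec:swap} that a single invocation of $\Om^{(2)}_{\Psi,\out,\ell}$ can be implemented given only the register $\Ss$ (holding copies of each $|\psi_i\rangle,|\phi_i\rangle$) together with at most two queries to each of the $2t-1$ reflections $\{P_{|\psi_i\rangle}\}_{i\in[t]}\cup\{P_{|\phi_i\rangle}\}_{i\in[t-1]}$ --- one query to compute the ``is the query register in state $|\psi_i\rangle$?'' bit and one to uncompute it. The simulator $\Om^{(3)}_{\Psi,\out,\ell}$ is, by definition, this same implementation in which every call to $P_{|\psi_i\rangle}$ (resp. $P_{|\phi_i\rangle}$) is answered instead by the reflection $\Sym^{*+1}$ about the symmetric subspace, using the corresponding block $\Ss_i$ (resp. $\Ts_i$) of $\Ss$ as the ``stored copies'' register.

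\paragraph{Hybrids.} Order the $2t-1$ states of $\Psi$ arbitrarily and define oracles $\Om^{[0]},\dots,\Om^{[2t-1]}$, where $\Om^{[j]}$ answers calls to the first $j$ reflections (in this order) with $\Sym^{*+1}$ and the remaining ones with the true reflection; so $\Om^{[0]}=\Om^{(2)}_{\Psi,\out,\ell}$ and $\Om^{[2t-1]}=\Om^{(3)}_{\Psi,\out,\ell}$. Write $H_j$ for the joint final state of $\As$ and $\Ss$ when $\As$ queries $\Om^{[j]}$. To bound $TD(H_j,H_{j+1})$, regard everything other than the $(j{+}1)$-st reflection oracle --- namely $\As$, the rest of the $\Om^{(2)}$-simulation machinery, and the other $2t-2$ (true or symmetric) oracles --- as one algorithm $\As'$ that makes queries only to the $(j{+}1)$-st reflection oracle. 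Over $\As$'s $T$ queries to the oracle, $\As'$ makes at most $Q=2T$ such queries; moreover each invocation of $\Om^{(2)}$ changes the copy count in the relevant block of $\Ss$ by at most one ${\sf Incr}$ or ${\sf Decr}$, so at most $T$ copies are ever removed from it. Corollary~\ref{lem:projpriorcor} (with $Q=2T$) then gives $TD(H_j,H_{j+1})\le \frac{2Q}{\sqrt{\ell-T+1}}=\frac{4T}{\sqrt{\ell-T+1}}$, and the triangle inequality over the $2t-1<2t$ hybrid steps yields $TD\big(\As^{\Om^{(2)}_{\Psi,\out,\ell}}(),\As^{\Om^{(3)}_{\Psi,\out,\ell}}()\big)\le \frac{8tT}{\sqrt{\ell-T+1}}$.

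\paragraph{Main obstacle.} The one point requiring care is that the hypothesis of Corollary~\ref{lem:projpriorcor} --- ``at most $T$ copies are removed from the symmetric register'' --- must hold uniformly in every hybrid $H_j$, even though the other reflections have by then been replaced by approximate ones. This is fine precisely because the ${\sf Incr}/{\sf Decr}$ bookkeeping on $\Ss$ is a fixed part of the $\Om^{(2)}$-circuit and is independent of how any reflection is realized, so the per-invocation ``$\le 1$'' bound, and hence the global ``$\le T$'' bound, is structural rather than behavioural. A secondary check is the modular-arithmetic bookkeeping already invoked in Lemma~\ref{lem:swap}: since $\ell\ge T$ and $N\ge T+\ell$, the counts in $\Ss$ never wrap around, so each symmetric block genuinely lies in $\Sym^{\ell'}\Hs$ with $\ell'\ge\ell-T$ throughout, which is exactly what Corollary~\ref{lem:projpriorcor} needs.
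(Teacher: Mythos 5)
Your proof is correct and follows essentially the same approach the paper takes: the paper states Lemma~\ref{lem:swap2} as an ``immediate'' consequence of applying Corollary~\ref{lem:projpriorcor} (with $Q=2T$ queries per projection oracle and $\leq T$ removals) once per state and summing the $2t-1\leq 2t$ resulting errors, which is exactly the hybrid-over-states argument you spell out. Your explicit treatment of the structural nature of the ${\sf Incr}/{\sf Decr}$ bound and the non-wraparound check is a faithful unpacking of what the paper leaves implicit.
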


Now measure the number of $\Hs_2^n$ registers in each of $\Ss_i$ and $\Ts_i$, obtaining values $\ell-c_i$ and $\ell-d_i$ for integers $c_i,d_i$. By Corollary~\ref{cor:countconstraints}, we have:
\begin{corollary}\label{cor:countconstraints2} With probability 1, $d_{i+1}=d_i-c_i$.
\end{corollary}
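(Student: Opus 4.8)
The plan is to derive Corollary~\ref{cor:countconstraints2} directly from Corollary~\ref{cor:countconstraints}, by arguing that passing from $\Om^{(2)}_{\Psi,\out,\ell}$ to $\Om^{(3)}_{\Psi,\out,\ell}$ leaves untouched the bookkeeping responsible for the constraint $d_{i+1}=d_i-c_i$. Recall that $\Om^{(3)}_{\Psi,\out,\ell}$ is the \emph{same} oracle-simulation circuit as $\Om^{(2)}_{\Psi,\out,\ell}$, the only change being that each internal reflection $P_{|\psi_i\rangle}$ or $P_{|\phi_i\rangle}$ --- used solely to test which branch of the oracle applies, and then uncomputed --- is replaced by the symmetric-subspace reflection $\Sym^{*+1}$ acting jointly on $\Ss_i$ (resp.\ $\Ts_i$) and the query register. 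The explicit register moves (the ${\sf Incr}/{\sf Decr}$ operations, realized as swaps of query registers into and out of the $\Ss_i,\Ts_i$) are inherited verbatim.

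The first step is to note that $\Sym^{*+1}$, in each of its uses, acts only on the \emph{contents} of the $\Hs_2^n$-registers held by $\Ss_i,\Ts_i$ and the query register, never on their number: in a branch where $\Ss_i$ holds $\ell'$ registers, $\Sym^{*+1}$ is a reflection on the fixed-dimensional space $\Sym^{\ell'}\Hs\otimes\Hs$, so it creates and destroys no registers. Hence in $\Om^{(3)}_{\Psi,\out,\ell}$ the only operations that change the number of registers in some $\Ss_i$ or $\Ts_i$ are precisely the same register moves as in $\Om^{(2)}_{\Psi,\out,\ell}$; the only difference is that the decision bits controlling them are now computed from the approximate reflections $\Sym^{*+1}$ rather than from the exact $P_{|\psi_i\rangle}$. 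Since the circuit template is unchanged, for every value of the first (control) register and every setting of the decision bits the net transformation of the register counts performed in that branch is a product of the moves that $\Om^{(2)}_{\Psi,\out,\ell}$ performs for that value of the control register (possibly the empty product). By the per-query-type case analysis of Lemma~\ref{lem:countconstraints}, transported to copy counts exactly as in the proof of Corollary~\ref{cor:countconstraints}, each such move preserves the relation $d_{j+1}=d_j-c_j$ for all $j$, and hence so does any product of them.

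It then follows by induction on the number of queries that the simulator's state during $\As^{\Om^{(3)}_{\Psi,\out,\ell}}()$ is supported entirely on register-count configurations satisfying $d_{j+1}=d_j-c_j$ for every $j$ --- the base case being the initial state, where each $\Ss_i,\Ts_i$ holds $\ell$ registers and so $c_i=d_i=0$. In particular this holds at the moment the counts $\ell-c_i,\ell-d_i$ are measured, and thus the measurement outcome satisfies $d_{i+1}=d_i-c_i$ with probability $1$. I expect the one delicate point to be verifying that the imperfect decision bits of $\Om^{(3)}_{\Psi,\out,\ell}$ cannot trigger any register move absent from $\Om^{(2)}_{\Psi,\out,\ell}$; this is essentially forced by the syntactic nature of the substitution (only the reflection subroutines change), but it is the place where one must have pinned down the circuits for $\Om^{(2)}_{\Psi,\out,\ell}$ and $\Om^{(3)}_{\Psi,\out,\ell}$ precisely enough.
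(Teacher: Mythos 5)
Your proposal is correct and takes essentially the same approach as the paper: the paper simply invokes Corollary~\ref{cor:countconstraints} with no further comment, implicitly relying on exactly the observation you make explicit, namely that the $\Om^{(2)}\to\Om^{(3)}$ substitution only replaces the reflections $P_{|\psi_i\rangle},P_{|\phi_i\rangle}$ by $\Sym^{*+1}$ while leaving the register-moving ${\sf Incr}/{\sf Decr}$ bookkeeping untouched, so the count invariant $d_{i+1}=d_i-c_i$ is preserved deterministically on every branch. Your flagged ``delicate point'' is indeed the only thing one needs to check, and it holds for the syntactic reason you give.
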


\begin{lemma}Except with probability at most $\frac{8tT}{\sqrt{\ell-T+1}}+\frac{2t\times\ell}{2^n-1+\ell}$, $c_i\geq 0$ and $d_i\geq 0$ for all $i$.
\end{lemma}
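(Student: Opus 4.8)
The plan is to show that the "bad events" $c_i < 0$ or $d_i < 0$ for some $i$ correspond, in the \emph{true} execution $\As^{\Om^{(2)}_{\Psi,\out,\ell}}()$, to branches where the simulator has been forced to hand out more than $\ell$ copies of some $|\psi_i\rangle$ or $|\phi_i\rangle$ — which by the no-cloning bound is unlikely — and then to transfer this bound across the small trace-distance gap to the execution $\As^{\Om^{(3)}_{\Psi,\out,\ell}}()$ in which the measurement is actually performed. First I would work in the $\Om^{(2)}$ world. Recall $c_i$ (resp. $d_i$) is defined so that the number of copies of $|\phi_i\rangle$ in $\Ts_i$ (resp. $|\psi_i\rangle$ in $\Ss_i$) is $\ell - c_i$ (resp. $\ell - d_i$); a value $c_i < 0$ or $d_i < 0$ thus means the corresponding register holds \emph{more} than $\ell$ copies. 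But $\Om^{(2)}$ only ever moves copies between the simulator's registers and the algorithm's response register — it never manufactures a new copy of $|\psi_i\rangle$ or $|\phi_i\rangle$ from scratch. So, on the joint system of the algorithm and the simulator (which is initialized with exactly $\ell$ copies of each state and with the algorithm holding none), the total number of copies of any one state is a conserved quantity, and measuring the $\Ss_i,\Ts_i$ register and finding strictly more than $\ell$ copies there is impossible in the $\Om^{(2)}$ world. Hence in the $\Om^{(2)}$ execution, $c_i \geq 0$ and $d_i \geq 0$ for all $i$ with probability $1$.

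Wait — that argument is too strong and ignores that the algorithm, by applying arbitrary unitaries (including further $\Om^{(2)}$ queries) to the response registers, could concentrate amplitude so as to \emph{appear} to have created copies when we only measure the simulator's side. The correct formulation is the one the roadmap in Section~\ref{sec:sep} advertises: the "total copy number" is well-defined only because everything lives in a symmetric subspace, and the no-cloning fact (stated in the Haar-random-states paragraph) says the probability of producing $|\psi\rangle^{\otimes(\ell+1)}$ from $|\psi\rangle^{\otimes\ell}$ is at most $\ell/(\mathsf{Dim}(\Hs)+\ell)$. So the clean route is: run $\As^{\Om^{(2)}_{\Psi,\out,\ell}}()$, and at the end consider, for each of the $2t-1$ states in $\Psi$, the event that the \emph{simulator's} register for that state, when measured, reports $\ell'>\ell$ copies. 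Producing such a state requires, somewhere along the $\le T \le \ell$ queries, passing from $\ell$ to $\ell+1$ genuine copies of that state (the swap-in/out bookkeeping of $\Om^{(2)}$ never does this unless the algorithm fed in a copy, and the algorithm's only source of copies is the simulator, so a net increase beyond $\ell$ forces a cloning step). Each such step succeeds with probability at most $\ell/(2^n-1+\ell)$ since $\mathsf{Dim}(\Hs)=2^n-1$, and there are at most $\ell$ opportunities per state and $2t-1 < 2t$ states, giving a union bound of at most $2t\ell/(2^n-1+\ell)$ for the event that any $c_i$ or $d_i$ is negative in the $\Om^{(2)}$ world. Finally, invoke Lemma~\ref{lem:swap2}: the measurement "number of $\Hs_2^n$ registers in each $\Ss_i,\Ts_i$" is a measurement performed on the joint state, so the distribution of the tuple $(c_i,d_i)_i$ under $\Om^{(3)}$ differs from that under $\Om^{(2)}$ by at most $TD[\As^{\Om^{(2)}},\As^{\Om^{(3)}}] \leq \frac{8tT}{\sqrt{\ell-T+1}}$. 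Adding the two contributions gives the claimed bound $\frac{8tT}{\sqrt{\ell-T+1}} + \frac{2t\ell}{2^n-1+\ell}$.

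The main obstacle is making precise the claim that $c_i<0$ or $d_i<0$ "requires a cloning step." One has to argue that the swap-based simulator of Section~\ref{sec:swap} genuinely never increases the copy count of a state except when the algorithm supplies a copy as a query input, and that the algorithm's registers start devoid of any such copies and can only obtain them from the simulator — so that a measured count exceeding $\ell$ on the simulator side, together with the symmetric-subspace structure that makes "copy count" a legitimate observable, witnesses the event "$|\psi\rangle^{\otimes(\ell+1)}$ extracted from $|\psi\rangle^{\otimes\ell}$." Formally this is cleanest by phrasing it in terms of the totally-mixed-state-over-$\Sym^\ell\Hs$ characterization of $\ell$ Haar-random copies (stated in the preliminaries): the reduction from a hypothetical high-probability bad event to a cloner that violates the $\ell/(\mathsf{Dim}(\Hs)+\ell)$ bound. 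I would carry out that reduction explicitly, then combine with Lemma~\ref{lem:swap2} via the triangle inequality for trace distance and the fact that measurement outcome distributions are $TD$-close when the states are.
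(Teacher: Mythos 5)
Your proposal lands on essentially the same argument as the paper: apply the Haar no-cloning bound to the $\Om^{(2)}$ simulation (where the simulator's registers genuinely hold perfect copies of the $|\psi_i\rangle,|\phi_i\rangle$), union-bound over the $2t-1<2t$ states to get $2t\ell/(2^n-1+\ell)$, and then carry the bound to the $\Om^{(3)}$ world using the trace-distance estimate of Lemma~\ref{lem:swap2}, which moves the measurement-outcome distribution by at most $8tT/\sqrt{\ell-T+1}$. Your first paragraph's ``conserved quantity, hence probability zero'' claim is indeed wrong, but you correctly retract it, and the second-paragraph argument is the right one.

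One small imprecision: the phrase ``there are at most $\ell$ opportunities per state'' does not actually enter the computation. You are not union-bounding over $\ell$ events each of probability $\ell/(2^n-1+\ell)$ — that would give an extra factor of $\ell$. Rather, the Werner no-cloning bound directly bounds the probability that the \emph{final} measured register contains $\geq\ell+1$ perfect copies (as the output of a channel that received $\ell$ copies) by $\ell/(2^n-1+\ell)$, applied once per state. Dropping the ``opportunities'' phrasing makes the argument cleaner and matches what the paper does.
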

\begin{proof}If any $c_i$ (resp. $d_i$) are less than zero, it means the number of ``copies'' of $|\psi_i\rangle$ of the simulator exceeds the original number provided originally. If these were actually perfect copies, then this would violate the unclonability of Haar random states. Indeed, it is known~\cite{Werner98} that for a Haar random state over dimension $D$, the probability of mapping $r$ copies to $r+1$ is bounded by $\ell/(D+\ell)$. In our case, $D=2^n-1$ (since the states are Haar random in $\Hs$, which is $\Hs_2^{\otimes n}$ excluding $|0\rangle$). Then we can union bound over all $2t-1\leq 2t$ states $|\psi_i\rangle,|\phi_i\rangle$, to get the probability  of any $c_i$ or $d_i$ being less than 0 being at most $2t\times\ell/(2^n-1+\ell)$.

Now, the copies provided to the simulator have potentially been perturbed as the simulator runs. However, Lemma~\ref{lem:swap2} implies that they can only have been perturbed by $\frac{8tT}{\sqrt{\ell-T+1}}$, meaning the states are still close to the respective $|\psi_i\rangle,|\phi_i\rangle$. Putting these together completes the proof of the lemma.\end{proof}

\subsection{Putting it All Together}\label{sec:final}

When $\As$ terminates, apply the operation $P_{|\psi_t\rangle}$ to the output. If it accepts, then add the resulting state to $\Ss_t$. By piecing together the above results, we therefore have an algorithm which, with probability at least
\[W:=\frac{1}{(48T)^2}-\frac{8tT}{\sqrt{\ell-T+1}}-\frac{2t\times\ell}{2^n-1+\ell}\]
results in $c_t\geq 1$, and for $i\in[1,t-1]$, $c_i,d_i\geq 0$ and $d_{i+1}=d_i-c_i$. If we assume $T\geq \max(n,t)$ and let $\ell=\Omega(T^6)=\Omega(t^2 T^4)$, and if we assume $T^6\ll 2^n$, we can lower bound $W$ as $\Omega(T^{-2})$.

But observe that in this case, we must have all $d_i\geq 1$. In this case, the system has collapsed to a space of lower dimension. Specifically, as each of the registers $\Fs_i,\Gs_i$ are in the symmetric spaces $\Sym^{\ell-c_i}\Hs,\Sym^{\ell-d_i}\Hs$, their dimension is $\binom{(2^n-1)+(\ell-c_i)-1}{\ell-c_i},\binom{(2^n-1)+(\ell-d_i)-1}{\ell-d_i}$, respectively. Thus, if we let $S$ be the algorithm's space, and using that the $d_i\geq 1$ and the $c_i\geq 0$, the total dimension of the joint system of the simulator's state and algorithm's state is at most \[D_{\sf Final}:=\binom{(2^n-1)+\ell-1}{\ell}^t\times\binom{(2^n-1)+(\ell-1)-1}{\ell-1}^{t-1}\times 2^{S}\]

On the other hand, these spaces all started in the symmetric subspace $\Sym^{\ell}\Hs$, which has dimension $\binom{(2^n-1)+\ell-1}{\ell}$. Specifically, since the $|\psi_i\rangle$ and $|\phi_i\rangle$ are Haar random, the initial mixed state is equivalent to the totally mixed state in this symmetric subspace. The algorithm's state starts out deterministically in the state $|0\rangle$. Thus, the initial state joint state of the algorithm and simulator is a totally mixed state in a space of dimension
\[D_{\sf Initial}:=\binom{(2^n-1)+\ell-1}{\ell}^{2t-1}\times 1\]

\begin{lemma}\label{lem:rank}Let $\rho$ be a totally mixed state in a subspace of dimension $D_{\sf Initial}$. Let $U$ be a unitary. Let $S_{\sf Final}$ be any subspace of dimension $D_{\sf Final}$, which we will also associate with the projection onto that space. Then $\Tr[S_{\sf Final}U\rho U^\dagger S_{\sf Final}]\leq D_{\sf Final}/D_{\sf Initial}$. In other words, the probability that a totally mixed state in dimension $D_{\sf Initial}$ can be mapped to a space of dimension $D_{\sf Final}$ using unitary computations is at most $D_{\sf Final}/D_{\sf Initial}$.
\end{lemma}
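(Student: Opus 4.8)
The statement is a clean linear-algebra fact, so the plan is to strip away the quantum dressing and reduce it to a bound on the trace of a product of two projectors. First I would write $\rho = \Pi/D_{\sf Initial}$, where $\Pi$ is the orthogonal projector onto the $D_{\sf Initial}$-dimensional subspace in which $\rho$ is totally mixed; this is exactly what ``totally mixed in a subspace'' means. Then $U\rho U^\dagger = (U\Pi U^\dagger)/D_{\sf Initial}$, and $U\Pi U^\dagger$ is again an orthogonal projector, of rank $D_{\sf Initial}$ (unitary conjugation preserves rank and idempotence). So the quantity to bound is $\frac{1}{D_{\sf Initial}}\,\Tr\!\big[\,S_{\sf Final}\,(U\Pi U^\dagger)\,S_{\sf Final}\,\big]$.

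Next I would use that $S_{\sf Final}$ is a projector, hence $S_{\sf Final}^2 = S_{\sf Final}$, together with cyclicity of the trace, to collapse $\Tr[S_{\sf Final}\,U\Pi U^\dagger\,S_{\sf Final}] = \Tr[S_{\sf Final}\,U\Pi U^\dagger]$. Now both factors are orthogonal projectors: call them $P := S_{\sf Final}$ (rank $D_{\sf Final}$) and $Q := U\Pi U^\dagger$ (rank $D_{\sf Initial}$). The key inequality is $\Tr[PQ] \le \min(\operatorname{rank} P, \operatorname{rank} Q)$. To see this, expand $P = \sum_{j=1}^{D_{\sf Final}} |v_j\rangle\langle v_j|$ in an orthonormal eigenbasis of its range; then $\Tr[PQ] = \sum_j \langle v_j | Q | v_j\rangle = \sum_j \|Q|v_j\rangle\|^2 \le \sum_j \||v_j\rangle\|^2 = D_{\sf Final}$, since a projector is a contraction. (Symmetrically one gets the bound $D_{\sf Initial}$, but we only need the first.)

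Combining, $\Tr[S_{\sf Final}\,U\rho U^\dagger\,S_{\sf Final}] \le D_{\sf Final}/D_{\sf Initial}$, which is the claim; the final sentence of the lemma is just the restatement that this trace is the probability that a measurement of whether the state lies in $S_{\sf Final}$ succeeds. There is no real obstacle here — the only points requiring a little care are the bookkeeping identification of ``totally mixed state in a subspace'' with a normalized projector, and keeping straight that $S_{\sf Final}$ denotes simultaneously the subspace and its projector, as the lemma statement explicitly allows. Everything else is one line of trace manipulation plus the contraction bound above.
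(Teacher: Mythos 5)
Your proof is correct, and it takes a mildly different route from the paper's. You reduce the claim, via $\rho = \Pi / D_{\sf Initial}$ and cyclicity of trace, to the standard bound $\Tr[PQ] \le \min(\operatorname{rank} P, \operatorname{rank} Q)$ for two orthogonal projectors, proved by expanding $P$ in an orthonormal basis of its range and using that $Q$ is a contraction. The paper instead works directly with the sandwiched operator $A := S_{\sf Final} U \rho U^\dagger S_{\sf Final}$: it observes that $A$ has rank at most $D_{\sf Final}$, and separately bounds the largest eigenvalue of $A$ by $1/D_{\sf Initial}$ via an explicit eigenvector computation $\lambda = \langle \tau' | \rho | \tau' \rangle \le 1/D_{\sf Initial}$ with $|\tau'\rangle = U^\dagger S_{\sf Final} |\tau\rangle$, then multiplies rank by max eigenvalue to bound the trace. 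Both arguments hinge on the same two facts (the totally mixed state is a normalized rank-$D_{\sf Initial}$ projector, and the final projection cuts rank down to $D_{\sf Final}$), so the difference is one of packaging rather than substance. Your version is slightly more streamlined because cyclicity lets you drop one copy of $S_{\sf Final}$ and avoid the eigenvector argument entirely, while the paper's version makes the ``rank times max eigenvalue'' intuition explicit. Either is a complete proof.
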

\begin{proof}Since $\rho$ is a totally mixed state in a subspace of dimension $D_{\sf Initial}$, it has $D_{\sf Initial}$ positive eigenvalues, all equal to $1/D_{\sf Inital}$. On the other hand, the state $S_{\sf Final}U\rho U^\dagger S_{\sf Final}$ has rank at most $D_{\sf Final}$, and therefore the number of non-negative eigenvalues is at most $D_{\sf Final}$. Moreover, if $\lambda$ is the maximal eigenvalue of $S_{\sf Final}U\rho U^\dagger S_{\sf Final}$ and $|\tau\rangle$ the associated maximal eigenvector, then
	\[\lambda=\langle \tau |S_{\sf Final}U\rho U^\dagger S_{\sf Final}|\tau\rangle=\langle\tau'|\rho|\tau'\rangle\leq (1/D_{\sf Inital})\langle\tau'|\tau'\rangle\leq 1/D_{\sf Inital} \]
	where $|\tau'\rangle=U^\dagger S_{\sf Final}|\tau\rangle$, which has norm at most 1 since it is the projection of a norm-1 vector. In other words, the maximal eigenvalue $\lambda$ is at most the maximal eigenvalue of $\rho$. Since the number of non-negative eigevalues is at most $D_{\sf Final}$, the trace, which equals the sum of all eigenvalues, is at most $D_{\sf Final}/D_{\sf Inital}$, as desired.
\end{proof}

Applying Lemma~\ref{lem:rank}, we therefore have:
\[\Omega(T^{-2})\leq D_{\sf Final}/D_{\sf Initial}= \left(\frac{\ell}{(2^n-1)+\ell}\right)^{t-1}\times 2^{S}\]
Rearranging and taking logarithms gives
\[S\geq \Omega(\;t n-t\log \ell-\log T\;)\]
Using our assumption that $T^6\ll 2^n$ (equivalently, $T\ll 2^{n/6}$) and setting $\ell\geq\Omega(T^6)$, we have that $S\geq \Omega(tn)$. This completes the proof of Lemma~\ref{lem:unitarycircuitlower}.

\section{Our Main Theorem}\label{sec:main}
We now prove our main theorem.

\begin{theorem}\label{thm:main} Fix a proper universal measurement set. Then there are constants $c,d$ such that there is no black box $(S'=cT,T'=2^{dS})$-space-time purifier.
\end{theorem}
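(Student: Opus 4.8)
The plan is to derive a contradiction with Lemma~\ref{lem:unitarycircuitlower} from the assumed existence of a black‑box $(S'=cT,\,T'=2^{dS})$‑space‑time purifier $\Cs$, after choosing the constants $c,d$ small enough in terms of $c',d'$ and the implied constants of Lemma~\ref{lem:generalcircuitupper}. Lemma~\ref{lem:generalcircuitupper} already supplies a normal‑form, low‑space, low‑time circuit $C$ computing $\out$ relative to the gate $\Om_{\Psi,\out}$, and applying $\Cs$ would yield a \emph{unitary} circuit $C'$ with space $\le cT$, time $\le 2^{dS}$, the same gate set $\Gs_0\cup\{\Om\}$, and the same functionality up to $1/3$. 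The one gap with Lemma~\ref{lem:unitarycircuitlower} is that a purifier may let $C'$ depend on the unitary $\Om$, whereas Lemma~\ref{lem:unitarycircuitlower} only controls circuits whose gates are independent of $(\Psi,\out)$ — and for the single‑instance $C$ of Section~\ref{sec:sep} the purifier could cheat by hard‑wiring the bit $\out$. So the first step is to modify the hard instance so that this loophole is unavailable.

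Concretely, I would pack $K=2^{k}$ independent copies of the Section~\ref{sec:sep} construction into a single gate: sample $(\Psi^{(r)},\out^{(r)})$ independently for each $r\in\{0,1\}^{k}$, and let $\Om$ act as $\Om_{\Psi^{(r)},\out^{(r)}}$ on basis states of the form $|r\rangle|\cdot\rangle$ and as the identity elsewhere. Let $C$ be the fixed circuit that, on input $r$ held as a control, runs the circuit of Lemma~\ref{lem:generalcircuitupper} against the $r$‑th slice; then $C$ is in normal form (its non‑unitary gates all lie in $\Gs_0$), has space $S=O(n+\log t)$ and time $T=O(nt)$ — both independent of $k$ — and $C(r)=\out^{(r)}$ with certainty. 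Running the purifier gives $C'=\Cs(C)$: unitary, gate set $\Gs_0\cup\{\Om\}$, space $\le S'=cT$, time $\le T'=2^{dS}$, with $\Pr[C'(r)=\out^{(r)}]\ge 2/3$ for every $r$. Since $T'$ and $S'$ depend only on $n,t$ and not on $k$, for $c,d$ small and $n$ large we have $T'\le 2^{d'n}$ (using $t\le 2^{n}$), $S'\le c'nt$, and $C'$ makes at most $T'\le 2^{d'n}$ queries to $\Om$.

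The crux — and the step I expect to be the main obstacle — is to locate a single index $r^{*}$ on which $C'$ genuinely behaves like a fixed, $(\Psi^{(r^*)},\out^{(r^*)})$‑independent oracle algorithm solving the $r^{*}$‑th instance. Two ingredients feed into this: (i) $C'$ is specified by a layout that is one of only $2^{D}$ possibilities, with $D=O(T'S')$ independent of $k$, so the layout carries at most $D$ bits of information about the $K$ mutually independent instances; and (ii) $C'$ is unitary and cannot reuse space, so on any fixed input it can ``solve'' (produce a noticeable component of the useful state $|\psi_t^{(r)}\rangle$ for) only $O(S'/(nt))=O(1)$ of the slices — by the stateful‑simulator argument behind Lemma~\ref{lem:unitarycircuitlower}, solving one slice already pins down $\Omega(nt)$ of $C'$'s space, and since the byproducts $|\phi_i^{(r)}\rangle$ cannot be discarded unitarily, solving several slices costs additively. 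Combining (i) and (ii): taking $2^{k}$ large compared to $D$, for most $r$ the circuit $C'$ on input $r$ can neither read $\out^{(r)}$ off its short layout nor recover it by solving a bounded number of \emph{other} slices, so it must solve slice $r$ itself; and for such an $r$, hard‑wiring all other slices of $\Om$ into the gate set (legitimate by Lemma~\ref{lem:unitarycircuitlower} since these are $(\Psi^{(r)},\out^{(r)})$‑independent) turns ``$C'$ on input $r$'' into — up to a small perturbation of the hard distribution on the $r$‑th slice — a unitary oracle algorithm making $\le T'\le 2^{d'n}$ queries to $\Om_{\Psi^{(r)},\out^{(r)}}$ in space $\le S'\le c'nt$ that outputs $\out^{(r)}$ with probability $\ge 2/3$, contradicting Lemma~\ref{lem:unitarycircuitlower}. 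The remaining work is quantitative bookkeeping: fixing $c,d$ small relative to $c',d'$; choosing $k$ in the window between the $O(nt)$ upper bound on $S'$ (hence on the arity of $\Om$) and the $\Omega(D)$ lower bound needed for the counting step; verifying that Lemma~\ref{lem:unitarycircuitlower} and its simulator proof are robust to an $\varepsilon$‑perturbation of the instance distribution and to folding the other $K-1$ slices into an instance‑independent gate set; and staying within the regime $T'^{6}\ll 2^{n}$ that Lemma~\ref{lem:unitarycircuitlower} requires.
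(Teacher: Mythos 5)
Your high-level plan matches the paper's: both attack only the interesting regime $S\le T\le 2^{O(S)}$, and both enlarge the hard instance from Section~\ref{sec:sep} into a family of many independent copies indexed by a salt, so that a circuit whose description is shorter than the total entropy of the family cannot have ``hardcoded'' most of the $\out$-bits. (The paper also treats the two boundary regimes $T<S$ and $T\ge 2^{\Omega(S)}$ explicitly, but those are easy.) Where you diverge — and where I think the proposal has a genuine gap — is in how you get from that intuition to an actual contradiction with Lemma~\ref{lem:unitarycircuitlower}.

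Your ingredient~(ii) is both unjustified and unnecessary. You assert that because of the simulator argument, ``solving several slices costs additively'' in space, and you need this to rule out the circuit recovering $\out^{(r)}$ by solving \emph{other} slices. But the $(\Psi^{(r')},\out^{(r')})$ for $r'\ne r$ are sampled independently of $(\Psi^{(r)},\out^{(r)})$, so querying other slices carries zero information about $\out^{(r)}$ — there is nothing to rule out there, and no multi-slice extension of Lemma~\ref{lem:unitarycircuitlower} is required. Conversely, the additivity claim you state is \emph{not} a corollary of Lemma~\ref{lem:unitarycircuitlower} as proved; it would require re-running the whole symmetric-subspace/dimension argument jointly over $K$ sets of states, which is a nontrivial new lemma you would have to prove.

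The real work is hidden in the sentences you label ``quantitative bookkeeping.'' The statement you actually need — that for a typical $r^*$, the circuit $C'$ together with its queries to the \emph{other} slices behaves like a fixed oracle algorithm against a \emph{fresh} $\Om_{\Psi^*,o^*}$ — is exactly where Lemma~\ref{lem:unitarycircuitlower}'s hypothesis of gate-independence must be recovered, and it does not follow merely from the circuit description having $D$ bits. The paper makes this precise in two steps in Lemma~\ref{lem:reduction}: first a masking step, replacing a random set $V$ of slices with fresh samples and bounding the change via Lemmas~\ref{lem:bbbv1} and~\ref{lem:bbbv2} (the circuit $\As$ was fixed before $V$ was drawn, so its query weight on $V$ is small in expectation); and then the mutual-information/Pinsker bound of Lemma~\ref{lem:stats} to show that, conditioned on the $\le r$-bit layout, the marginal of $(\Psi_{x^*},F(x^*))$ is within $\sqrt{r/2g}$ of a fresh draw. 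Only then can one substitute the challenge $\Om_{\Psi^*,o^*}$ on index $x^*$ and invoke Lemma~\ref{lem:unitarycircuitlower}. Your proposal does not mention the masking/BBBV step at all and only gestures at the information-theoretic step, so as written it would not compile into a proof; you should replace ingredient~(ii) with the Lemma~\ref{lem:stats}-style mutual-information bound, and add the random-$V$ masking so that the reduction can actually answer $C'$'s queries on inputs other than $x^*$.
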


In other words, either $S'\geq \Omega(T)$ or $T'\geq 2^{\Omega(S)}$. The rest of this section will be devoted to proving Theorem~\ref{thm:main}. 

\paragraph{Roadmap.} We break into three cases: the first is when  $T$ is small, specifically $T<S$. In this case, we can trivially bound $S'$. The second case is when $T$ is large, specifically $T\geq 2^{\Omega(S)}$. In this case, we can bound $T'$ easily (regardless of $S'$) using known query complexity lower-bounds. In both these ``easy'' cases, we actually do not even care that the compiler is removing measurements, as the bounds hold for \emph{any} circuit that approximates the starting circuit. In the final and most interesting case when $S\leq T\leq 2^{O(S)}$. Here is where we finally use the fact that the compiler is actually removing measurements, and invoke our oracle from Section~\ref{sec:sep}. This requires some care, since our model allows the compiled program to effectively have non-uniform advice about the oracle, which our analysis in Section~\ref{sec:sep} did not handle. We then use known techniques to get a lower-bound even in the case of advice, thus completing the theorem.

\subsection{The case $T<S$}

Let $\Gs_0$ be the proper universal gate set. Let $\Um$ be an arbitrary permutation matrix acting on $S$ qubits. Consider the gate set $\Gs_0\cup\{\Um\}$. Consider the space-$S$, time-$T$ for $T=1<S$ circuit $C$ which simply queries $\Um$ on its input. Now consider any other circuit $C'$ which approximates $C$, meaning it computes the permutation $\Um$. Since $C'$ must act on input and output of $S$ qubits, $C'$ must have space $S'\geq S$. Thus no compiler (in particular, no purifier) can map space $S$ to space $S'<S$. In particular, if $T<S$,  then any compiler/purifier must have $S'>T$.

\subsection{The case $T\geq 2^{\Omega(S)}$}

Let $\Gs_0$ be the proper universal gate set. Let $H:\{0,1\}^{n}\rightarrow\{0,1\}^{n}$ be a permutation, and consider the gate set $\Gs_0\cup\{\Hm\}$, where $\Hm$ is the unitary representation of $H$, which acts on $2n$ qubits. We first recall the foundational Grover's algorithm, re-expressed in our language.

\begin{lemma}[\cite{STOC:Grover96}]\label{lem:grover} There is a quantum circuit $C$ running in space $S=2n$ and time $O(2^{n/2})=2^{O(S)}$ over $\Gs_0\cup\{\Hm\}$, such that for any $x$, $\Pr[C(x)=H^{-1}(x)]\geq 1-O(2^{-n})$.\end{lemma}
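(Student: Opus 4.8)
The plan is to run Grover's search for the unique preimage. Since $H$ is a permutation, for every input $x$ there is exactly one $y^\star:=H^{-1}(x)$ with $H(y^\star)=x$, so this is unstructured search over $\{0,1\}^n$ with a single marked element. I would use the $2n$ qubits that $\Hm$ acts on --- an $n$-qubit search register, which doubles as $\Hs_{\sf out}$, and an $n$-qubit scratch register $\Hs_{\sf work}$ --- while $x$ sits in $\Hs_{\sf in}$ and is only ever used as a control (and hence, per the paper's space conventions, is not charged to $S$).

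\emph{Phase oracle.} First I would build the phase oracle $O_x:|y\rangle\mapsto(-1)^{[H(y)=x]}|y\rangle$ on the search register. Starting from $|y\rangle_{\Hs_{\sf out}}|0^n\rangle_{\Hs_{\sf work}}$, apply $\Hm$ to obtain $|y\rangle|H(y)\rangle$; then apply a phase $(-1)^{[H(y)=x]}$ via a multiply-controlled $Z$ on $\Hs_{\sf work}$ whose controls are set according to the bits of $x$ (read off the control-only register $\Hs_{\sf in}$), costing $O(n)$ gates from $\Gs_0$ and $O(1)$ scratch; then apply $\Hm$ once more --- since $\Hm^2=\Id$ (and $\Gs$ is closed under $\dagger$ in any case) this restores $\Hs_{\sf work}$ to $|0^n\rangle$. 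The net effect on $\Hs_{\sf out}$ is exactly $O_x$, at a cost of two queries to $\Hm$ and $O(n)$ other gates, with $\Hs_{\sf work}$ returned clean.

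\emph{Grover iteration.} Let $D:=\Had^{\otimes n}(2|0^n\rangle\langle 0^n|-\Id)\Had^{\otimes n}$ be the diffusion operator on $\Hs_{\sf out}$, implementable with $O(n)$ gates from $\Gs_0$, and let $|u\rangle:=\Had^{\otimes n}|0^n\rangle$. The circuit prepares $|u\rangle$ and applies $G:=D\,O_x$ for $k:=\lfloor\tfrac{\pi}{4}\sqrt{2^{n}}\rfloor$ iterations, then measures $\Hs_{\sf out}$. The standard two-dimensional analysis in $\mathrm{span}\{|y^\star\rangle,|u^\perp\rangle\}$ (with $|u^\perp\rangle$ the normalized uniform superposition over non-solutions) shows $G$ acts there as a rotation by $2\theta$ with $\sin\theta=2^{-n/2}$, so after $k$ iterations the amplitude on $|y^\star\rangle$ is $\sin((2k+1)\theta)$; the choice of $k$ puts $(2k+1)\theta$ within $O(\theta)$ of $\pi/2$, giving success probability $\sin^2((2k+1)\theta)\ge 1-O(\sin^2\theta)=1-O(2^{-n})$. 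Hence $\Pr[C(x)=H^{-1}(x)]\ge 1-O(2^{-n})$.

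\emph{Resources and main obstacle.} The space used is the $2n$ qubits that $\Hm$ acts on (plus $O(1)$ scratch, and $\Hs_{\sf in}$ holding $x$ uncharged as controls), i.e.\ $S=2n$; the time is $k=O(2^{n/2})$ iterations, each using $2$ queries and $O(n)$ gates, for a total of $O(n\,2^{n/2})=2^{O(S)}$, and in particular at most $2^{O(S)}$ queries to $\Hm$. There is no real difficulty here --- this is Grover's algorithm restated in the paper's circuit model --- and the only points needing care are: (i) turning the XOR-type oracle $\Hm|a,b\rangle=|a,b\oplus H(a)\rangle$ into the \emph{phase} oracle $O_x$ via a compute/uncompute pair, which is exactly what forces the extra $n$-qubit register and thus $S=2n$ rather than $S=n$; and (ii) fixing the number of iterations precisely enough to obtain the sharp $1-O(2^{-n})$ success probability rather than merely a constant.
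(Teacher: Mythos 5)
Your proof is correct and is essentially what the paper's citation to Grover's algorithm is implicitly pointing at: the paper does not reprove the lemma, it cites~\cite{STOC:Grover96}, and your write-up is the standard Grover analysis (phase oracle from two $\Hm$ calls via compute/uncompute, diffusion, $\lfloor\frac{\pi}{4}\sqrt{2^n}\rfloor$ iterations, two-dimensional rotation argument giving $1-O(2^{-n})$) re-expressed in the paper's circuit conventions. The only thing worth flagging is the gate count: you correctly obtain $O(n\,2^{n/2})$ total gates rather than the stated $O(2^{n/2})$, but since the lemma only ever uses the bound $2^{O(S)}$ (and the paper's $O(2^{n/2})$ is best read as a query count), this discrepancy is immaterial; likewise your $O(1)$ scratch for the multi-controlled phase can be avoided entirely by borrowing a dirty ancilla from the idle search register, so $S=2n$ is achievable exactly.
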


On the other hand, we know by the optimality of Grover search~\cite{BBBV97}, that \emph{any} circuit must make $\Omega(2^{n/2})$ queries to $\Hm$, and must therefore run in at least as much time. However, we note that our notion of compiler allows the compiled circuit to depend on the gate set, and in particular the function $H$. But a priori it may be possible for a circuit that depends on $H$ to beat the basic $2^{n/2}$ bound. Fortunately, this is the domain of pre-processing attacks, and is well-understood. We recall the following theorem of Nayebi et al.~\cite{QIC:NABT15}, again rephrased in our language:
\begin{theorem}[\cite{QIC:NABT15}] There is a polylogarithmic function $\polylog$ such that the following is true. If $H$ is a random permutation, then for any circuit $C'$ that may depend on $H$, if $C'$ has at most $Q$ $\Hm$ gates and total size $T'$ such that $Q^2T'\polylog(T')\leq 2^n$, then $\Pr[C'(x)=H^{-1}(x)]< 2/3$.
\end{theorem}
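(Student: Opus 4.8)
The plan is to prove this by the standard \emph{compression} (incompressibility) method for inversion lower bounds, adapted to quantum queries via the \cite{BBBV97} hybrid estimates of Lemmas~\ref{lem:bbbv1} and~\ref{lem:bbbv2}. Write $N=2^n$. The point of departure is that a circuit $C'$ that ``depends on $H$'' is nothing but a classical string: over the fixed gate set $\Gs_0\cup\{\Hm\}$, a circuit of size $T'$ (hence on $O(T')$ wires) is specified by $O(T'\log T')$ bits. I would show that if some $C'$ has at most $Q$ $\Hm$-gates, total size $T'$ with $Q^2T'\polylog(T')\le N$, and nonetheless outputs $H^{-1}(x)$ with probability at least $2/3$ for every input $x$, then $H$ admits a binary description of length strictly less than $\log(N!)$; since the uniform distribution on permutations has entropy $\log(N!)$ bits, this forces the fraction of permutations $H$ admitting such a $C'$ to be $2^{-\Omega(nN/Q^2)}$, which is the theorem (the precise success constant $2/3$ is immaterial — any constant bounded away from $1/2$ works). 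For the application to Theorem~\ref{thm:main}, Grover's circuit (Lemma~\ref{lem:grover}) inverts every $x$ with probability $1-o(1)$, so any $1/3$-close approximation inverts every $x$ with probability $2/3-o(1)$, which the above rules out for all but a vanishing fraction of $H$.

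For the proof itself I would first \emph{amplify}: running a constant number of copies of $C'$ on the same input and outputting the plurality value gives a circuit $C''$ with $O(Q)$ $\Hm$-gates and size $O(T')$ that inverts every $x$ with probability at least $1-\epsilon_0$, where $\epsilon_0$ is a small absolute constant to be fixed later. The central structural step is then to extract a large \emph{query-independent} set $I\subseteq\{0,1\}^n$. Running $C''$ on any fixed input makes $O(Q)$ queries to $\Hm$, so summing its BBBV query-magnitude profile over all $N$ domain points gives total query magnitude $O(Q)$, i.e.\ $O(Q/N)$ per point on average. Hence a uniformly random $k$-subset $I$ has the property, in expectation over $I$, that all but a $1/4$ fraction of its elements $x$ satisfy: the query magnitude of $C''$ on input $x$, restricted to the set $\{H^{-1}(x'):x'\in I\setminus\{x\}\}$ of preimages of the other elements of $I$, is at most $\tau$ — provided $k=\Theta(\tau N/Q)$. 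Choosing $\tau=\Theta(1/Q)$, so that the error $\sqrt{Q\tau}$ in Lemma~\ref{lem:bbbv2} is a small constant, and then pruning the bad elements, I obtain a fixed set $I$ of size $k=\Theta(N/Q^2)$ for which the magnitude bound holds for \emph{every} $x\in I$. This is exactly the step that produces $Q^2$ rather than the classical $Q$.

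Given $I$, the description of $H$ is: (a) the description of $C''$ ($O(T'\log T')$ bits); (b) the partial injection obtained from $H^{-1}\colon[N]\to[N]$ by blanking the $k$ positions in $I$ — encoded as a partial injection of $[N]$ with $k$ blanks, this at once records \emph{which} positions are blank and the values at the other $N-k$ positions, costing $\log\!\big(N!/k!\big)=\log(N!)-\log(k!)$ bits; and (c) a short ``corrections'' list. To decode, one reads off $I$ (the blanked positions) and $H^{-1}$ on $[N]\setminus I$, which reveals the set $P:=H^{-1}(I)$ of unknown domain points; then for each $x\in I$ one runs $C''$ on input $x$, answering an $\Hm$-query at domain point $z$ with the true $H(z)$ if $z\notin P$ and with the constant value $x$ if $z\in P$. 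This modified oracle agrees with $\Hm$ except on $\{H^{-1}(x'):x'\in I\setminus\{x\}\}$, a set of query magnitude at most $\tau$, so by Lemmas~\ref{lem:bbbv2} and~\ref{lem:bbbv1} the run still outputs $H^{-1}(x)$ with probability $1-\epsilon_0-O(\sqrt{Q\tau})\ge 1-O(\epsilon_0)$. Using shared randomness (free in such encoding arguments) the encoder knows which of the at most $O(\epsilon_0)k$ (in expectation) recoveries fail, and (c) names those and supplies their values in $O(\epsilon_0 k\log N)+O(k)$ bits. The expected total length is thus at most $O(T'\log T')+\log(N!)-\log(k!)+O(\epsilon_0 k\log N)+O(k)$, so the number of bits saved below $\log(N!)$ is at least $\log(k!)-O(T'\log T')-O(\epsilon_0 k\log N)-O(k)$. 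Since $\log(k!)\ge k\log k-O(k)$ and $k=\Theta(N/Q^2)$ gives $\log k=n-2\log Q-O(1)$, while $Q^2T'\polylog(T')\le N$ together with $T'\ge Q$ forces $Q\le N^{1/3}$ and also $O(T'\log T')=O(N/Q^2)=O(k)$, the saving is at least $k\big(n-2\log Q-O(\epsilon_0 n)-O(1)\big)=\Omega\!\big(nN/Q^2\big)>0$ once $\epsilon_0$ is a small enough absolute constant and $n$ is large. Making the encoding prefix-free (an additive $O(\log n)$ overhead the estimate absorbs), the number of permutations $H$ admitting such a $C'$ is at most $2^{\log(N!)-\Omega(nN/Q^2)}$, i.e.\ a $2^{-\Omega(nN/Q^2)}$ fraction of all $N!$ permutations, completing the plan.

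The main obstacle I anticipate lies in the interplay between the middle two paragraphs: the query-independent set must be as large as $\Theta(N/Q^2)$ for the saving $\log(k!)$ to dominate the advice length $O(T'\log T')$, yet the very same set dictates the perturbation budget $\tau$ through Lemma~\ref{lem:bbbv2} and hence the decoding error, and one must simultaneously keep the post-amplification, post-perturbation error $\epsilon_0+O(\sqrt{Q\tau})$ a small enough constant that the $\Theta(\epsilon_0 k\log N)$ correction bits cannot cancel the $\Theta(k\log k)$ saving. Arranging for these quantities ($k$, $\tau$, $\epsilon_0$) to coexist, verifying that the modified oracle used by the decoder is genuinely well-defined from only the revealed part of $H$, and checking that the $\polylog(T')$ slack in the hypothesis really covers the $O(\log T')$ per-gate description cost and the prefix-free overhead, are the delicate points; the remaining estimates are routine.
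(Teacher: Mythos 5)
The paper itself does not prove this theorem; it cites it as a known result of Nayebi, Aaronson, Belovs, and Trevisan, noting only that ``circuits depending on $H$'' and ``quantum advice'' are interchangeable by taking the circuit description (of length $O(T'\log T')$) as the advice. Your compression-via-BBBV-hybrids plan is the right high-level strategy — it is the standard route to such advice lower bounds, and it is in the same spirit as the cited work — but as written it has a quantitative gap that matters precisely in the parameter range the paper invokes.

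The gap is in step (b) of the encoding. A partial injection on $[N]$ with $k$ blanks is specified by the blank set $I$ \emph{and} an injection from $[N]\setminus I$ into $[N]$; the number of such objects is $\binom{N}{k}\cdot\frac{N!}{k!}$, so the cost is $\log\binom{N}{k}+\log(N!/k!)$, not $\log(N!/k!)$. The extra $\log\binom{N}{k}\approx 2k\log Q$ bits (for $k=\Theta(N/Q^2)$) eat into your $\log(k!)\approx k(n-2\log Q)$ saving and leave roughly $k(n-4\log Q)$, which is \emph{negative} once $Q\gtrsim N^{1/4}$. But the theorem must cover the range up to $Q\approx N^{1/3}$ (from $Q\le T'$ together with $Q^2T'\polylog(T')\le N$), and it is exactly this range that the paper needs for the $T'\ge 2^{\Omega(S)}$ conclusion. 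So the argument as written does not establish the stated bound. The standard repair is the one you applied to the corrections list but not to $I$: sample a superset $I_0\supseteq I$ of size $O(k)$ using \emph{shared randomness} (free in expected-length prefix-free encodings), and communicate only the index of $I$ inside $I_0$ at a cost of $O(k)$ bits. That restores the saving to $\log(k!)-O(k)\approx k(n-2\log Q)$, which is positive whenever $Q<\sqrt{N}$, and the rest of your arithmetic then goes through.

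A secondary point: your decoder ``runs $C''$ on $x$'' and you invoke shared randomness so the encoder knows which runs fail. Classical shared randomness cannot synchronize quantum measurement outcomes, so encoder and decoder would not agree on which recoveries failed. The clean fix is to make the decoder deterministic: for each $x\in I$ it computes (inefficiently, which is fine) the most likely output of $C''(x)$ under the simulated oracle. Since you have already arranged $\epsilon_0+O(\sqrt{Q\tau})<1/2$, that most likely output is exactly $H^{-1}(x)$, and the corrections list becomes unnecessary. Finally, one small thing worth stating explicitly: the simulated oracle that sends every $z\in P$ to the fixed value $x$ is not a permutation, but Lemma~\ref{lem:bbbv2} places no constraint on what the modified oracle does on the perturbed set, so this causes no trouble.
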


Note that \cite{QIC:NABT15} considers quantum advice, as opposed to circuits that depend on $H$. But these are equivalent by thinking of the circuit of size $T'$ itself as the advice, which takes $T'\log T'$ bits to write down.

Since $Q\leq T'$, we therefore have that for $C'$ to approximate $C$, it must be that $T'\geq \Omega(2^{n/3}/\poly(n))\geq 2^{\Omega(n)}=2^{\Omega(S)}$. Thus, for any $S$ and any $T\geq 2^{\Omega(S)}$, we can choose $n$ to be a sufficiently small constant multiple of $S$ so that Grover's algorithm runs in time $\leq T$ and space $\leq S$. Meanwhile, any compiler (whether or not it is a purifier) must have $T'\geq 2^{\Omega(S)}$.

\subsection{The case $S\leq T\leq 2^{O(S)}$}

Let $t=\Theta(T/S)$ and $n=\Theta(S)$ and $m=\lceil\log t+1\rceil$. We construct a unitary $\Um$ as follows. Choose a random function $F:\{0,1\}^n\rightarrow\{0,1\}$. For each $x\in\{0,1\}^n$, sample $\Psi_x$ as the description of $2t-1$ Haar random states. Then $\Um$ is the oracle acting on $\Hs_2^{\otimes(n+m+n+n)}$ defined as:
\[\Um=\sum_{x\in\{0,1\}^n}|x\rangle\langle x|\otimes \Om_{\Psi_x,F(x)}\]
	
Our gate set will be $\Gs_0\cup\{\Um\}$, where $\Gs_0$ is the fixed proper universal general gate set. 

Observe that we can use $\Um$ and the algorithm from Lemma~\ref{lem:generalcircuitupper} to evaluate the function $F(x)$ with perfect probability, in time $O(nt)=T$ and space $O(n+\log(t))=O(n)=S$, using measurements. It remains to prove that no \emph{unitary} algorithm running in space $o(T)$ and time $2^{o(S)}$ with gate set $\Gs_0\cup\{\Um\}$ can even approximately compute $F(x)$. To do so, we use the lower bound from Lemma~\ref{lem:unitarycircuitlower}. That result showed a lower bound for low space-time algorithms to compute the bits $F(x)$ for any $x$. However, the lower bound in Lemma~\ref{lem:unitarycircuitlower} only applies to circuits that are specified independently from the choice of unitary $\Um$. Here, we show that even if the choice of circuit depends on $\Um$, then it still remains hard to compute $F(x)$ on most $x$. The idea is to view the circuit description as non-uniform advice about $\Um$. This advice could have some of the $F(x)$ hardcoded. However, because the circuit needs to be small, the advice is therefore small, and by the incompressibility of random strings, most of $F(x)$ cannot be hardcoded. The challenge here is that there may be a clever algorithm which takes the advice string and also makes queries to $\Um$, and is able to use the advice to reduce the space-time requirements for computing $F(x)$, even if $F(x)$ is not directly hardcoded in the advice. Here, we show that this is not possible.

Note that the situation is similar to the ``salting defeats pre-processing'' setting of~\cite{FOCS:CGLQ20}. There, the authors consider a general cryptographic game relative to a random classical oracle, and show that if a game is hard relative to algorithms that are independent of the random oracle, then it is also hard relative to algorithms with advice, provided the game is ``salted,'' meaning a random salt of sufficient length is appended to the random oracle queries. In our setting, the original un-salted game is to compute $o$ given $\Om_{\Psi,o}$, and the salted game is to compute $F(x)$ for a random $x$, given $\Um$. Unfortunately, it is unclear how to apply their result to our setting, since our oracle $\Om$ is a structured quantum oracle and not a random classical oracle. We also care about the space complexity of the game, which is not considered in~\cite{FOCS:CGLQ20}.

Instead, we will borrow a technique from~\cite{FOCS:YamZha22}, which can also be seen as a ``salting defeats pre-processing'' result, except it is far more flexible. On the other hand, the quantitative bounds are weaker, though they are sufficient for our purposes.

\paragraph{A Technical Lemma.} Here we give a simple lemma which states that, given joint leakage on many iid random variables sampled from a distribution $\Ds$, the marginal distribution of most of the variables will still be distributed approximately as $\Ds$.

\begin{lemma}\label{lem:stats} Let $\Ds$ be a distribution and $X_1,\dots,X_g,Y$ be iid random variables sampled from $\Ds$. Let $L$ be a function with co-domain of size $2^r$. Then 
\[\Delta(\;(i,X_i,F(X_1,\dots,X_g))\;,\;(i,Y,L(X_1,\dots,X_g))\;)\leq \sqrt{r/2g}\]
Above, $i$ is uniform in $[g]$.
\end{lemma}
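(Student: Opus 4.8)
The plan is to pass through mutual information. Write $Z := L(X_1,\dots,X_g)$ for the leakage random variable; it takes at most $2^r$ values. Since $Y$ is an independent fresh sample from $\Ds$, it is independent of $Z$, so the distribution of $Y$ conditioned on any value of $Z$ is exactly $\Ds$; hence
\[\Delta\big(\,(i,X_i,Z)\,,\,(i,Y,Z)\,\big)\;=\;\E_{i}\,\E_{z\gets Z}\big[\,\Delta\big(\,(X_i\mid Z{=}z)\,,\,\Ds\,\big)\,\big],\]
where $i$ is uniform in $[g]$. I would bound the inner expression using Pinsker's inequality $\Delta(P,Q)\le\sqrt{\tfrac12 D(P\|Q)}$ (with relative entropy $D$ measured in bits, for which the constant $\tfrac12$ is still valid since $D$ in bits dominates $D$ in nats), followed by Jensen's inequality for the concave function $\sqrt{\cdot}$:
\[\E_{z\gets Z}\big[\Delta\big((X_i\mid Z{=}z),\Ds\big)\big]\;\le\;\sqrt{\tfrac12\,\E_{z\gets Z} D\big((X_i\mid Z{=}z)\,\big\|\,\Ds\big)}\;=\;\sqrt{\tfrac12\,I(X_i;Z)}.\]
Applying Jensen once more over the uniform choice of $i$ then gives
\[\Delta\big((i,X_i,Z),(i,Y,Z)\big)\;\le\;\E_i\sqrt{\tfrac12 I(X_i;Z)}\;\le\;\sqrt{\tfrac1{2g}\sum_{i=1}^g I(X_i;Z)}.\]

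It then remains to show $\sum_{i=1}^g I(X_i;Z)\le r$, and here I would use that the $X_i$ are mutually independent. By the chain rule for mutual information, $I(X_1,\dots,X_g;Z)=\sum_{i=1}^g I(X_i;Z\mid X_{<i})$, and for each $i$,
\[I(X_i;Z\mid X_{<i})\;=\;I(X_i;Z)\;+\;I(X_i;X_{<i}\mid Z)\;\ge\;I(X_i;Z),\]
where the equality follows by expanding $I(X_i;\,Z,X_{<i})$ in two ways and using independence $I(X_i;X_{<i})=0$. Therefore $\sum_i I(X_i;Z)\le I(X_1,\dots,X_g;Z)\le H(Z)\le r$, the last step because $Z$ has support of size at most $2^r$. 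Plugging this into the previous display yields $\Delta\le\sqrt{r/2g}$, as claimed.

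I do not expect a real obstacle here; the lemma is a clean information-theoretic statement and the whole argument is just Pinsker plus a chain-rule/superadditivity bound on mutual information against an iid source. The only points that need a bit of care are (i) tracking the Pinsker constant correctly when relative entropy is measured in bits rather than nats — using bits is what makes the final bound come out to exactly $\sqrt{r/2g}$ — and (ii) checking that the identities invoked (the chain rule, the decomposition of $I(X_i;Z\mid X_{<i})$, and $I(X_i;Z)=\E_{z\gets Z}\,D((X_i\mid Z{=}z)\|\Ds)$) remain legitimate when $\Ds$ is a general, possibly continuous, distribution while $Z$ is discrete; all of them do, and in the intended application $\Ds$ is in fact finitely supported, so no measure-theoretic subtlety arises.
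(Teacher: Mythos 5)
Your proof is correct and follows essentially the same route as the paper: bound $\sum_i I(X_i;Z)\le H(Z)\le r$ via superadditivity of mutual information over independent sources, convert each $I(X_i;Z)$ into a statistical distance via Pinsker, and average with Jensen. The only cosmetic differences are that you spell out the chain-rule derivation of superadditivity (which the paper asserts directly) and you condition on $Z=z$ rather than phrasing things as a KL divergence between the joint and the product, but these are the same computation.
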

\begin{proof}Let $I(X;Y)$ denote the mutual information between random variables $X$ and $Y$. Then
	\[r\geq I(\;L(X_1,\dots,X_g)\;;\;X_1,\dots,X_t\;)\geq \sum_{i=1}^g I(\;L(X_1,\dots,X_g)\;;\; X_i\;)\]
where the second inequality is due to the independence of the $X_i$. Let $\delta_i$ be the statistical distance between $(L(X_1,\dots,X_g),X_i)$ and $(L(X_1,\dots,X_g),Y)$. Let $\delta$ be the statistical distance between $(i,X_i,L(X_1,\dots,X_g))$ and $(i,Y,L(X_1,\dots,X_g))$; our goal is to bound $\delta$. $I(\;L(X_1,\dots,X_g)\; X_i\;)$ is just the KL divergence between $(L(X_1,\dots,X_g),X_i)$ and $(L(X_1,\dots,X_g),Y)$. By Pinsker's inequality, we therefore have that $\delta_i\leq \sqrt{I(\;L(X_1,\dots,X_t)\; X_i\;)/2}$. This implies
\[r\geq 2\sum_{i=1}^g \delta_i^2\]
On the other hand, $\delta=(\sum_i \delta_i)/g$. Jensen's inequality then gives that
\[\delta\leq \sqrt{\sum_i \delta_i^2/g}\leq \sqrt{r/2g}\qedhere\]	
\end{proof}

\paragraph{Our Reduction.} We are now ready to give our reduction.
\begin{lemma}\label{lem:reduction}Consider a function $\Cs(F,\{\Psi_x\}_x)$ which outputs a circuit $\As$ of size $r$, where $\As$ makes queries to $\Um$ and runs in space $S$ and time $T$. Let $T\leq 2^{n/4}$. Suppose that with probability $p$ over the choice of $F$ and $\{\Psi_x\}_x$, $\As$ is able to compute $F(x)$ with probability $2/3$ for all $x$. Then there is a unitary circuit $\Bs$ which gets no input and makes queries to $\Om_{\Psi^*,o^*}$ for a random choice of $\Psi^*,o^*$, and is able to output $o^*$ with probability at least $7/12$ for sufficiently large $n$. $\Bs$ makes at most $O(T)$ queries to $\Om_{\Psi^*,o^*}$ and for an appropriate gate set (which is independent of $\Om_{\Psi^*,o^*}$) has space $O(S)$.
\end{lemma}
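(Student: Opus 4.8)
The plan is to combine the single-instance hardness of Lemma~\ref{lem:unitarycircuitlower} with a ``salting defeats pre-processing'' argument in the spirit of \cite{FOCS:YamZha22}, treating the compiled circuit $\As$ as a short (at most $r$-bit) piece of non-uniform advice about the big oracle $\Um$. I will first describe a \emph{randomized} reduction $\Bs_{\rm rand}$ and show that it outputs $o^*$ with probability close to $\tfrac23$; a standard averaging argument then fixes its internal choices to produce the deterministic circuit $\Bs$ claimed in the statement.

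The reduction $\Bs_{\rm rand}$, given oracle access to $\Om_{\Psi^*,o^*}$, will proceed as follows. It samples a uniform ``salt'' $x^*\in\{0,1\}^n$ and a fresh list $\{(\hat\Psi_x,\hat F(x))\}_{x\in\{0,1\}^n}$ of instances, each drawn from the same distribution used to build $\Um$ (i.e.\ $2t-1$ Haar-random states together with a uniform bit); it forms $\hat F$ and the unitary $\hat\Um_{\neq x^*}:=\big(\bigoplus_{x\neq x^*}\Om_{\hat\Psi_x,\hat F(x)}\big)\oplus\Id$ (identity on the $x^*$-block); it computes the compiled circuit $\hat\As:=\Cs(\hat F,\{\hat\Psi_x\}_x)$; and it runs $\hat\As$ on the classical input $x^*$, simulating each use of the $\Um$-gate by first applying $\hat\Um_{\neq x^*}$ (a single gate of its gate set, whose internal cost is not counted) and then applying $\Om_{\Psi^*,o^*}$ controlled on the first register being equal to $x^*$ --- the controlled application costs one query to $\Om_{\Psi^*,o^*}$, using an $O(S)$-qubit scratch register and two controlled swaps. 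It outputs whatever $\hat\As$ outputs. Thus $\Bs_{\rm rand}$ is unitary, makes at most $T$ queries to $\Om_{\Psi^*,o^*}$, and uses space $O(S)$; once I have shown $\Pr[\Bs_{\rm rand}\text{ outputs }o^*]\ge 7/12$, fixing $x^*$, the list $\{(\hat\Psi_x,\hat F(x))\}_x$, and the coins of $\hat\As$ to their optimal value yields a deterministic circuit $\Bs$ over the fixed gate set $\Gs_0\cup\{\hat\Um_{\neq x^*}\}$, which is independent of $\Om_{\Psi^*,o^*}$.

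To analyze $\Bs_{\rm rand}$ I will compare it with an \emph{idealized} experiment that is identical except that the compiled circuit is recomputed as $\Cs$ of the list with its $x^*$-th coordinate overwritten by the true challenge $(\Psi^*,o^*)$, so that the advice is consistent with the embedded oracle. In the idealized experiment the overwritten list is a genuine fresh list of instances, $x^*$ is independent of it, and --- because $\Cs$ is a compiler, so $\Delta(C(x),\Cs(C)(x))\le 1/3$ for all $x$, and because in the intended application $C$ computes $F$ with probability $1$ by Lemma~\ref{lem:generalcircuitupper} --- the compiled circuit computes $F$ on every input with probability $\ge 2/3$; taking the input to be $x^*$ shows the idealized experiment outputs $o^*$ with probability at least $\tfrac23 p$ (here $p=1$ in our application; the lemma is only meaningful when $p$ is bounded above $\tfrac78$). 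It remains to bound the statistical cost of having used the \emph{wrong} instance at coordinate $x^*$ when producing the advice (and the simulated oracle at the other coordinates). I will invoke Lemma~\ref{lem:stats} with $\Ds$ the instance distribution, $g=2^n$ the number of coordinates, and $L(\cdot)=\Cs(\cdot)$ the leakage, whose co-domain has size $2^r$: since $x^*$ is uniform, the joint distribution of $(x^*,\text{coordinate }x^*,\text{advice})$ is within $\sqrt{r/2^{n+1}}$ of the corresponding distribution with coordinate $x^*$ replaced by an independent fresh instance. Feeding this into the bounded, $[0,1]$-valued success indicator --- after re-expressing that indicator as a function of only the salt, the $x^*$-th instance, and the advice, by regenerating the remaining $g-1$ instances from their distribution conditioned on the advice rather than exposing them (this is precisely what renders harmless the fact that $\hat\As$ may query \emph{all} of $\Um$, and is the flexible form of the \cite{FOCS:YamZha22} argument) --- gives $\big|\Pr[\Bs_{\rm rand}\text{ outputs }o^*]-\tfrac23 p\big|\le\sqrt{r/2^{n+1}}$. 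Since $r$ is at most polynomial in the compiled circuit's time $T$ (times $\log$ of its space) and $T\le 2^{n/4}$, we have $r\ll 2^n$, so $\sqrt{r/2^{n+1}}=o(1)$; hence for all sufficiently large $n$ the reduction outputs $o^*$ with probability at least $\tfrac23 p-o(1)\ge 7/12$, and derandomization completes the proof.

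The hard part will be the penultimate step: showing that re-randomizing one randomly chosen instance inside the structured oracle $\Um$ is undetectable \emph{even to a unitary algorithm holding the $r$-bit compiled-circuit advice and allowed to query all of $\Um$}. The difficulty is that the success probability depends a priori on all $g$ instances --- both through the advice $\Cs(\cdot)$ and through the oracle $\hat\As$ actually queries at the non-$x^*$ coordinates --- so a naive application of a leakage bound would save only a factor of $1$ rather than the $1/g$ that we need. The resolution, which is the crux of the argument, is to massage the experiment so that the win probability becomes a function of just $(x^*,\text{instance }x^*,\text{advice})$ by sampling the other instances from their posterior given the advice; only then does the short-advice statistical lemma (Lemma~\ref{lem:stats}) apply with the $1/g$ savings that make $\sqrt{r/2^{n+1}}$ negligible, and only then does the (otherwise expensive but challenge-independent) bookkeeping of the reduction fold cleanly into a fixed gate set as required.
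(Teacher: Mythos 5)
Your proposal follows the same high-level ``salting defeats preprocessing'' blueprint as the paper --- treat the compiled circuit as $r$-bit advice, plant the challenge at a random salt $x^*$, and argue via a short-advice information bound (Lemma~\ref{lem:stats}) that seeing the advice barely helps on a random salt --- but the crucial middle step is carried out differently, and as written it has a genuine gap.

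You apply Lemma~\ref{lem:stats} once, with $g=2^n$, and then observe correctly that the resulting bound on $(x^*,\text{instance }x^*,\text{advice})$ does not immediately control the success probability, because the success indicator also depends on the non-$x^*$ coordinates through the simulated oracle $\hat\Um_{\neq x^*}$. Your proposed fix is to ``regenerate the remaining $g-1$ instances from their distribution conditioned on the advice,'' so that the success becomes (in expectation) a function of only the triple. But that re-sampling changes the experiment: conditioned on the advice $L$, the $x^*$-th instance and the other $g-1$ instances are in general \emph{not} independent, so replacing the latter with a sample from the marginal posterior $\Pr[\cdot\mid L]$ does not recover the original joint distribution $\Pr[\cdot\mid L, X_{x^*}]$. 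A toy counterexample makes this concrete: with $g=2$, $L(X_1,X_2)=X_1\oplus X_2$ (one bit of leakage), $x^*=1$, and a success predicate that simply checks consistency $\mathbb{1}[L = X_1\oplus X_2]$, the true success probability is $1$ while your ``regenerated'' experiment gives $1/2$. So the claimed equality $\Pr[\Bs_{\rm rand}\text{ outputs }o^*]=\E[\tilde h(x^*,X_{x^*},L)]$ with $\tilde h$ defined by re-sampling from the posterior given $L$ alone fails, and the subsequent application of Lemma~\ref{lem:stats} compares the wrong quantities (you would need the \emph{same} fixed $[0,1]$-valued function applied on both sides, which neither your $\tilde h$ nor the posterior-given-$(L,X_{x^*})$ variant achieves for both experiments simultaneously).

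The paper resolves exactly this difficulty by introducing an intermediate random subset $V\subseteq\{0,1\}^n$ of size $g\ll 2^n$ and a two-step hybrid. In the first step it replaces the oracle's behavior on $V$ by fresh, independent instances, paying a BBBV-style cost $O\bigl(T\sqrt{g/2^n}\bigr)$ (since $V$ is random of relative size $g/2^n$, the expected query weight on it is small). After this replacement, the oracle at the non-$x^*$ coordinates is independent of the real $(\Psi_x,F(x))_{x\in V}$, so those $g$ coordinates enter the experiment \emph{only} through the advice, and now Lemma~\ref{lem:stats} applies cleanly with those $g+1$ coordinates as $X_1,\dots,X_{g+1}$, paying an additional $\sqrt{r/2g}$. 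Balancing $g\approx\sqrt{(2^n-1)r}/T$ gives a total error $O\bigl((T^2 r/2^n)^{1/4}\bigr)$, which is $o(1)$ under $T\le 2^{n/4}$. In short: your argument is missing the decoupling step (the random $V$-replacement and the BBBV cost), and the posterior-regeneration shortcut does not substitute for it. If you want to salvage the single-step route you would need to separately bound the conditional mutual information $I(X_{x^*};\,(X_j)_{j\ne x^*}\mid L)$ on average over $x^*$, which is not what Lemma~\ref{lem:stats} gives you; the paper's $V$-set trick sidesteps that entirely.

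One smaller point: your bound yields $\ge\tfrac23 p - o(1)$, which only reaches $7/12$ when $p$ is close to $1$; you flag this. The paper's proof has the same implicit restriction, so this is not a new issue, but the $V$-set structure is where the substantive difference lies.
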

\begin{proof}To construct $\Bs$, do the following. First choose a random subset $V\subseteq\{0,1\}^n$ of size $g$, for a parameter $g$ to be chosen later. Also choose a random $x^*\in\{0,1\}^n\setminus V$. Then choose $F,\{\Psi_x\}_x$, and let $\As=\Cs(F,\{\Psi_x\}_x)$. Now, for each $x\in V$, sample $o_x'\gets\{0,1\}$ and a fresh random $\Psi_x'$. We define a unitary $\Vm$ as
	
\[\Vm=\sum_{x\in\{0,1\}^n\setminus V}|x\rangle\langle x|\otimes \Om_{\Psi_x,F(x)}+\sum_{x\in V}|x\rangle\langle x|\otimes \Om_{\Psi_{x}',o_{x}'}\]
	
$\Bs$ will run $\As$ on input $x^*$, but answer $\As$'s queries with the unitary $\Um'$ defined as:
\[\Um'=\sum_{x\in\{0,1\}^n\setminus (V\cup\{x^*\})}|x\rangle\langle x|\otimes \Om_{\Psi_x,F(x)}+\sum_{x\in V}|x\rangle\langle x|\otimes \Om_{\Psi_{x}',o_{x}'}+|x^*\rangle\langle x^*|\otimes \Om_{\Psi^*,o^*}\]
$\Bs$ can simulate $\Um'$ with minimal time and space overhead by making a single query each to $\Vm$ and $\Om_{\Psi^*,o^*}$. Thus, by including $\Vm$ in the gate set for $\Bs$, the time and space complexity are linear in that of $\As$.

We lower bound $\Bs$'s success probability in a few steps.

\begin{claim}Fix $\Um,\Am,x^*$. Then $|\Pr[\As^\Um(x^*)=F(x^*)]-\Pr[\As^\Vm(x^*)=F(x^*)]|\leq 4T\sqrt{g/(2^n-1)}$, where the probability is taken over the randomness of $\Am$ and  the choice of $\Vm$.
\end{claim}
\begin{proof}$\Um$ and $\Vm$ differ only on inputs $x\in V$, which is chosen randomly. Consider running $\As^\Um(x^*)$. Since the view of $\As$ is independent of $V$ (except that $x^*\notin V$), the expected query weight of points in $V$ is at most $\epsilon=gT/(2^n-1)$, with equality obtained only if the query weight on $x^*$ is 0. Lemmas~\ref{lem:bbbv1} and~\ref{lem:bbbv2} then imply the difference in output probabilities is at most $4\sqrt{T\epsilon}$, proving the claim.
\end{proof}
\begin{claim}Let $r$ be the description size of $\As$. Then $|\Pr[\As^\Vm(x^*)=F(x^*)]-\Pr[\As^{\Um'}(x^*)=F(x^*)]|\leq \sqrt{r/2g}$, where the probability is taken over the choice of $\Um,$V$,\Vm,\Am,x^*$ and the randomness from running $\Am$.
\end{claim}
\begin{proof}We apply Lemma~\ref{lem:stats}, where $X_1,\dots,X_{g+1}$ are the variables $(\Psi_x,F(x))$ for $x\in V\cup\{x^*\}$, and $L$ is the function obtained from $\Cs$ by fixing $(\Psi_x,F(x))$ for $x\notin V\cup\{x^*\}$. Since $x^*$ is uniform in $V\cup\{x^*\}$ and all the $(\Psi_x,F(x))$ for $x\in V$ have been replaced with fresh random samples, Lemma~\ref{lem:stats} shows that $(\Psi_{x^*},F(x^*))$ and a fresh uniform $(\Psi^*,o^*)$ have statistical distance $\sqrt{r/2g}$, thus bounding the change in success probability.
\end{proof}

Now we set $g=\sqrt{(2^n-1)r/32T^2}$, we have that 
\[\Pr[\Bs^{\Om_{\Psi^*,o^*}}()=o^*]=\Pr[\As^{\Um'}(x^*)=F(x^*)]\geq \frac{2}{3}-\sqrt[4]{\frac{32T^2 r}{2^n-1}}\]
Since $r$ is the description size of a circuit of size $T$, we have that $r=T\times\polylog(T)$. Therefore, since $T\leq 2^{n/4}$, for sufficiently large $n$ we have that $\sqrt[4]{\frac{32T^2 r}{2^n-1}}\leq 1/12$. Thus $\Pr[\Bs^{\Om_{\Psi^*,o^*}}()=o^*]\geq 7/12$.\end{proof}

Combining Lemma~\ref{lem:reduction} with Lemma~\ref{lem:unitarycircuitlower} (which rules out such a $\Bs$) finishes the case $S\leq T\leq 2^{O(S)}$. This finishes the proof of Theorem~\ref{thm:main}.\qed

\bibliographystyle{alpha}
\bibliography{abbrev3,crypto,bib}

\end{document}